\documentclass[preprint,authoryear,11pt]{elsarticle}

\usepackage{amsmath,amssymb,amsthm}
\usepackage{mathtools}
\usepackage{bm}
\usepackage{booktabs}
\usepackage{array}
\usepackage{graphicx}

\journal{Journal of Mathematical Economics}

\newtheorem{theorem}{Theorem}
\newtheorem{proposition}{Proposition}

\newtheorem{corollary}{Corollary}
\theoremstyle{definition}
\newtheorem{definition}{Definition}
\newtheorem{assumption}{Assumption}
\theoremstyle{remark}
\newtheorem{remark}{Remark}

\newcommand{\strong}{\text{strong}}
\newcommand{\proxyid}{\text{proxy}}
\newcommand{\unid}{\text{unidentified}}

\newcommand{\Real}{\mathbb{R}}
\newcommand{\Hil}{\mathcal{H}}
\newcommand{\Exp}{\mathbb{E}}
\newcommand{\Var}{\operatorname{Var}}

\newcommand{\spec}{\operatorname{spec}}
\newcommand{\rad}{\operatorname{r}}
\newcommand{\dist}{\operatorname{dist}}
\newcommand{\inner}[2]{\langle #1, #2\rangle}

\begin{document}

\begin{frontmatter}

\title{Rational Bubbles at the Spectral Edge:\\
An Operator-Spectral Theory of Fragility, Identification, and Finite-Sample Certification}

\author{Avishek Bhandari}
\address{School of Humanities, Social Sciences and Management, Indian Institute of Technology Bhubaneswar, India}
\ead{avishekb@iitbbs.ac.in}

\begin{abstract}
When a market increasingly moves as one, is it approaching the boundary at which a bubble can exist, and can that approach be measured, certified, and bounded before the crossing? This paper joins two literatures long held apart: the classical view that a rational bubble is a nonstationary price outgrowing its dividend, and a systemic-risk reading of a tipping point off the dominant common factor of a return cross-section, an object usually imposed by a model rather than recovered from prices. Lifting the classical characterisation to a dependence operator recovered from the contemporaneous co-movement of returns under a calibrated discount, we show that one threshold, where the discounted strength of the dominant factor reaches one, plays three roles at once: the limit beyond which present value ceases to converge, the level at which a bubble can first exist, and the point at which present value diverges. We call it the fragility edge. A certification discipline separates what the data identify firmly, the edge with a finite-sample band, from what is identified only as a proxy with an attached band, a bubble's existence, and what is not identified, the asset that would carry it. The construction extends to near-normal carriers with a controlled non-normality bound, places the edge within large-random-matrix limit laws, and recasts certification as partial identification with computable outer bounds. On eighteen global equity indices over 2004 to 2024, recovered concentration rises in every documented crisis, the market collapsing from about six to about four factors; with the discount calibrated so that calm sits at the critical level, discounted concentration crosses the edge in crisis, and a right-tailed recursive test finds marginal evidence of a single explosive episode on the price level, date-stamped to the pre-crisis run-up. The contrast replicates on three further panels, and a pre-registered out-of-sample exercise shows the concentration signature classifies crisis states and improves on a volatility benchmark. Every reading is coincident with the crisis, not a forecast, and locates the threshold, not the mechanism behind it.
\end{abstract}

\begin{keyword}
rational bubbles \sep dependence operator \sep spectral radius \sep transversality \sep systemic fragility \sep
partial identification \sep certification
\end{keyword}

\end{frontmatter}

\section{Introduction}

A rational asset-price bubble on a real, dividend-paying asset is a nonstationary phenomenon.
\citet{hiranotoda2024} make the point sharply: with a strictly positive price $P_t$ and dividend $D_t$, a
bubble exists if and only if $\sum_t D_t/P_t<\infty$, equivalently if and only if the price--dividend ratio is
nonstationary, so that the price grows faster than linearly relative to the dividend. The characterisation,
which dates in operator form to \citet{montrucchio2004} and in present-value form to
\citet{santoswoodford1997}, locates the bubble in a single scalar object: the limit of the discounted price,
the transversality term, which vanishes for a fundamental asset and is strictly positive for a bubble. The
necessity results that follow, that under unbalanced growth every equilibrium is bubbly when the autarky
interest rate falls below dividend growth which falls below aggregate growth, are scalar comparisons of growth
rates \citep{tirole1985,weil1987,hiranotoda2025}.

A separate literature, on systemic risk in financial networks, has converged on a different scalar that governs
the same kind of transition. The largest eigenvalue, or spectral radius, of a network interaction matrix sets a
tipping point: below it the system absorbs shocks, above it the system propagates and amplifies them
\citep{aots2015}. The leading eigen-pair of a stability matrix has been proposed directly as an early-warning
signal for banking contagion \citep{markose2021}, and entropy of the cross-sectional distribution of systemic
risk measures has been used as a leading indicator \citep{billio2016}. These are reduced-form stability
matrices, not objects derived from preferences and dividends, and the spectral radius is read off a matrix
posited rather than recovered from price behaviour.

This paper connects the two scalars by lifting the bubble characterisation to an operator. Let $B$ be the
recovered contemporaneous dependence carrier of a cross-section of asset returns, a bounded self-adjoint
operator on a separable Hilbert space whose spectrum carries the cross-sectional comovement of the market. The
present-value, or cascade, resolvent $(I-\beta B)^{-1}$ converges in operator norm if and only if
$\beta\,\rad(B)<1$. The fundamental value of a dividend direction is the resolvent image of that direction. A
bubble is a nonzero element of the kernel of $I-\beta B$, an eigen-element at the spectral edge $1/\beta$; it
exists precisely when $1/\beta$ is an eigenvalue of $B$, and the dividend direction has finite fundamental value
precisely when its spectral measure places no mass at the edge. The single scalar
$\beta\,\rad(B)=1$ is then the radius of convergence of the pricing resolvent, the location of the bubble
eigenspace, and the divergence of the present-value operator norm, all on one operator $B$. We are careful not
to overclaim: we do not assert that $B$ equals a separately defined systemic-risk stability Jacobian, only that
the systemic-risk literature reads a tipping point off a spectral radius and that ours is the same kind of
threshold, here on a recovered, dividend-relevant operator.

The contribution is disciplined by a certification gate that runs through the whole paper. Each spectral reading
is labelled \strong, \proxyid, or \unid\ according to what the recovered spectrum identifies. The scalar
threshold $\beta\,\rad(B)=1$ and the spectral radius are \strong, because the spectral radius is a functional of
the spectrum alone and is invariant under the orthogonal rotations of the unresolved eigenframe, recoverable
above the random-matrix detectability threshold of \citet{bbp2005}. The existence of a bubble is \proxyid,
because it requires the projection of the dividend direction onto the boundary eigenspace, a leading-eigenvector
object identified only up to a squared overlap near the threshold. The identity of the asset that carries the
bubble is \unid, because the transpose of an operator preserves its spectrum and individual coordinates are not
recovered from comovement. We make the strong label quantitative, deriving a finite-sample perturbation bound on
the recovered spectral radius that turns the gate into an explicit confidence band on the fragility edge, and we
carry the same identification discipline further: the construction extends beyond the symmetric carrier to normal
and near-normal operators, the bubble direction receives its own finite-sample band, the fluctuation of the edge
is placed within the established limit laws of large random matrices, and the three labels are recast as a formal
partial-identification statement. A final structural point follows from the fact that $B$ is recovered from
stationary
returns: the threshold $\beta\,\rad(B)\to1$ is the boundary of stationarity, so the estimator that recovers $B$
can flag the approach to the edge in the interior but cannot certify the crossing; the crossing is read on the
price level by an explosive-root detector \citep{psy2015}, and the two estimators address the two sides of one
boundary.

A structural microfoundation answers the natural objection that the recovered comovement
carrier is not a pricing kernel. Under a cash-flow network model with a common log-discount, the present-value
operator is exactly the cascade resolvent in the cash-flow propagation operator, and the recovered contemporaneous
carrier is, in covariance form, the operator square root of that resolvent up to the shock scale; the recovered
fragility edge and the structural network edge are then the same event, certified \strong\ for the radius and
\proxyid\ for the direction (Section~\ref{sec:microfound}). A companion result turns the boundary concession into a
positive operating characteristic: the edge distance $d_t=1-\rho_t$ is a \strong, banded quantity, interior
approach is certified with controlled size, and the operator and the explosive-root detector give a complete
two-sided reading of one boundary (Section~\ref{sec:approach}).

Four further extensions widen the construction without loosening the gate. First, because asset returns are heavy-tailed,
the sub-Gaussian finite-sample band is replaced by a robust-carrier band: a Pearson effective-rank band under an
$L^4$--$L^2$ norm equivalence \citep{mendelsonzhivotovskiy2020}, and a distribution-free band built from the
Kendall carrier under an elliptical model \citep{lindskogmcneilschmock2003,hanliu2017}. Second, the discount is
made endogenous, with the edge located as a fixed point of $g(r)=r\,\beta(r)$ under an inelastic-discount
condition, so the calibrated threshold becomes a structural object. Third, the edge is given a structural dynamic
content: the intensity of crisis onset inherits the resolvent norm $\|Q\|=1/(1-\rho)$ through a proportional
hazard whose elasticity is identified, coincidentally, from within-risk-set variation \citep{cox1972,andersengill1982}.
Fourth, edge proximity is mapped to a welfare cost, the present-value variance of the carrying mode scaling as
$\|Q\|^2$, with only the eigenvalue-ratio core certified \strong. Each extension is labelled by the same gate, and
its limits are stated.

We illustrate the theory on a panel of eighteen global equity indices over 2004--2024. The spectral radius of
the recovered symmetric dependence carrier rises sharply and coincidentally in every documented crisis, the
cross-section collapsing from about six to about four effective factors, and the admissible discount ceiling
tightening towards the resolvent pole. A companion pre-registered out-of-sample exercise by the author
establishes that the same symmetric concentration signature is a strong coincident classifier of crisis states.
The illustration is coincident, never a forecast, and we keep it inside that scope throughout.

The construction is not an impossibility theorem against dynamic stochastic general equilibrium or its
heterogeneous-agent variants. Production-network general equilibrium already carries a spectral object, the
vector of Domar weights, which is a dominant left-eigenvector, and overlapping-generations bubble models state
the growth-fragility comparison directly. What is new here is that the operator is recovered from data, every
spectral reading is gate-certified, and the boundary case is detected rather than imposed away by a
transversality convention. Section~\ref{sec:dsge} states the difference precisely.

The remainder proceeds as follows. Section~\ref{sec:setup} sets up the observable space and the recovered
operator. Section~\ref{sec:pricing} proves the operator pricing result. Section~\ref{sec:bubble} proves the
operator bubble characterisation. Section~\ref{sec:edge} states the fragility edge, the certification gate, a finite-sample error bound on the
recovered spectral radius, and a robust-carrier version of that bound valid under heavy tails.
Section~\ref{sec:microfound} gives the structural microfoundation, the cash-flow network under which the recovered
carrier is the propagation operator.
Section~\ref{sec:hardening} hardens the construction: a generalisation beyond the
symmetric carrier, a finite-sample band on the bubble direction, the limit law of the edge fluctuation, and the
partial-identification reading of the certification gate.
Section~\ref{sec:endo} makes the discount endogenous and locates the edge as a fixed point;
Section~\ref{sec:welfare} maps edge proximity to a welfare cost.
Section~\ref{sec:boundary} states the identification boundary between the stationary operator and the price
level, and Section~\ref{sec:approach} reads approach detection as a certified distance with a two-sided
complementarity. Section~\ref{sec:hazard} gives the structural edge-hazard. Section~\ref{sec:empirics} gives the
empirical illustration, including the heavy-tail robustness exhibit and the cross-sectional hazard read. Section~\ref{sec:dsge} contrasts the
construction with a calibrated-operator model. Section~\ref{sec:limits} states the scope and limitations, and
Section~\ref{sec:conc} concludes.

\section{The observable space and the recovered dependence operator}\label{sec:setup}

Let $\Hil$ be a separable real Hilbert space with inner product $\inner{\cdot}{\cdot}$ and norm
$\|\cdot\|$. We read $\Hil$ as the space of mean-zero, square-integrable cross-sectional fluctuations of a
stationary economy; in the finite illustration of Section~\ref{sec:empirics}, $\Hil=\Real^N$ with the
Euclidean inner product and $N$ the number of assets.

\begin{definition}[recovered dependence operator]\label{def:B}
A \emph{recovered dependence operator} $B:\Hil\to\Hil$ is a bounded self-adjoint operator obtained from the
contemporaneous comovement of a stationary, integrated-of-order-zero, that is $I(0)$, vector of asset returns.
In the finite case $B$ is the symmetric part of the recovered lag-zero dependence operator; the leading example
is the population correlation operator $\Phi$, a positive semi-definite operator with $\operatorname{tr}\Phi=N$.
\end{definition}

Because $B$ is self-adjoint, its spectrum $\spec(B)$ is real and compact, and the spectral radius
$\rad(B)=\sup\{|s|:s\in\spec(B)\}$ equals the operator norm $\|B\|$. By the spectral theorem
\citep{reedsimon1980} there is a projection-valued measure $E_B$ on the Borel sets of $\spec(B)$ with
$B=\int_{\spec(B)} s\,dE_B(s)$, and for any bounded Borel function $f$ the operator $f(B)=\int f(s)\,dE_B(s)$ is
defined, with $\|f(B)\|=\sup_{s\in\spec(B)}|f(s)|$. For a positive semi-definite carrier such as a correlation
operator, $\spec(B)\subset[0,\rad(B)]$ and $\rad(B)$ is the largest eigenvalue $\lambda_1(B)$.

We take as given a one-period discount $\beta\in(0,1)$. Consistent with the discount-identification literature,
$\beta$ is treated throughout as a calibrated constant and not as an object identified by the present exercise;
the separate question of identifying a discount from behaviour is taken up in companion work and is not claimed
here \citep{calcottpetkov2024,bastianellovergopoulos2026}. The pairing of the recovered operator with a
calibrated discount is all we need for the pricing geometry.

\begin{assumption}[self-adjoint carrier]\label{ass:sa}
$B$ is bounded and self-adjoint on $\Hil$ with spectral radius $\rad(B)<\infty$.
\end{assumption}

Assumption~\ref{ass:sa} is the working hypothesis for the strong results below. It is exactly satisfied by a
symmetric dependence carrier such as a correlation operator. Departures from self-adjointness, that is a genuine
directed or non-normal part of the dependence operator, are discussed as a limitation in
Section~\ref{sec:limits}; on the panel we study the directed energy is small, so $B$ is well approximated by its
symmetric part, but the strong labels are reserved for the self-adjoint case and the residual is carried as a
bound.

The objects to be priced are dividend directions, and we make the notion concrete rather than leave it formal.

\begin{definition}[dividend direction]\label{def:D}
A \emph{dividend direction} $D\in\Hil$ is the cross-sectional profile of the cash-flow claim whose price is
sought. In the finite case its coordinates are the comparably scaled cash-flow loadings of the assets, for
instance aggregate dividends, sector cash flows, earnings, or payout yields. The canonical aggregate instance,
used in the illustration of Section~\ref{sec:empirics} and matching the equal-weighted index that defines the
crisis label, is the \emph{aggregate-claim direction} $\widehat{D}=\mathbf{1}/\sqrt{N}$, the unit-norm
equal-weighted claim on the cross-section. We are explicit that $\widehat{D}$ is an equal-weighted normalisation
chosen to match the index label, not a series of measured dividends, earnings or payout yields; no cash-flow data
enter the illustration, so the empirical dividend-direction reading is a \proxyid, and the cash-flow language is
the interpretation the operator carries rather than a structural input. By Theorem~\ref{thm:bubble} the edge mass
$\mu_D(\{1/\beta\})$ that governs both the finiteness of the fundamental value and the excitation of the bubble
mode is, at the boundary $1/\beta=\rad(B)$ with a simple leading eigenvalue, the squared overlap of the dividend
direction with the boundary eigenvector $e_1$,
\begin{equation}\label{eq:edgemass}
\mu_{\widehat{D}}(\{\rad(B)\})=\inner{e_1}{\widehat{D}}^2,
\end{equation}
the share of the cash-flow claim carried by the mode at which a bubble must live; for a degenerate top eigenspace
the overlap is replaced by $\|\Pi_{\mathrm{edge}}\widehat{D}\|^2$. This is the quantity we measure in
Section~\ref{sec:empirics}; it turns the otherwise abstract dividend direction into an estimable overlap.
\end{definition}

\section{The pricing functional as a bounded operator}\label{sec:pricing}

Consider a one-period operator Euler equation for a price stream $P\in\Hil$ supporting a dividend direction
$D\in\Hil$,
\begin{equation}\label{eq:euler}
P=\beta\,B\,(P+D),
\end{equation}
the operator analogue of the scalar no-arbitrage recursion $q_tP_t=q_{t+1}(P_{t+1}+D_{t+1})$ under a stationary
one-period pricing operator $\beta B$. The present-value, or cascade, operator is the Neumann series
$Q=\sum_{k\ge0}\beta^kB^k$.

\begin{proposition}[operator pricing functional]\label{prop:pricing}
Under Assumption~\ref{ass:sa}, the Neumann series $Q=\sum_{k\ge0}\beta^kB^k$ converges in operator norm if and
only if $\beta\,\rad(B)<1$, in which case $Q=(I-\beta B)^{-1}$ with
\[
\|Q\|=\frac{1}{\dist\!\big(1,\beta\,\spec(B)\big)},
\]
which for $B$ positive semi-definite equals $1/(1-\beta\,\rad(B))$.
Whenever $\beta\,\rad(B)<1$, equation~\eqref{eq:euler} has the unique bounded solution
$P=\beta B\,Q\,D=\sum_{k\ge1}\beta^kB^kD$.
\end{proposition}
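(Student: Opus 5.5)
The argument rests on the spectral theorem for the self-adjoint carrier $B$, recalled in Section~\ref{sec:setup}, together with the identity $\|f(B)\|=\sup_{s\in\spec(B)}|f(s)|$ for bounded Borel $f$. There are four steps: convergence of the Neumann series, the inverse and its norm, the specialisation to the positive semi-definite case, and existence and uniqueness for \eqref{eq:euler}.

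\emph{Convergence.} First, $\|\beta^kB^k\|=\beta^k\|B^k\|$. Since $B^k$ is self-adjoint with spectrum $\{s^k:s\in\spec(B)\}$, we get $\|B^k\|=\rad(B)^k$ exactly. The series therefore has terms of norm $(\beta\rad(B))^k$.

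If $\beta\rad(B)<1$, the series converges absolutely in operator norm by comparison with a geometric series. Conversely, if the series converges in norm, its terms must tend to zero. That gives $(\beta\rad(B))^k\to0$, which forces $\beta\rad(B)<1$. Convergence of a series does not in general need absolute convergence, so the necessity direction has to go through the terms, not through absolute convergence; the equality $\|B^k\|=\rad(B)^k$ is exactly what makes this work. This is the only point needing care.

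\emph{The inverse and its norm.} When $\beta\rad(B)<1$, the telescoping identity $(I-\beta B)\sum_{k=0}^{n}\beta^kB^k=I-\beta^{n+1}B^{n+1}$, and its mirror image, pass to the limit and give $Q=(I-\beta B)^{-1}$. Then $Q=f(B)$ with $f(s)=1/(1-\beta s)$. The function $f$ is bounded and continuous on $\spec(B)$, because $|\beta s|\le\beta\rad(B)<1$ there. The functional-calculus norm identity gives $\|Q\|=\sup_{s\in\spec(B)}1/|1-\beta s|$. Since $\spec(B)$ is compact and the map $s\mapsto|1-\beta s|$ is continuous, this supremum equals $1/\dist(1,\beta\spec(B))$.

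\emph{The positive semi-definite case.} If $B$ is positive semi-definite, then $\spec(B)\subset[0,\rad(B)]$ and $\rad(B)\in\spec(B)$. The point of $\beta\spec(B)$ nearest to $1$ is $\beta\rad(B)$, so the distance is $1-\beta\rad(B)$.

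\emph{Existence and uniqueness for \eqref{eq:euler}.} Rewrite the equation as $(I-\beta B)P=\beta BD$. Invertibility gives the unique solution $P=Q\beta BD$. Since $Q$ is a power series in $B$, it commutes with $B$, so this equals $\beta BQD=\sum_{k\ge1}\beta^kB^kD$. Uniqueness among bounded elements of $\Hil$ is immediate from the injectivity of $I-\beta B$.
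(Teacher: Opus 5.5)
Your proposal is correct and follows the paper's proof step for step: spectral mapping gives $\|B^k\|=\rad(B)^k$, comparison with a geometric series gives convergence, the functional calculus gives $\|Q\|=\sup_{s\in\spec(B)}|1-\beta s|^{-1}$, and inverting $I-\beta B$ solves \eqref{eq:euler}. Your treatment of the necessity direction, through the decay of the terms $\|\beta^kB^k\|=(\beta\,\rad(B))^k$, is slightly more careful than the paper's, which states the equivalence only for the scalar series of norms; that series on its own yields sufficiency.
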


\begin{proof}
For a self-adjoint operator $\|B^k\|=\|B\|^k=\rad(B)^k$, since $\|B\|=\rad(B)$ and $B^k$ is self-adjoint with
$\|B^k\|=\rad(B^k)=\rad(B)^k$ by the spectral mapping theorem. Hence
$\sum_{k\ge0}\beta^k\|B^k\|=\sum_{k\ge0}(\beta\,\rad(B))^k$, a geometric series in the Banach algebra of bounded
operators, which converges if and only if $\beta\,\rad(B)<1$. When it converges the limit satisfies
$(I-\beta B)\sum_{k\ge0}\beta^kB^k=I$, so $Q=(I-\beta B)^{-1}$. By the spectral theorem
$Q=\int(1-\beta s)^{-1}dE_B(s)$, so $\|Q\|=\sup_{s\in\spec(B)}|1-\beta s|^{-1}=\dist(1,\beta\,\spec(B))^{-1}$;
for $B\succeq0$ the supremum is attained at $s=\rad(B)$, giving $1/(1-\beta\,\rad(B))$. Rearranging
\eqref{eq:euler} gives $(I-\beta B)P=\beta BD$, and since $I-\beta B$ is boundedly invertible when
$\beta\,\rad(B)<1$, the unique solution is $P=(I-\beta B)^{-1}\beta BD=\beta B\,Q\,D=\sum_{k\ge1}\beta^kB^kD$.
\end{proof}

Proposition~\ref{prop:pricing} is the operator counterpart of the geometric present-value sum. The norm of the
pricing operator diverges as $\beta\,\rad(B)\uparrow1$: the present value of a unit dividend direction grows
without bound as the discounted spectral radius approaches one. The scalar $\beta\,\rad(B)$ is the natural
state variable of the pricing geometry.

\section{Operator bubble characterisation}\label{sec:bubble}

We now characterise bubbles as the homogeneous solutions of the operator Euler equation, and tie their
existence to the spectral measure of the dividend direction at the edge $1/\beta$.

\begin{theorem}[operator bubble characterisation]\label{thm:bubble}
Let Assumption~\ref{ass:sa} hold and let $D\in\Hil$ be a dividend direction.
\begin{enumerate}
\item[(i)] The bubbles are exactly the elements of $\ker(I-\beta B)$, which is nontrivial if and only if
$1/\beta\in\spec_p(B)$ by part (iii). When $1/\beta\notin\spec_p(B)$ the kernel is trivial and \eqref{eq:euler}
has the unique solution, the \emph{fundamental value}
$V=\beta B(I-\beta B)^{-1}D=\int_{\spec(B)}\frac{\beta s}{1-\beta s}\,dE_B(s)\,D$, well defined whenever the
integral of part (ii) is finite, and the bubble is absent; this is the generic case. When
$1/\beta\in\spec_p(B)$ a particular solution $V$ exists if and only if the dividend places no mass on the edge
eigenspace, $\Pi_{1/\beta}D=0$, and the general solution is then $P=V+P^{b}$ with the \emph{bubble component}
$P^{b}$ any element of the nontrivial $\ker(I-\beta B)$; if $\Pi_{1/\beta}D\ne0$ no finite fundamental value
exists and the dividend excites the bubble mode directly. In both cases, when $\beta\,\rad(B)<1$ the fundamental
value admits the convergent Neumann representation $V=\sum_{k\ge1}\beta^kB^kD$, while for $\beta\,\rad(B)>1$ the
dividend may place mass on supercritical modes $s$ with $\beta s>1$, on which the series $\sum_k(\beta s)^k$
diverges, so the series does not represent the resolvent image $V$ even where the latter is finite.
\item[(ii)] The fundamental value $V$ is well defined and of finite norm if and only if
$\int_{\spec(B)}\big(\frac{\beta s}{1-\beta s}\big)^2 d\mu_D(s)<\infty$, where
$\mu_D(\cdot)=\inner{E_B(\cdot)D}{D}$ is the scalar spectral measure of the dividend direction. A necessary
condition is that $\mu_D$ place no atom at the edge $s=1/\beta$; in the finite-dimensional case, where
$\spec(B)$ is a finite set, the condition holds if and only if $\mu_D(\{1/\beta\})=0$, that is $1/\beta$ is
not an eigenvalue charged by $D$.
\item[(iii)] A nonzero bubble exists if and only if $1/\beta$ is an eigenvalue of $B$, which requires
$1/\beta\le\rad(B)$, that is $\beta\,\rad(B)\ge1$; the corresponding $P^{b}$ is then an eigenvector of $B$ at
$1/\beta$. The economically relevant boundary case, in which the fundamental value reaches the edge of
summability and the bubble sits at the top of the spectrum, is $\beta\,\rad(B)=1$ with $1/\beta=\rad(B)$ an
eigenvalue.
\item[(iv)] The transversality term $\beta^kB^kP$ vanishes for every $P$ precisely when $\beta\,\rad(B)<1$, so in
the interior every bounded solution is fundamental and the bubble question is vacuous; at $\beta\,\rad(B)=1$ with
an edge atom it does not vanish on the edge eigenspace, where it equals the bubble, so the transversality
condition $\lim_{k\to\infty}\beta^kB^kP=0$ is what selects $P=V$ over $P=V+P^{b}$.
\end{enumerate}
\end{theorem}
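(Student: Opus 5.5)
The whole argument will run through the spectral theorem for the self-adjoint $B$ of Assumption~\ref{ass:sa}. Every claim reduces to a statement about the scalar function $s\mapsto 1-\beta s$ integrated against the projection-valued measure $E_B$, or against the scalar measure $\mu_D(\cdot)=\inner{E_B(\cdot)D}{D}$.

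\textbf{Part (i).} I define a bubble as a solution of the homogeneous equation $P=\beta BP$, the difference of two solutions of \eqref{eq:euler}. Bubbles are then exactly $\ker(I-\beta B)$. Since $\beta>0$, this kernel equals $\ker(B-\beta^{-1}I)$, which is nontrivial iff $1/\beta\in\spec_p(B)$; this gives (iii) at once.

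Next, write the equation as $(I-\beta B)P=\beta BD$ and decompose $\Hil=\operatorname{ran}\Pi_{1/\beta}\oplus\operatorname{ran}(I-\Pi_{1/\beta})$, where $\Pi_{1/\beta}=E_B(\{1/\beta\})$. On the edge eigenspace $B$ acts as $1/\beta$, so that component of the equation reads $0=\Pi_{1/\beta}D$. Hence a solution requires $\Pi_{1/\beta}D=0$.

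On the complement, $1-\beta s\neq0$ $E_B$-almost everywhere, and I define
\[
V=\int\frac{\beta s}{1-\beta s}\,dE_B(s)\,D
\]
as a possibly unbounded functional-calculus image. By the functional calculus for unbounded Borel functions, $D$ lies in the domain of this operator iff the integral in (ii) is finite. That gives (ii), and $\|V\|^2$ equals that integral. The atom statement follows because $\beta s/(1-\beta s)$ is infinite at $s=1/\beta$. In finite dimensions $\mu_D$ is a finite sum of atoms at the eigenvalues, so finiteness of the integral is equivalent to having no atom at $1/\beta$.

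The general solution is $V$ plus the kernel. The Neumann representation for $\beta\rad(B)<1$ is Proposition~\ref{prop:pricing}. For $\beta\rad(B)>1$ I note that on $E_B((1/\beta,\infty))$ one has $(\beta s)^k\to\infty$. So whenever $\mu_D$ charges that set, $\|\sum_{k\le n}\beta^kB^kD\|^2=\int|\sum_{k\le n}(\beta s)^k|^2d\mu_D$ diverges, while $V$ can still be finite.

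\textbf{Part (iii).} The spectral inclusion $1/\beta\in\spec(B)\subset[-\rad(B),\rad(B)]$ gives $1/\beta\le\rad(B)$.

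\textbf{Part (iv).} I would use $\|\beta^kB^k\|=(\beta\rad(B))^k$, which was shown in the proof of Proposition~\ref{prop:pricing}. This gives uniform vanishing when $\beta\rad(B)<1$.

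For the converse, if $\beta\rad(B)\ge1$, pick $s_0\in\spec(B)$ with $|\beta s_0|\ge1$. I will take $P$ to be a nonzero vector in $\operatorname{ran}E_B(\{s:|\beta s|\ge1\})$; this range is nonzero because the set contains a spectral point $s_0=\pm\rad(B)$ at the extremes of the compact spectrum. Then $\|\beta^kB^kP\|^2=\int(\beta s)^{2k}d\mu_P\ge\|P\|^2$, so the term does not vanish. This is the step I expect to need most care.

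On the edge eigenspace $\beta^kB^kP^b=P^b$ for all $k$, so the limit is the bubble itself. Meanwhile $\beta^kB^kV\to0$ whenever $\mu_V$ gives no mass to $\{|\beta s|\ge1\}$, which is the boundary case with the atom removed, by dominated convergence. Hence imposing the transversality condition selects $P=V$ over $P=V+P^b$.

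The main technical subtlety is the unbounded functional calculus in (i)–(ii), and the caveat that at $s=-1/\beta$ the term oscillates rather than vanishes. I would handle the latter by restricting to the positive semi-definite case, or by stating the result for $\spec(B)\cap\{s\le-1/\beta\}=\emptyset$.
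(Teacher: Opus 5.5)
Your route is the paper's for (ii), (iii) and the forward half of (iv): everything is read through the spectral theorem, via $\|V\|^2=\int(\beta s/(1-\beta s))^2\,d\mu_D$, $\ker(I-\beta B)=\ker(B-\beta^{-1}I)$, $\spec(B)\subset[-\rad(B),\rad(B)]$ and $\|\beta^kB^k\|=(\beta\,\rad(B))^k$.

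\textbf{The converse in (iv) fails at $\beta\,\rad(B)=1$.} A spectral point in the \emph{closed} set $S=\{s:|\beta s|\ge1\}$ does not force $E_B(S)\neq0$. For that you need an open set meeting $\spec(B)$, or an eigenvalue in $S$. For $\beta\,\rad(B)>1$ your argument is fine, because the open set $\{|s|>1/\beta\}$ contains whichever of $\pm\rad(B)$ lies in $\spec(B)$. At equality, however, $S\cap\spec(B)\subset\{\pm1/\beta\}$, and if neither point is an eigenvalue then $E_B(S)=0$. The vector-by-vector claim is in fact false there. Take $B$ to be multiplication by $s$ on $L^2[0,1/\beta]$. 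Then $\beta\,\rad(B)=1$, yet $\|\beta^kB^kP\|^2=\int(\beta s)^{2k}|P(s)|^2\,ds\to0$ for every $P$ by dominated convergence. The paper's proof reads the converse in operator norm, where $\|\beta^kB^k\|=(\beta\,\rad(B))^k\not\to0$ for $\beta\,\rad(B)\ge1$ settles it in one line. If you want the pointwise version, you must assume that $1/\beta$ or $-1/\beta$ is an eigenvalue at the boundary. That is automatic in finite dimensions, and it is exactly the edge-atom case the statement goes on to discuss.

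\textbf{In (i) you take a different route from the paper, to your advantage.} The paper gets solvability from the closed range of $I-\beta B$ ``at the isolated eigenvalue''. Isolation of $1/\beta$ in $\spec(B)$ is not a hypothesis, and it is automatic only in finite dimensions. Your splitting along $\Pi_{1/\beta}$, combined with the unbounded functional calculus, needs no isolation. It yields the condition that is correct in general: a solution exists if and only if $\Pi_{1/\beta}D=0$ and $\int(\beta s/(1-\beta s))^2\,d\mu_D<\infty$. For an eigenvalue $1/\beta$ embedded in absolutely continuous spectrum charged by $D$, the first condition can hold while the second fails.

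To finish (i) cleanly, add two short steps.
\begin{itemize}
\item To recover the statement's ``iff $\Pi_{1/\beta}D=0$'', note that when $1/\beta$ is isolated the integrand is bounded on $\spec(B)\setminus\{1/\beta\}$, so the integral condition is automatic.
\item Make necessity explicit. Apply the bounded operator $\int_{\{|1-\beta s|>\varepsilon\}}(1-\beta s)^{-1}\,dE_B(s)$ to both sides of $(I-\beta B)P=\beta BD$. This gives $\int_{\{|1-\beta s|>\varepsilon\}}(\beta s/(1-\beta s))^2\,d\mu_D\le\|P\|^2$. Letting $\varepsilon\downarrow0$ shows that any solution forces the integral to be finite, which is what justifies writing the general solution as $V+\ker(I-\beta B)$.
\end{itemize}

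Your $-1/\beta$ caveat is correct, and the paper does not address it. If $-1/\beta$ is an eigenvalue charged by $D$ at the boundary, then $\beta^kB^kV$ oscillates and transversality rejects $V$ as well. Restricting to $B\succeq0$, which is the case the paper tacitly uses, is the right fix.
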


\begin{proof}
(i) The affine equation \eqref{eq:euler} reads $(I-\beta B)P=\beta BD$. Its solution set, when nonempty, is a
coset of $\ker(I-\beta B)$, hence a singleton when the kernel is trivial and an affine space otherwise. By part
(iii) the kernel is trivial precisely when $1/\beta\notin\spec_p(B)$; there $I-\beta B$ is injective and a
solution is $V=\beta B(I-\beta B)^{-1}D=\int\frac{\beta s}{1-\beta s}dE_B(s)D$ by the spectral theorem, finite
whenever the integral of part (ii) converges. When $1/\beta\in\spec_p(B)$ the self-adjoint $I-\beta B$ has closed
range $\ker(I-\beta B)^{\perp}$ at the isolated eigenvalue, so $\beta BD\in\operatorname{ran}(I-\beta B)$, and a
particular solution exists, if and only if $\Pi_{1/\beta}(\beta BD)=\Pi_{1/\beta}D=0$, using that $\beta B$ acts
as the identity on the edge eigenspace. The Neumann identity $\beta B(I-\beta B)^{-1}=\sum_{k\ge1}\beta^kB^k$
holds in operator norm only when $\beta\,\rad(B)<1$; for $\beta\,\rad(B)>1$ the series diverges on the
supercritical modes while the resolvent image remains the particular solution.

(ii) By the spectral theorem $\|V\|^2=\int\big(\frac{\beta s}{1-\beta s}\big)^2 d\mu_D(s)$ with
$\mu_D(\cdot)=\inner{E_B(\cdot)D}{D}$ a finite positive measure of total mass $\|D\|^2$. The integrand is
bounded on any compact subset of $\spec(B)$ bounded away from $1/\beta$, and blows up like
$(1-\beta s)^{-2}$ near $s=1/\beta$; the norm is finite if and only if that integral converges, for which it is
necessary that $\mu_D$ assign no atom to the edge $\{1/\beta\}$. In the finite-dimensional case $\mu_D$ is a
finite sum of atoms, so the integral is finite if and only if $\mu_D(\{1/\beta\})=0$.

(iii) $P^{b}\in\ker(I-\beta B)$ is nonzero if and only if $1$ is an eigenvalue of $\beta B$, equivalently
$1/\beta\in\spec_p(B)$, the point spectrum. Since every eigenvalue is bounded by $\rad(B)$, this requires
$1/\beta\le\rad(B)$, that is $\beta\,\rad(B)\ge1$, with equality precisely when the edge $1/\beta$ is attained
as an eigenvalue. Then $P^{b}$ is the corresponding eigenvector.

(iv) For self-adjoint $B$, $\|\beta^kB^k\|=(\beta\,\rad(B))^k$, which tends to zero if and only if
$\beta\,\rad(B)<1$, so the transversality term $\beta^kB^kP\to0$ for every $P$ and the unique transversal
solution is $V$. If $\beta\,\rad(B)=1$ with an edge atom, then on the edge eigenspace $\beta B$ acts as the
identity, $\beta^kB^kP^{b}=P^{b}\not\to0$, so the bubble is exactly the nonvanishing transversality term.
\end{proof}

Theorem~\ref{thm:bubble} is the operator lift of the bubble characterisation of \citet{montrucchio2004} and
\citet{hiranotoda2024}. The scalar statement that a bubble exists when the price grows faster than the dividend,
so that the dividend-yield sum is finite, becomes the operator statement that the bubble lives at the spectral
edge $1/\beta$ and that the fundamental value is finite precisely when the dividend direction places no mass
there. The transversality term, a scalar in the classical theory, is here a spectral-projection residual onto
the edge eigenspace.

\begin{remark}
The classical no-bubble result of \citet{santoswoodford1997}, that bubbles are impossible when the present
value of the aggregate endowment is finite, is the statement that the interior of the spectrum,
$\beta\,\rad(B)<1$, carries only fundamental value. Bubbles require the economy to reach the spectral edge,
which Theorem~\ref{thm:bubble}(iii) makes precise as an atom of $B$ at $1/\beta$.
\end{remark}

\section{The fragility edge and the certification gate}\label{sec:edge}

\begin{corollary}[the fragility edge]\label{cor:edge}
Under Assumption~\ref{ass:sa}, the single scalar $\beta\,\rad(B)=1$ is, on one operator $B$, simultaneously:
\begin{enumerate}
\item[(i)] the radius of convergence of the pricing, or cascade, resolvent $Q=(I-\beta B)^{-1}$
(Proposition~\ref{prop:pricing});
\item[(ii)] the location of the bubble eigenspace, equivalently the spectral edge at which the dividend-yield
summability of the fundamental value fails (Theorem~\ref{thm:bubble});
\item[(iii)] the point at which the present-value operator norm $\|Q\|$ diverges.
\end{enumerate}
\end{corollary}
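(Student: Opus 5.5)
The corollary is an assembly of Proposition~\ref{prop:pricing} and Theorem~\ref{thm:bubble}. I would fix $B$ and $\beta$, write $\rho=\beta\,\rad(B)$, and verify separately that each of the three roles is located exactly at $\rho=1$. Because all three are statements about the same spectrum $\spec(B)$, no further object has to be introduced. I would treat the economically relevant case $B\succeq0$, such as a correlation operator, as the primary case, and note where the general self-adjoint case needs the $\dist(1,\beta\,\spec(B))$ form.

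For (i), I would invoke Proposition~\ref{prop:pricing} directly. The Neumann series $Q=\sum_k\beta^kB^k$ converges in operator norm if and only if $\rho<1$. Reading $\beta$ as the expansion variable, the radius of convergence of $\sum_k\beta^k B^k$ is $\limsup_k\|B^k\|^{-1/k}=1/\rad(B)$ by Gelfand's formula, which becomes an equality for self-adjoint $B$ since $\|B^k\|=\rad(B)^k$. The boundary of convergence is therefore exactly $\rho=1$.

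For (iii), I would take the norm formula of Proposition~\ref{prop:pricing}, $\|Q\|=1/\dist(1,\beta\,\spec(B))$. For $B\succeq0$ this equals $1/(1-\rho)$, which is finite for $\rho<1$ and tends to $+\infty$ as $\rho\uparrow1$. For general self-adjoint $B$ I would observe that $\rad(B)\in\spec(B)$ or $-\rad(B)\in\spec(B)$. The divergence at $\rho\uparrow1$ is then driven by the positive end, so the statement as written should be restricted to the case where $\rad(B)=\lambda_{\max}(B)$. This holds for the carrier of Definition~\ref{def:B} and is automatic when $B\succeq0$. I expect this sign issue to be the only real obstacle, and I would handle it by stating that restriction explicitly rather than forcing the general case.

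For (ii), I would use Theorem~\ref{thm:bubble}(iii). A nonzero bubble requires $1/\beta\in\spec_p(B)$, hence $\rho\ge1$, so no bubble exists in the interior. The minimal such configuration, where the bubble eigenspace first appears, is $1/\beta=\rad(B)$, that is $\rho=1$, with the top eigenvector as $P^b$. Theorem~\ref{thm:bubble}(ii) adds that at this point the fundamental value loses finiteness exactly when $\mu_D(\{1/\beta\})>0$. This is the failure of summability, since the integrand $(\beta s/(1-\beta s))^2$ is evaluated at the top of the support of $\mu_D$. Theorem~\ref{thm:bubble}(iv) confirms that below the edge transversality kills every bubble. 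Combining the three parts shows that one scalar threshold $\rho=1$ governs convergence, bubble location and norm divergence on the single operator $B$.
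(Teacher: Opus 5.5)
Your proposal is correct and follows the paper's own route. The paper's proof is the same one-line assembly: it reads (i) and (iii) off Proposition~\ref{prop:pricing} and (ii) off Theorem~\ref{thm:bubble}(ii)--(iii), exactly as you do.

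Your explicit restriction to $\rad(B)=\max\spec(B)$, automatic for $B\succeq0$, is a correct sharpening that the paper leaves implicit. For a self-adjoint $B$ whose radius is attained only at $-\rad(B)$, the norm $\|Q\|=1/\dist(1,\beta\,\spec(B))$ stays bounded, and no bubble eigenvalue sits at $1/\beta$ when $\beta\,\rad(B)=1$. The one slip is in (ii): the atom criterion of Theorem~\ref{thm:bubble}(ii) is an equivalence only in finite dimensions, since in general an edge atom is sufficient but not necessary for $\|V\|=\infty$. This does not affect the corollary.
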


\begin{proof}
Immediate from Proposition~\ref{prop:pricing} for (i) and (iii), and Theorem~\ref{thm:bubble}(ii)--(iii) for
(ii), all in terms of the same scalar $\beta\,\rad(B)$ and the same operator $B$.
\end{proof}

Corollary~\ref{cor:edge} is the paper's organising statement and we state its scope carefully. It is a
statement about one recovered operator read three ways. It is \emph{not} an identity between the
dividend-pricing operator $B$ and a separately defined systemic-risk stability Jacobian. The systemic-risk
literature reads a tipping point off the spectral radius of a stability matrix
\citep{aots2015,markose2021}; ours is the same kind of threshold, here on a recovered, dividend-relevant
operator, and the two coincide only under the additional modelling choice that the stability operator is taken
to be $\beta B$. We make no such identification and present Corollary~\ref{cor:edge} as a within-operator
coincidence, which is what the spectral algebra delivers.

The readings differ in what the recovered spectrum identifies, and we label them accordingly.

\begin{proposition}[certification of the spectral readings]\label{prop:gate}
Let $B$ be recovered from a finite sample of $N$ assets over $T$ periods with aspect ratio $q=N/T$, and let the
recovery follow the spiked-covariance threshold of \citet{bbp2005}, by which a spike $\theta$ is recoverable
from the bulk if and only if $\theta>\sqrt{q}$, with squared leading-eigenvector overlap
$(1-q/\theta^2)/(1+q/\theta)$ above the threshold.
\begin{enumerate}
\item[(i)] The spectral radius $\rad(B)$ and the threshold value $\beta\,\rad(B)=1$ are \strong: they are
functionals of the spectrum alone, invariant under orthogonal rotations of the unresolved eigenframe, and
recoverable above $\theta>\sqrt{q}$, up to the closed-form multiplicative bias by which the sample spike
overstates its population value, a bias that is invertible above the threshold.
\item[(ii)] The existence of a bubble, the edge mass $\mu_D(\{1/\beta\})$, is \proxyid: it requires the
projection of the dividend direction onto the boundary eigenspace, a leading-eigenvector object identified only
up to the squared overlap of \citet{bbp2005}, which is strictly below one near the threshold.
\item[(iii)] The identity of the coordinate, or asset, that carries the bubble is \unid: the spectrum is
invariant under orthogonal conjugation, $\spec(U^\top B U)=\spec(B)$ for every orthogonal $U$, so the spectrum
fixes no eigenframe and pins no bubble eigenvector to a named asset; a boundary eigenvector is moreover a
delocalised direction across the cross-section, not a single coordinate.
\end{enumerate}
\end{proposition}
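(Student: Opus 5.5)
The plan is to handle the three parts separately. Each reduces to a known fact about spectra under conjugation, combined with the \citet{bbp2005} spiked-model asymptotics that the proposition takes as its recovery hypothesis. The common device is the orthogonal conjugation $B\mapsto U^\top BU$. I would first record that for orthogonal $U$ the operator $U^\top BU$ is self-adjoint with $\spec(U^\top BU)=\spec(B)$, so $\rad(U^\top BU)=\rad(B)$. This holds because $U^\top BU-\lambda I=U^\top(B-\lambda I)U$ is invertible exactly when $B-\lambda I$ is, and because $\|U^\top BU\|=\|B\|$. Eigenvectors, by contrast, transform as $e\mapsto U^\top e$. Every later step uses this split: spectral functionals are invariant, eigenframe functionals are not.

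For (i), I would write $\rad(B)=\lambda_1(B)$ for the positive semi-definite carrier, as noted after Definition~\ref{def:B}. This is a symmetric function of the eigenvalues, so by the invariance above it does not depend on the eigenframe. The threshold $\beta\,\rad(B)=1$ is a deterministic function of $\rad(B)$ and the calibrated $\beta$, so it inherits the same label. For recoverability I would invoke the spiked-covariance result. Above $\theta>\sqrt q$ the top sample eigenvalue converges almost surely to the closed-form map $\ell(\theta)=(1+\theta)(1+q/\theta)$ in the standard parametrisation, which exceeds the population value. The map $\ell$ is strictly increasing for $\theta>\sqrt q$, since $\ell'(\theta)=1-q/\theta^2>0$. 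It is therefore invertible on that range, and the population spike is consistently recovered by $\ell^{-1}$ applied to the sample eigenvalue; this is the invertible bias referred to in the statement. Below the threshold the top eigenvalue sticks to the bulk edge $(1+\sqrt q)^2$, which is why recoverability is restricted to $\theta>\sqrt q$.

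For (ii), I would apply Theorem~\ref{thm:bubble}(iii) together with \eqref{eq:edgemass}: at a simple edge eigenvalue, $\mu_D(\{1/\beta\})=\inner{e_1}{D}^2$. This depends on $e_1$ and not only on the spectrum; by the invariance step it changes to $\inner{U^\top e_1}{D}^2$ under conjugation. It is therefore not a strong functional. The sample analogue $\inner{\hat e_1}{D}^2$ can be decomposed along $e_1$ and $e_1^\perp$. By \citet{bbp2005}, $\inner{\hat e_1}{e_1}^2\to c(\theta)=(1-q/\theta^2)/(1+q/\theta)$, and the component of $\hat e_1$ orthogonal to $e_1$ is asymptotically uniformly distributed on $e_1^\perp$. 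Consequently $\inner{\hat e_1}{D}^2\to c(\theta)\inner{e_1}{D}^2$, up to a vanishing term for fixed unit $D$. Since $c(\theta)<1$ for every finite $\theta$ and $c(\theta)\to0$ as $\theta\downarrow\sqrt q$, the edge mass is recovered only through this attenuating overlap, which is the proxy label. The main obstacle is this limit. I would cite the delocalisation of the residual component from the spiked-model literature rather than prove it, and would state the result for a fixed deterministic $D$, not one chosen after seeing the data.

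For (iii), I would exhibit the non-identification directly. Take a permutation matrix $U$ that swaps two assets $j\neq k$. Then $U^\top BU$ has the same spectrum, so it yields the same value for every strong reading. Its edge eigenvector, however, is $U^\top e_1$, which moves the loading from asset $j$ to asset $k$. Any spectrum-based reading therefore cannot distinguish which named coordinate carries the bubble, and the label is \unid. The remark on delocalisation follows from the eigenvector-overlap fact used in (ii), which spreads $\hat e_1$ across the cross-section rather than concentrating it on a single coordinate.
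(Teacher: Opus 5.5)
Your three arguments follow the paper's proof. For (i) you use spectral invariance plus spiked-model recoverability; your explicit inversion of $\ell(\theta)=(1+\theta)(1+q/\theta)$, via $\ell'(\theta)=1-q/\theta^2>0$, fills in what the paper only cites. For (ii) you use the frame-dependence of $\mu_D(\{1/\beta\})=\inner{e_1}{D}^2$ together with the squared overlap $c(\theta)$. For (iii) you conjugate by a coordinate permutation, a sharper choice than the paper's arbitrary orthogonal $U$.

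The step that fails is your derivation of the delocalisation clause of (iii) from the overlap fact of (ii). That fact gives $\hat e_1=\sqrt{c}\,e_1+\sqrt{1-c}\,w$, with only the residual $w$ spread over $e_1^\perp$. If the population boundary eigenvector $e_1$ sat on one coordinate, $\hat e_1$ would keep weight about $c$ there. Above threshold $c$ is close to one (about $0.99$ on the panel), so sampling noise cannot be the source of delocalisation. The clause is a property of the population carrier, and the paper does not derive it either. It treats it as a feature of a positive, market-mode carrier (the Perron--Frobenius remark in Section~\ref{sec:operational}).

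For a correlation carrier you can prove it directly. Since $|B_{jk}|\le1$, the eigen-equation gives $\lambda_1|e_{1,j}|\le\|e_1\|_1\le\sqrt N$, that is $e_{1,j}^2\le N/\lambda_1^2=1/(\kappa^2N)$ with $\kappa=\lambda_1/N$. The same bound holds for $\hat e_1$ with $\lambda_1(\widehat B)$, because $\widehat B$ is also a correlation matrix. For an arbitrary self-adjoint $B$, such as $\operatorname{diag}(10,1,\dots,1)$, the clause fails.

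A smaller point on (ii): your sharper limit $\inner{\hat e_1}{D}^2\to c(\theta)\inner{e_1}{D}^2$ makes the attenuation a known continuous function of $\theta$. Since (i) already identifies $\theta$, the attenuation could be divided out exactly as you invert $\ell$ in (i). So the attenuation by itself cannot be what separates \proxyid\ from \strong. The separation rests on two other things:
\begin{itemize}
\item your own observation that $\mu_D$ is frame-dependent rather than a spectral functional;
\item the residual $\sqrt{1-c}\,\inner{w}{D}$, once you drop the orthogonal invariance you imported to make it vanish. Without that invariance it is controlled only by $\sqrt{1-c}$, so it confines $m$ to an interval whose width is governed by $1-c$. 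This is the outer set the paper later uses in Theorem~\ref{thm:partialid}(ii).
\end{itemize}
State this explicitly, or the strong/proxy split in your write-up looks arbitrary.
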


\begin{proof}
(i) $\rad(B)=\sup|\spec(B)|$ depends only on the spectrum, which is invariant under any unitary change of the
eigenbasis; recoverability above $\theta>\sqrt{q}$ is the spiked-covariance result of \citet{bbp2005}. (ii) By
Theorem~\ref{thm:bubble}(ii) the edge mass is $\mu_D(\{1/\beta\})=\inner{\Pi_{1/\beta}D}{D}$ with
$\Pi_{1/\beta}$ the spectral projection onto the edge eigenspace; the projection is estimated with squared
overlap $(1-q/\theta^2)/(1+q/\theta)<1$ near threshold, so the edge mass is identified only up to that overlap.
(iii) For an arbitrary orthogonal $U$ the conjugate $U^\top B U$ has the same spectrum as $B$ but a rotated
eigenframe, so the spectrum identifies no eigenbasis and a fortiori no assignment of an eigenvector to one
coordinate. The split with (ii) is coherent: the rotation-invariant overlap scalar $\inner{\Pi_{1/\beta}D}{D}$
is identified up to the random-matrix overlap and so is \proxyid, whereas localising the delocalised boundary
eigenvector to a single named asset is refused, hence \unid.
\end{proof}

\begin{remark}
The gate is a refusal discipline. The scalar fragility edge is a certified object; the existence of a bubble is
reported with its random-matrix overlap band attached; the naming of a single bubbly asset is refused. This is
the same discipline that the systemic-risk literature applies implicitly when it reports an aggregate spectral
radius rather than an individual contagion edge, made explicit and attached to each reading.
\end{remark}

\subsection{A finite-sample error bound on the recovered spectral radius}\label{sec:finitesample}

Proposition~\ref{prop:gate}(i) certifies the spectral radius only qualitatively: it states that $\rad(B)$ is
recoverable above the detectability threshold $\theta>\sqrt{q}$. We now make the strong label quantitative, with
a finite-sample error bound on the recovered $\rad(\widehat{B})$, so that the certification carries an explicit
confidence band rather than an asymptotic guarantee. Throughout, $\widehat{B}$ is the sample correlation operator
formed from $T$ observations of the standardised, $I(0)$, $N$-dimensional return vector whose population
correlation operator is $B$; both are symmetric and positive semi-definite, so $\rad(\cdot)$ is the largest
eigenvalue. Write $q=N/T$ and let $r_{e}(B)=\operatorname{tr}(B)/\|B\|=N/\rad(B)$ be the effective rank of the
carrier.

\begin{theorem}[finite-sample error on the recovered spectral radius]\label{thm:finitesample}
Let $B$ and $\widehat{B}$ be as above.
\begin{enumerate}
\item[(i)] \emph{Deterministic perturbation.} Without any distributional assumption,
\[
\big|\rad(\widehat{B})-\rad(B)\big|=\big|\lambda_1(\widehat{B})-\lambda_1(B)\big|\le\|\widehat{B}-B\|,
\]
so the spectral radius is a $1$-Lipschitz functional of the operator in the operator norm.
\item[(ii)] \emph{Stochastic control of the estimation error.} If the standardised returns are independent across
$t$ with sub-Gaussian tails, there is a universal constant $c$ such that for every $\delta\in(0,1)$, with
probability at least $1-\delta$,
\[
\|\widehat{B}-B\|\le c\,\rad(B)\left(\sqrt{\frac{r_{e}(B)+\log(2/\delta)}{T}}
+\frac{r_{e}(B)+\log(2/\delta)}{T}\right)=:\varepsilon_{T,\delta}.
\]
\item[(iii)] \emph{Combined band.} With probability at least $1-\delta$,
$\big|\rad(\widehat{B})-\rad(B)\big|\le\varepsilon_{T,\delta}$, and hence
$\big|\beta\,\rad(\widehat{B})-\beta\,\rad(B)\big|\le\beta\,\varepsilon_{T,\delta}$.
\end{enumerate}
\end{theorem}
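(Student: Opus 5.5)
Part (i) is a Weyl-type argument. Both $B$ and $\widehat{B}$ are symmetric positive semi-definite, so $\rad(\cdot)=\lambda_1(\cdot)=\|\cdot\|$, and the Courant--Fischer characterisation gives $\lambda_1(A)=\sup_{\|x\|=1}\inner{Ax}{x}$. For any unit $x$,
\[
\inner{\widehat{B}x}{x}=\inner{Bx}{x}+\inner{(\widehat{B}-B)x}{x}\le\lambda_1(B)+\|\widehat{B}-B\|.
\]
Taking the supremum over $x$ yields $\lambda_1(\widehat{B})\le\lambda_1(B)+\|\widehat{B}-B\|$. The same argument with the roles of $B$ and $\widehat{B}$ exchanged gives the reverse inequality. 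Alternatively, one can use the triangle inequality $\big|\|\widehat{B}\|-\|B\|\big|\le\|\widehat{B}-B\|$ together with the identity $\rad=\|\cdot\|$ for self-adjoint operators noted in Section~\ref{sec:setup}. No distributional input enters, which gives the 1-Lipschitz claim.

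Part (ii) is where the work lies. I would invoke the dimension-free sample-covariance bound of Koltchinskii and Lounici. For i.i.d.\ centred sub-Gaussian vectors $X_t$ with covariance $\Sigma$, the empirical covariance $\widehat\Sigma$ satisfies, with probability at least $1-\delta$,
\[
\|\widehat\Sigma-\Sigma\|\le c\|\Sigma\|\left(\sqrt{\frac{r_e(\Sigma)+\log(2/\delta)}{T}}+\frac{r_e(\Sigma)+\log(2/\delta)}{T}\right).
\]
This follows from a generic chaining or $\varepsilon$-net argument on the unit sphere, combined with Bernstein's inequality for the sub-exponential variables $\inner{X_t}{x}^2$. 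I would apply it with $\Sigma=B$ to the standardised returns, whose population covariance is the correlation operator $B$, so that $r_e(B)=N/\rad(B)$.

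The main obstacle is that $\widehat{B}$ is a sample \emph{correlation} matrix: it is renormalised by estimated standard deviations and possibly demeaned, rather than being the covariance of known-standardised data. I would handle this in two ways.
\begin{itemize}
\item First, I would read ``standardised returns'' as standardised by population scales. Then $\widehat{B}$ is the empirical second-moment operator and the cited bound applies directly. This is the reading I would commit to.
\item Second, as a remark, if sample scales are used, write $\widehat{B}=\widehat S^{-1/2}\widehat\Sigma\widehat S^{-1/2}$ with $\widehat S=\operatorname{diag}(\widehat\Sigma)$. The diagonal entries concentrate at the same rate, since $\|\widehat S-I\|\le\|\widehat\Sigma-I\|_{\max}\le\|\widehat\Sigma-B\|$. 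A first-order expansion then shows the extra error is at most a constant multiple of $\|\widehat\Sigma-B\|\,(1+\rad(B))$. This changes only the universal constant $c$, or requires $\rad(B)$ bounded, and I would flag that caveat. The centring term contributes a rank-one perturbation of the same order, which is absorbed likewise.
\end{itemize}

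Part (iii) is then immediate. On the event of probability at least $1-\delta$ from (ii), chain (i) and (ii) to get $|\rad(\widehat{B})-\rad(B)|\le\varepsilon_{T,\delta}$. Multiplying by $\beta>0$ gives the band on the discounted radius $\beta\,\rad(\widehat{B})$, which is the quantitative form of the \strong\ label in Proposition~\ref{prop:gate}(i).
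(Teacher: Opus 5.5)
Your parts (i) and (iii), and the appeal in (ii) to the Koltchinskii--Lounici effective-rank bound with $\Sigma=B$, $\operatorname{tr}B=N$ and $r_e(B)=N/\rad(B)$, are exactly the paper's proof. The gap is in the step the paper's proof singles out: sample standardisation.

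\paragraph{Where the argument falls short.} The statement's $\widehat{B}$ is the sample correlation operator, which is built with sample scales. The reading you commit to proves the bound for a different estimator: the second-moment operator of population-standardised returns, which does not even have unit diagonal. Your fallback remark does not recover the statement either. You bound the correction $\Delta\widehat\Sigma+\widehat\Sigma\Delta$, with $\Delta=\widehat S^{-1/2}-I$, by $\|\Delta\|\,\|\widehat\Sigma\|$. This inflates the band by a factor of order $1+\rad(B)$. Since $\rad(B)$ ranges over $[1,N]$ (about $7$--$8$ on the panel), that factor is not a universal constant, contrary to your phrase ``changes only the universal constant $c$''.

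\paragraph{How to close it.} The paper asserts that sample standardisation adds only a term of the same order as $\varepsilon_{T,\delta}$. To justify this you must avoid the product bound on the leading piece. With $E=\widehat\Sigma-B$, write $\widehat B-B=E+(\Delta B+B\Delta)+R$. The remainder satisfies $\|R\|\le(2\|\Delta\|+\|\Delta\|^2)\|E\|+\|\Delta B\Delta\|$, which is harmless on the event $\max_i|\widehat\Sigma_{ii}-1|\le\tfrac12$. For the leading piece,
\[
\|\Delta B\|\le\|\Delta B^{1/2}\|\,\|B^{1/2}\|=\|\Delta B\Delta\|^{1/2}\rad(B)^{1/2}\le\Big(\sum_i\Delta_i^2\Big)^{1/2}\rad(B)^{1/2},
\]
because $\|\Delta B\Delta\|\le\operatorname{tr}(\Delta B\Delta)=\sum_i\Delta_i^2$ when $B_{ii}=1$.

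On that event $\sum_i\Delta_i^2\lesssim\sum_i(\widehat\Sigma_{ii}-1)^2$. This sum has mean $O(N/T)=O(\rad(B)\,r_e(B)/T)$. Its fluctuations are at scale $\rad(B)\log(2/\delta)/T$: in the Gaussian case the vector of diagonal deviations has covariance $2(B\circ B)/T$, where $\circ$ is the entrywise product and $\|B\circ B\|\le\rad(B)$. Hence $\|\Delta B\|\lesssim\rad(B)\sqrt{(r_e(B)+\log(2/\delta))/T}$, a correction that is genuinely absorbable into $c$.

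\paragraph{Two smaller slips.} First, the chain should read $\|\widehat S-I\|=\max_i|\widehat\Sigma_{ii}-1|\le\|\widehat\Sigma-B\|$. The quantity $\|\widehat\Sigma-I\|_{\max}$ picks up the off-diagonal correlations and is not small. Second, a plain $\varepsilon$-net argument gives $N$ in place of $r_e(B)$, losing a factor $\sqrt{\rad(B)}$. Only the chaining route yields the effective-rank form you cite.
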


\begin{proof}
(i) For a symmetric operator the largest eigenvalue has the variational form
$\lambda_1(A)=\sup_{\|x\|=1}\inner{x}{Ax}$ \citep{hornjohnson2013}. As a pointwise supremum of the linear
functionals $A\mapsto\inner{x}{Ax}$ it is sublinear in $A$, so for any unit $x$,
$\inner{x}{\widehat{B}x}=\inner{x}{Bx}+\inner{x}{(\widehat{B}-B)x}\le\lambda_1(B)+\|\widehat{B}-B\|$, whence
$\lambda_1(\widehat{B})\le\lambda_1(B)+\|\widehat{B}-B\|$; exchanging $B$ and $\widehat{B}$ gives the reverse
inequality, so $|\lambda_1(\widehat{B})-\lambda_1(B)|\le\|\widehat{B}-B\|$, which is Weyl's inequality for the
extreme eigenvalue. Since both operators are positive semi-definite, $\rad(\cdot)=\lambda_1(\cdot)$. (ii) is the
effective-rank operator-norm concentration bound for sample covariance operators of
\citet{koltchinskiilounici2017}, in the sub-Gaussian form of \citet{vershynin2018}, specialised to the
standardised case in which the population covariance is the unit-diagonal correlation $B$, so
$\operatorname{tr}(B)=N$ and $\|B\|=\rad(B)$ and the relative error is governed by the effective rank
$r_{e}(B)=N/\rad(B)$; replacing the population standardisation by the sample standardisation that forms
$\widehat{B}$ perturbs the estimate by a further term of the same $O(T^{-1/2})$ order, which is rate-preserving
and absorbed into the constant $c$. (iii) Combine (i) and (ii) and multiply by the calibrated constant $\beta$.
\end{proof}

The bound upgrades the gate of Proposition~\ref{prop:gate} from an asymptotic detectability statement into a
decision rule with an explicit confidence band.

\begin{corollary}[finite-sample certification of the fragility edge]\label{cor:finitecert}
At confidence $1-\delta$:
\begin{enumerate}
\item[(i)] if $\beta\,\rad(\widehat{B})-\beta\,\varepsilon_{T,\delta}>1$, the edge is certified crossed,
$\beta\,\rad(B)>1$;
\item[(ii)] if $\beta\,\rad(\widehat{B})+\beta\,\varepsilon_{T,\delta}<1$, the interior is certified,
$\beta\,\rad(B)<1$;
\item[(iii)] otherwise the band straddles the edge and the crossing is uncertified at level $\delta$.
\end{enumerate}
\end{corollary}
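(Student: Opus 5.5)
The plan is to derive all three clauses from the single high-probability event furnished by Theorem~\ref{thm:finitesample}(iii). Let $\mathcal{E}_\delta$ be the event
\[
\bigl|\beta\,\rad(\widehat{B})-\beta\,\rad(B)\bigr|\le\beta\,\varepsilon_{T,\delta}.
\]
By that theorem, $\Pr(\mathcal{E}_\delta)\ge1-\delta$. Everything else is deterministic reasoning on $\mathcal{E}_\delta$. A single event governs all three readings, so no union bound or multiplicity correction is needed. Working on $\mathcal{E}_\delta$, I rewrite it as the two-sided interval statement
\[
\beta\,\rad(\widehat{B})-\beta\,\varepsilon_{T,\delta}\le\beta\,\rad(B)\le\beta\,\rad(\widehat{B})+\beta\,\varepsilon_{T,\delta}.
\]

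For (i), I use the left inequality. If the observable lower endpoint exceeds one, then $\beta\,\rad(B)>1$ on $\mathcal{E}_\delta$. Hence the event ``the rule declares crossing and $\beta\,\rad(B)\le1$'' is contained in $\mathcal{E}_\delta^c$, so its probability is at most $\delta$. This is the sense of certification at confidence $1-\delta$. For (ii), I argue symmetrically with the right inequality: an upper endpoint below one forces $\beta\,\rad(B)<1$ on $\mathcal{E}_\delta$. This reading is then tied back to Proposition~\ref{prop:pricing}, which gives norm convergence of $Q$, and to Theorem~\ref{thm:bubble}(iv), under which every bounded solution is fundamental. For (iii), I note that the complement of the two conditions is exactly the case where the interval $[\beta\rad(\widehat B)-\beta\varepsilon,\,\beta\rad(\widehat B)+\beta\varepsilon]$ contains $1$. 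In that case both $\beta\,\rad(B)<1$ and $\beta\,\rad(B)>1$ (as well as equality) are consistent with $\mathcal{E}_\delta$, so the data at level $\delta$ cannot exclude either side. This is a statement of non-certification, not a probability claim, and it needs nothing beyond the definitions.

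The main obstacle is not the logic but the status of $\varepsilon_{T,\delta}$. It depends on the unknown $\rad(B)$ through both the prefactor and $r_e(B)=N/\rad(B)$, and on the unspecified universal constant $c$, so the band as written is not directly computable. I would first treat $c$ as fixed and known, as the corollary implicitly does. To obtain a feasible band, I would try a plug-in or self-bounding step. On $\mathcal{E}_\delta$, write $\rad(B)\le\rad(\widehat{B})+\varepsilon_{T,\delta}$, and note that $\varepsilon_{T,\delta}$ as a function of $\rho=\rad(B)$ is $c(\sqrt{\rho(N+\rho L)/T}+(N+\rho L)/T)$ with $L=\log(2/\delta)$, which is increasing in $\rho$. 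Solving the resulting inequality for an upper bound $\bar\rho$ then yields a conservative, data-dependent $\bar\varepsilon\ge\varepsilon_{T,\delta}$. All three conclusions survive because enlarging the band only shrinks the certified regions. If this fixed-point step proves messy, I would fall back on the crude bound $\rad(B)\le N$, valid because $\operatorname{tr}B=N$ and $B\succeq0$, which gives a fully explicit though looser band.
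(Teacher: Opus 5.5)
Your argument is correct and is essentially the paper's proof: take the single event $\bigl|\beta\,\rad(\widehat{B})-\beta\,\rad(B)\bigr|\le\beta\,\varepsilon_{T,\delta}$ from Theorem~\ref{thm:finitesample}(iii), which has probability at least $1-\delta$, and observe that on it each inequality on $\beta\,\rad(\widehat{B})$ transfers deterministically to $\beta\,\rad(B)$. Your extra plug-in step, which handles the dependence of $\varepsilon_{T,\delta}$ on the unknown $\rad(B)$, goes beyond what the paper proves but is sound: $\varepsilon_{T,\delta}$ is increasing in $\rad(B)$ and $\rad(B)\le N$, so any data-dependent $\bar\varepsilon\ge\varepsilon_{T,\delta}$ that holds on that event leaves all three conclusions intact.
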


\begin{proof}
Immediate from Theorem~\ref{thm:finitesample}(iii): the event $|\beta\,\rad(\widehat{B})-\beta\,\rad(B)|
\le\beta\,\varepsilon_{T,\delta}$ holds with probability at least $1-\delta$, and on that event each inequality on
the estimate $\beta\,\rad(\widehat{B})$ transfers to the population quantity $\beta\,\rad(B)$.
\end{proof}

Two remarks place the bound against the random-matrix transition the gate already invokes and against the panel.

\begin{remark}[relation to the spiked-covariance transition]
The bound of Theorem~\ref{thm:finitesample}(ii) is non-asymptotic and worst-case in the operator norm. In the
single-spike supercritical regime $\theta>\sqrt{q}$ of Proposition~\ref{prop:gate} it is conservative: the
leading sample eigenvalue then has the sharper almost-sure limit
$\lambda_1(\widehat{B})\to(1+\theta)(1+q/\theta)$ with fluctuations of order $T^{-1/2}$
\citep{baiksilverstein2006}, which refines the detectability transition of \citet{bbp2005} into a point limit and
yields a tighter, Gaussian interval. Below the threshold $\theta\le\sqrt{q}$ the spike is absorbed into the bulk
edge $(1+\sqrt{q})^2$ and $\rad(\widehat{B})$ ceases to identify the population radius, which is exactly the
\unid\ region of the gate. The deterministic part (i) holds with no distributional assumption at all, so any
improvement in controlling $\|\widehat{B}-B\|$ transfers directly to the radius.
\end{remark}

\begin{remark}[the bound on the panel]
On the panel of Section~\ref{sec:empirics}, with $N=18$ and the window $T=W=250$, the aspect ratio is
$q=N/T\approx0.072$ and the effective rank is $r_{e}(B)=N/\lambda_1$, about $2.7$ in calm and $2.2$ in crisis, so
the leading-order relative-error rate $\sqrt{r_{e}/T}$ is about $0.10$ and $0.095$ respectively, with the
second-order term an order of magnitude smaller. The sample leading eigenvalue $\lambda_1\in[6.7,8.0]$ sits far
above the bulk edge $(1+\sqrt{q})^2\approx1.6$, so the spike excess is comfortably above the detectability
threshold $\sqrt{q}\approx0.27$ and the carrier sits deep in the strong region, with the radius recovered at the
parametric $T^{-1/2}$ rate. Two caveats are explicit. First, Theorem~\ref{thm:finitesample}(ii) is stated for
returns independent across time with sub-Gaussian tails, whereas the panel returns are only $I(0)$, hence weakly
dependent and heavy-tailed; we therefore read the bound on the panel as a rate, or scaling, statement, and let
the dependence-robust moving-block resampling, not the independence assumption, carry the reported inference.
Second, the rate $\sqrt{r_{e}/T}$ governs the within-window cross-sectional estimation error of
$\rad(\widehat{B})$, the object Theorem~\ref{thm:finitesample} bounds; it is a different quantity from the
moving-block interval $[1.07,1.52]$ reported for the crisis-minus-calm difference, which measures the
between-regime contrast across windows under its own block-dependence structure. The two are complementary
readings, not the same band. We do not convert the universal constant $c$ into a numerical band, since it is not
sharp; the dimensionless inputs $q$ and $r_{e}$ and the rate $\sqrt{r_{e}/T}$ are the reproducible content.
\end{remark}

\subsection{Heavy tails and the robust carrier}\label{sec:heavytail}

Theorem~\ref{thm:finitesample}(ii) borrows its operator-norm concentration from the sub-Gaussian
sample-covariance theory of \citet{koltchinskiilounici2017}, in the form of \citet{vershynin2018}. Daily equity
returns are heavy-tailed, so that borrowing is not free. Under only a finite number of moments the empirical
second-moment functional concentrates polynomially, not exponentially, in the confidence parameter, the matrix
form of the scalar deviation study of \citet{catoni2012}: for the plain sample correlation $\Phi_t$ the upper tail
of $\|\Phi_t-B\|$ decays as a power of the inverse confidence rather than as $e^{-t}$, so the $1-\delta$ coverage
on which Corollary~\ref{cor:finitecert} rests fails as stated. The remedy is to change the estimator, not to weaken
the conclusion. A robust carrier recovers the band under a constant number of moments, after which the Weyl bound,
Corollary~\ref{cor:finitecert}, and the gate hold with the robust carrier in place of $\Phi_t$; the
sample-correlation reading becomes the light-tailed special case.

\begin{theorem}[heavy-tail certification band, robust covariance route]\label{thm:heavytail}
Let $x_1,\dots,x_T$ be independent within the window with population correlation carrier $B$ (self-adjoint,
positive semi-definite, $\operatorname{tr}B=N$, $\rad(B)\in[1,N]$, effective rank $r_{e}=N/\rad(B)$), and fix the
confidence $t>0$, $\delta=2e^{-t}$. Suppose uniform $L^4$--$L^2$ norm equivalence, that is there is $L\ge1$ with
$(\Exp|\inner{x}{v}|^4)^{1/4}\le L\,(\Exp|\inner{x}{v}|^2)^{1/2}$ for every direction $v$, strictly stronger than
coordinatewise bounded kurtosis, together with a robust scale estimator for the diagonal normalisation. Then there
exist a robust correlation carrier $\widetilde{B}$, tuned to $t$, and an absolute constant $C$ such that with
probability at least $1-\delta$,
\[
\big\|\widetilde{B}-B\big\|\le C\,L^2\,\rad(B)\left(\sqrt{\tfrac{r_{e}+t}{T}}+\tfrac{r_{e}+t}{T}\right)
=L^2\,\varepsilon_{T,\delta},
\]
the band $\varepsilon_{T,\delta}$ of Theorem~\ref{thm:finitesample}(ii) inflated by the single factor $L^2$. Hence
by the variational identity and Weyl's inequality $\big|\rad(\widetilde{B})-\rad(B)\big|\le L^2\varepsilon_{T,\delta}$,
a two-sided bound, and Corollary~\ref{cor:finitecert}, the edge mass recomputed on $e_1(\widetilde{B})$, the
$\sin\Theta$ band of Theorem~\ref{thm:daviskahan}, and the certification gate hold verbatim with $\widetilde{B}$ in
place of $\Phi_t$ and $\varepsilon_{T,\delta}$ replaced by $L^2\varepsilon_{T,\delta}$. For the plain sample
correlation $\Phi_t$ under heavy tails the two-sided band is false: the upper tail of $\|\Phi_t-B\|$ is only
polynomial in the inverse confidence.
\end{theorem}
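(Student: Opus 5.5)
The proof proceeds in three stages: a robust covariance estimate, a robust normalisation to correlation form, and a transfer of the resulting band through the deterministic machinery already established. For the first stage, we work with the covariance $\Sigma=DBD$, where $D=\operatorname{diag}(\sigma_1,\dots,\sigma_N)$ collects the population scales. We take $\widetilde{\Sigma}$ to be the median-of-means, or equivalently the Mendelson--Zhivotovskiy tournament, covariance estimator tuned to $t$. The $T$ observations are split into $k\asymp t$ blocks, the blockwise empirical covariances are formed, and a robust aggregate of them is selected. Under uniform $L^4$--$L^2$ norm equivalence with constant $L$, the result of \citet{mendelsonzhivotovskiy2020} gives, with probability at least $1-e^{-t}$,
\[
\|\widetilde{\Sigma}-\Sigma\|\le C L^2\|\Sigma\|\Big(\sqrt{\tfrac{r(\Sigma)+t}{T}}+\tfrac{r(\Sigma)+t}{T}\Big),
\qquad r(\Sigma)=\operatorname{tr}\Sigma/\|\Sigma\|.
\]
The factor $L^2$ arises because the fourth moment of $\inner{x}{v}$ enters the variance of the quadratic form $\inner{x}{v}^2$, and that variance is bounded by $L^4\,(\Exp\inner{x}{v}^2)^2$. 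Taking the square root in the block-level deviation leaves $L^2$. We will quote this bound directly rather than re-derive the tournament argument.

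The second stage is the passage to correlation form, and it is the main obstacle. We set $\widetilde{B}=\widetilde{D}^{-1}\widetilde{\Sigma}\widetilde{D}^{-1}$, where $\widetilde{D}$ is built from the robust scale estimator; for instance, each $\widetilde{\sigma}_i$ can be a median-of-means estimate of $\Exp x_i^2$ computed with the same confidence $t$. We then decompose
\[
\widetilde{B}-B=\widetilde{D}^{-1}(\widetilde{\Sigma}-\Sigma)\widetilde{D}^{-1}+\big(\widetilde{D}^{-1}D\,B\,D\widetilde{D}^{-1}-B\big).
\]
The first term is controlled by the stage-one bound times $\|\widetilde{D}^{-1}D\|^2$. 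The second term is bounded by $\|B\|\,\big(2\|\widetilde{D}^{-1}D-I\|+\|\widetilde{D}^{-1}D-I\|^2\big)$.

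The scale errors satisfy $\max_i|\widetilde{\sigma}_i/\sigma_i-1|\lesssim L^2\sqrt{(\log N+t)/T}$, by a union bound over coordinates. The difficulty is that $\log N$ is not obviously dominated by $r_e+t$. We would handle this by scaling rather than rescaling the estimator itself: the scalar robust estimator is run at confidence $t+\log N$, and we then use $\log N\le N/\rad(B)\cdot\rad(B)$, or else absorb $\log N$ into $C$, since $N$ is fixed in the setting. If the $\log N$ term cannot be cleanly absorbed, we would state the dependence of $C$ on it explicitly. We would also note that $\|\Sigma\|/\min_i\sigma_i^2$ enters the rescaling only through the ratio to $\rad(B)$, and that $r(\Sigma)$ is replaced by $r_e$ up to constants after normalisation. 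On the event where stage one holds and all scale errors lie below $1/2$, collecting terms yields $\|\widetilde{B}-B\|\le CL^2\rad(B)\big(\sqrt{(r_e+t)/T}+(r_e+t)/T\big)$. A union bound over the two events gives the confidence level $\delta=2e^{-t}$.

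The third stage is the transfer. Given the operator-norm band, the argument of Theorem~\ref{thm:finitesample}(i) applies verbatim. That argument uses the variational form of $\lambda_1$ and Weyl's inequality, and needs only that both $\widetilde{B}$ and $B$ be symmetric; if $\widetilde{B}$ fails to be positive semi-definite, we project it onto the PSD cone, which at most doubles the error. This gives the two-sided radius bound, and Corollary~\ref{cor:finitecert} then follows by the same event argument with $L^2\varepsilon_{T,\delta}$ in place of $\varepsilon_{T,\delta}$. The edge mass and the $\sin\Theta$ band are deterministic functions of $\|\widetilde{B}-B\|$, so they inherit the bound in the same way.

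For the negative claim about the plain sample correlation $\Phi_t$, we would exhibit a construction with only $L^4$ control. Let $x=\xi g$, where $g$ is Gaussian and the radial factor $\xi$ has a finite fourth moment but a polynomial tail. Then $\|\Phi_t-B\|$ is driven by the single largest $\xi_s^2$ among the $T$ draws, so the probability that the error exceeds $u$ decays only polynomially in $u$. In particular, no bound of order $\sqrt{t}$ holds uniformly in $\delta=2e^{-t}$.
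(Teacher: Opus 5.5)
Your architecture is the paper's own: the Mendelson--Zhivotovskiy band for a robust covariance, then robust standardisation, then the deterministic Weyl transfer. The one difference is that you build a counterexample for $\Phi_t$ where the paper cites Catoni.

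The gap is in the standardisation stage. The paper's proof covers it only with the assertion that it ``propagates the band with an adjusted constant'', and as you write it, it does not close.

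Bounding $\widetilde D^{-1}(\widetilde\Sigma-\Sigma)\widetilde D^{-1}$ by ``the stage-one bound times $\|\widetilde D^{-1}D\|^2$'' needs a bound on $D^{-1}(\widetilde\Sigma-\Sigma)D^{-1}$. Stage one instead controls $\widetilde\Sigma-\Sigma$ in raw units, with prefactor $\|\Sigma\|$ and rank $r(\Sigma)$. Converting to correlation units costs $1/\min_i\sigma_i^2$, and $\|\Sigma\|/\min_i\sigma_i^2$ can exceed $\rad(B)$ by the full factor $\max_i\sigma_i^2/\min_i\sigma_i^2$. For $B=I$ and $D=\operatorname{diag}(1,M)$ it equals $M^2$, while $\rad(B)=1$ and $r(\Sigma)\approx1$.

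So two of your remarks are false for heterogeneous scales: that this quantity enters ``only through the ratio to $\rad(B)$'', and that $r(\Sigma)$ becomes $r_e$ ``up to constants''. They would hold only if the covariance estimator were equivariant under diagonal rescaling, which you do not show and the Mendelson--Zhivotovskiy construction does not obviously provide.

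The repair uses the fact that $L^4$--$L^2$ equivalence passes to linear images with the same $L$. Split the window, compute $\widetilde D$ on one half, and run the robust covariance estimator on $\widetilde D^{-1}x$ from the other half. Conditionally on the first half, its target $\widetilde D^{-1}DBD\widetilde D^{-1}$ has norm and effective rank within absolute factors of $\rad(B)$ and $r_e$ once $\max_i|\sigma_i/\widetilde\sigma_i-1|\le 1/2$. Your second term then accounts for the remaining bias.

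That bias term carries the second problem. Your bound on it is $\rad(B)(2\eta+\eta^2)$ with $\eta=\max_i|\sigma_i/\widetilde\sigma_i-1|\asymp L^2\sqrt{(t+\log N)/T}$. Here $\log N$ is not dominated by $r_e+t$: $\rad(B)\asymp N$ gives $r_e\asymp1$ while $\log N$ is unbounded.

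Neither of your proposed fixes works. The inequality $\log N\le(N/\rad(B))\,\rad(B)$ is just $\log N\le N$ and gives nothing here. Absorbing $\log N$ into $C$ makes $C$ depend on $N$, contradicting the absolute constant in the statement.

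What your argument actually establishes is the band with $r_e+t+\log N$ in place of $r_e+t$. Removing the $\log N$ would need a sharper bound on $\|\Delta B\|$, with $\Delta=\widetilde D^{-1}D-I$, than $\|\Delta\|\,\rad(B)$. Neither your sketch nor the paper's proof supplies one.

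A smaller point on the negative claim. $\Phi_t$ and $B$ are positive semi-definite with norm at most $N$, so $\|\Phi_t-B\|\le N$ and its tail cannot decay polynomially in $u$. The correct contrast runs in $T$. A deviation that does not shrink with $T$, produced by a single dominating $\xi_s^2$, has probability of order $T\,\mathbb{P}(\xi^2>cT)$, which is polynomial in $T$. Take $t$ to be a small multiple of $T$: this probability then exceeds $2e^{-t}$, while the claimed band at that $t$ is already smaller than the deviation.
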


\begin{proof}
The transfer from the operator-norm band to the radius and the gate is deterministic, as in
Theorem~\ref{thm:finitesample}(i): $\rad(\cdot)=\lambda_1(\cdot)$ for a positive semi-definite carrier, $\lambda_1$
is $1$-Lipschitz in the operator norm by the variational identity, and Corollary~\ref{cor:finitecert} and
Theorem~\ref{thm:daviskahan} depend on $B$ only through the band. Only the band is probabilistic. Reduce
$\|\widehat{\Sigma}-\Sigma\|$ to the uniform deviation of $v\mapsto\inner{v}{(\widehat{\Sigma}-\Sigma)v}$ over the
unit sphere; replace each per-direction empirical second moment by a median-of-means over blocks, which under
uniform $L^4$--$L^2$ norm equivalence concentrates with only four moments, the robust-mean device established for
matrices by \citet{minsker2018}. Making this uniform by the procedure of \citet{mendelsonzhivotovskiy2020} yields,
dimension-free, the effective-rank operator-norm band with constant $L^2$; \citet{keminskerrensunzhou2019} give
comparable operator-norm guarantees. Robust standardisation $\widetilde{B}=\widehat{D}^{-1/2}\widehat{\Sigma}
\widehat{D}^{-1/2}$ with $\widehat{D}$ a robust scale propagates the band with an adjusted constant; sample standard
deviations must not be used, since a non-robust scale reinjects the heavy-tail error. The failure for $\Phi_t$ is
the matrix form of the deviation study of \citet{catoni2012}.
\end{proof}

The distribution-free alternative dispenses with moment conditions entirely, at the price of a looser,
ambient-dimension rate, and it targets the latent correlation rather than the sample one.

\begin{proposition}[distribution-free Kendall carrier under elliptical returns]\label{prop:kendall}
Suppose the within-window returns are elliptical with finite second moments, so that the latent generalised
correlation coincides with the Pearson carrier $B$ and $\tau=(2/\pi)\arcsin\rho$ holds entrywise
\citep{lindskogmcneilschmock2003}. Let $\widetilde{B}_{jk}=\sin(\tfrac{\pi}{2}\widehat{\tau}_{jk})$ be the Kendall
carrier, with $\widehat{\tau}_{jk}$ the pairwise Kendall statistic and $\widetilde{B}_{jj}=1$. Each entry is a
bounded $U$-statistic, so $\widehat{\tau}_{jk}$ concentrates sub-Gaussianly with no moment condition; since
$\sin$ is $1$-Lipschitz, $\big\|\widetilde{B}-B\big\|\le C\,\rad(B)\sqrt{(N\log N)/T}$ with high probability
\citep{hanliu2017,wegkampzhao2016}, an ambient-dimension rate, whence $\big|\rad(\widetilde{B})-\rad(B)\big|$ obeys
the same bound, distribution-free within the elliptical class. The leading eigenvalue $\lambda_1(\widetilde{B})$ is
unchanged by projection of $\widetilde{B}$ onto the positive-semidefinite cone, clipping negative eigenvalues to
zero, which is the only step the radius reading needs; that projection raises the trace above $N$, so any effective
rank is read from the projected trace rather than from $N$.
\end{proposition}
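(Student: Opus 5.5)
The argument runs in three steps: an entrywise concentration bound for each Kendall statistic, an entrywise transfer through the sine link, and an assembly into an operator-norm band, after which the radius and PSD-projection claims are deterministic.

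\textbf{Step 1: entrywise concentration.} For each pair $(j,k)$, $\widehat{\tau}_{jk}$ is a $U$-statistic of order two. Its kernel $\operatorname{sign}\big((x_{tj}-x_{sj})(x_{tk}-x_{sk})\big)$ is bounded by one. Under ellipticity, $\Exp\widehat{\tau}_{jk}=\tau_{jk}=(2/\pi)\arcsin B_{jk}$ \citep{lindskogmcneilschmock2003}, where the latent correlation coincides with the Pearson carrier because second moments are finite.

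Hoeffding's inequality for $U$-statistics gives
\[
\Pr\big(|\widehat{\tau}_{jk}-\tau_{jk}|>u\big)\le 2e^{-Tu^2/4}.
\]
This uses only boundedness of the kernel, so no moment condition enters. A union bound over the $N^2$ pairs then gives
\[
\max_{j,k}|\widehat{\tau}_{jk}-\tau_{jk}|\lesssim\sqrt{\log N/T}
\]
with high probability.

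\textbf{Step 2: transfer through the link.} Since $\sin(\tfrac{\pi}{2}\,\cdot)$ is $\tfrac{\pi}{2}$-Lipschitz, the same entrywise rate holds for $\widetilde{B}-B$. The diagonal is exact because $\widetilde{B}_{jj}=B_{jj}=1$.

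\textbf{Step 3: operator-norm assembly.} This is the main obstacle. The crude route is $\|\cdot\|\le N\max|\cdot|_{\infty}$. That yields only $N\sqrt{\log N/T}$, which is worse than the stated $\rad(B)\sqrt{N\log N/T}$.

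I would therefore not prove the sharp constant from scratch. Instead I would invoke the operator-norm result for the transformed Kendall matrix of \citet{hanliu2017} and \citet{wegkampzhao2016}, and check that it applies. Their argument writes $\widehat{\tau}$ as a $U$-statistic of rank-one matrices of spatial signs and applies a matrix Bernstein inequality for $U$-statistics, whose variance proxy is controlled by $\|B\|$. It then handles the nonlinearity of $\sin$ by a second-order Taylor expansion. The linear term is a matrix $U$-statistic controlled by matrix Bernstein. The quadratic remainder is a Hadamard product of the entrywise errors, which is bounded via the entrywise bound of Step 1.

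The honest caveat is that the $\rad(B)$ prefactor holds up to constants only under the conditions of those papers, and I would state it as imported.

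\textbf{Radius and PSD projection.} Given the operator-norm band, $|\rad(\widetilde{B})-\rad(B)|$ follows verbatim from Theorem~\ref{thm:finitesample}(i), i.e.\ Weyl's inequality. One caution: $\widetilde{B}$ need not be positive semidefinite, so I would use $\lambda_1$ rather than $\rad$ until after the projection, noting that $\rad(B)=\lambda_1(B)$ for the PSD population carrier.

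The projection onto the PSD cone is $\sum_i\max(\lambda_i,0)\,u_iu_i^\top$. It keeps every nonnegative eigenvalue and its eigenvector. So $\lambda_1$ is unchanged as soon as $\lambda_1(\widetilde{B})\ge0$, which holds because $\operatorname{tr}\widetilde{B}=N>0$. Clipping negative eigenvalues to zero raises the trace, since the new trace is $N+\sum_i|\lambda_i^-|\ge N$. Hence the effective rank must be read from the projected trace, which gives the final sentence of the statement.
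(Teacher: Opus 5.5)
Your proposal is correct and follows the same route as the paper, whose argument is the citation-backed sketch inside the statement itself: bounded-kernel $U$-statistic concentration for each $\widehat{\tau}_{jk}$, Lipschitz transfer through the sine link, the operator-norm rate imported from \citet{hanliu2017,wegkampzhao2016}, then Weyl's inequality and the positive-semidefinite-projection bookkeeping. Your observations that $\operatorname{tr}\widetilde{B}=N$ forces $\lambda_1(\widetilde{B})>0$, and that the clipped trace is $N+\sum_i|\lambda_i^-|$, are exactly what the final sentence of the statement needs. One correction to your Step~3 remark: the crude bound $N\sqrt{\log N/T}$ is worse than $\rad(B)\sqrt{N\log N/T}$ only when $\rad(B)<\sqrt{N}$. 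In the dominant-factor regime $\rad(B)\ge\sqrt{N}$, which covers the panel's regime means $\lambda_1\in[6.7,8.0]$ against $\sqrt{18}\approx4.2$, your elementary entrywise argument already delivers the stated bound, and the import is needed only for weakly concentrated carriers.
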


\begin{remark}[scope of the robust band]\label{rem:robustscope}
Three qualifications are explicit. The median-of-means block count on the covariance route is tuned to the
pre-fixed $t$, so the band holds at the chosen confidence and is not uniform in $t$. The covariance route attains
the effective-rank band but needs four moments with uniform $L^4$--$L^2$ equivalence, a condition over all
directions and strictly stronger than coordinatewise bounded kurtosis; the Kendall route is moment-free within the
elliptical class but at the looser ambient rate $\sqrt{(N\log N)/T}$, and distribution-freeness and the
effective-rank rate cannot be had together. The within-window independence behind both the median-of-means blocks
and the $U$-statistic inequality is violated by overlapping windows and serial dependence, so on the panel the
constant inflates and the exponential coverage weakens unless non-overlapping blocks or a mixing variant are used;
this is the same dependence caveat carried for Theorem~\ref{thm:finitesample}, and the dependence-robust
moving-block resampling carries the reported inference.
\end{remark}

\section{A structural microfoundation: the cash-flow network}\label{sec:microfound}

Section~\ref{sec:pricing} postulated the operator Euler equation~\eqref{eq:euler} and read the recovered
carrier $B$ as the one-period pricing operator. One may grant the spectral algebra and still deny the
economics: the recovered $B$ is a contemporaneous comovement object, not a structural pricing kernel, so why
should it appear inside the present-value resolvent at all? This section answers the question inside a named
model. A cash-flow network operator $A$ governs the propagation of dividend innovations; the log-linearised
present value is exactly the Neumann series in the discounted network operator $\rho A$; the recovered
contemporaneous covariance carrier is a strictly monotone spectral transform of $A$ sharing its eigenframe, while
the bounded correlation carrier $B$ shares that eigenframe under a regular-exposure condition and in general
approaches the same dominant mode as the edge is approached. Under that model the radius of convergence of the
pricing resolvent and the structural network edge are one scalar, and the certification gate states precisely
which parts of it the contemporaneous cross-section recovers. Outside the model we claim no such identity, exactly
as in Section~\ref{sec:edge}: the result is a conditional structural reading, not an unconditional one.

We name the maintained hypotheses explicitly.

\begin{assumption}[cash-flow network]\label{ass:network}
The cross-section carries an $N$-vector of log prices $p_t$, log dividends $d_t$, and dividend growth
$\Delta d_t$ in $\Hil$, with:
\begin{enumerate}
\item[(M1)] \emph{Log-linear present value.} Log returns obey the present-value identity of
\citet{campbellshiller1988}, $r_{i,t+1}=\kappa+\rho\,(p_{i,t+1}-d_{i,t+1})-(p_{i,t}-d_{i,t})+\Delta d_{i,t+1}$
to first order, with a common log-linearisation constant $\rho\in(0,1)$ and the no-bubble terminal condition
$\rho^{j}(p_{t+j}-d_{t+j})\to0$.
\item[(M2)] \emph{Common discount, cash-flow-driven returns.} Required log returns are constant across $t$ and
common across the cross-section, so unexpected returns are pure cash-flow news; equivalently the discount-rate
news term of \citet{campbell1991} is zero. This is the restriction most likely to be contested: if discount-rate news
is present and correlated with cash-flow news, the recovered carrier mixes the two channels and the structural
reading below is contaminated.
\item[(M3)] \emph{Network propagation of cash flows.} Dividend-growth innovations follow the first-order network
recursion $\Delta d_{t+1}=A\,\Delta d_t+\varepsilon_{t+1}$, where $A:\Hil\to\Hil$ is the cash-flow propagation
operator, an input-output operator in the sense of \citet{acemoglu2012network} or a granular loading in the
sense of \citet{gabaix2011} and \citet{longplosser1983}, with $\rho\,\rad(A)<1$ off the edge.
\item[(M4)] \emph{Self-adjoint network, isotropic primitives.} $A$ is self-adjoint and the primitive
innovations are isotropic, $\Exp[\varepsilon_{t+1}\varepsilon_{t+1}^{\top}]=\sigma^{2}I$. Departures, a directed
network or anisotropic shocks, are the non-normal and heterogeneous-exposure cases bounded as perturbations in
Sections~\ref{sec:normal} and \ref{sec:limits}; they move the direction reading to \proxyid\ and the level
reading to \unid\ while leaving the edge ordering intact.
\end{enumerate}
\end{assumption}

\begin{proposition}[the recovered carrier is the structural propagation operator]\label{prop:microfound}
Under Assumption~\ref{ass:network}:
\begin{enumerate}
\item[(i)] \emph{Present-value lift.} The log price-dividend vector and the unexpected log-return vector are
resolvent images of the discounted network operator,
\[
p_t-d_t=c\,\mathbf{1}+A(I-\rho A)^{-1}\Delta d_t,
\]
\[
u_{t+1}:=r_{t+1}-\Exp_t r_{t+1}=(I-\rho A)^{-1}\varepsilon_{t+1},
\]
for a scalar constant $c$, so the present-value operator is exactly the cascade Neumann series
$Q=(I-\rho A)^{-1}=\sum_{k\ge0}\rho^{k}A^{k}$ of Proposition~\ref{prop:pricing}, with the structural
$(A,\rho)$ in the role of $(B,\beta)$.
\item[(ii)] \emph{Recovery and the square-root identity.} The contemporaneous return-covariance carrier is
\[
\Sigma_u:=\Exp[u_{t+1}u_{t+1}^{\top}]=\sigma^{2}\,(I-\rho A)^{-2},
\]
a strictly increasing spectral transform of $A$ with the same eigenframe; equivalently the present-value
resolvent is the operator square root of the recovered carrier up to scale,
$Q=\sigma^{-1}\Sigma_u^{1/2}$. Hence the recovered covariance carrier and $A$ are simultaneously diagonalisable,
their eigenvalues are ordered identically, and the leading eigenvector $e_1(\Sigma_u)$ is the
structural systemic propagation mode $e_1(A)$.
\item[(iii)] \emph{Edge coincidence.} For $A\succeq0$, $\|Q\|=1/(1-\rho\,\rad(A))$ and the covariance radius
$\rad(\Sigma_u)=\sigma^{2}\|Q\|^{2}$ is a strictly increasing function of the network gain $\rho\,\rad(A)$ and
diverges precisely as $\rho\,\rad(A)\to1$. The bounded correlation reading $\rad(B)$ of Section~\ref{sec:setup}
shares the eigenframe of $\Sigma_u$ under the regular-exposure condition
$\operatorname{diag}((I-\rho A)^{-2})\propto I$, and in general its leading eigenvalue rises toward the rank-one
ceiling $N$ as the systemic mode comes to dominate $\Sigma_u$, that is as $\rho\,\rad(A)\to1$; the recovered
fragility edge $\beta\,\rad(B)=1$ and the structural edge $\rho\,\rad(A)=1$ are therefore the same event in this
limit, the collapse of the cross-section onto the systemic propagation mode. The single scalar of
Corollary~\ref{cor:edge} is the structural propagation edge.
\end{enumerate}
\end{proposition}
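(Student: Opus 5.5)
The plan is to derive everything from the Campbell--Shiller identity (M1) by iterating it forward, then read the covariance off the resolvent using the spectral calculus of Section~\ref{sec:setup}. For (i), I solve (M1) forward under the terminal condition $\rho^{j}(p_{t+j}-d_{t+j})\to0$:
\[
p_t-d_t=\tfrac{\kappa}{1-\rho}\mathbf{1}+\sum_{j\ge0}\rho^{j}\,\Exp_t\big[\Delta d_{t+1+j}-r_{t+1+j}\big].
\]
By (M2) the required return is a common constant, so it only contributes to the scalar $c$. By (M3), $\Exp_t\Delta d_{t+1+j}=A^{j+1}\Delta d_t$. Hence the cash-flow part equals $\sum_{j\ge0}\rho^{j}A^{j+1}\Delta d_t$, which is $A(I-\rho A)^{-1}\Delta d_t$. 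The Neumann series converges in operator norm by Proposition~\ref{prop:pricing}, applied with $(A,\rho)$ in place of $(B,\beta)$ and using $\rho\,\rad(A)<1$.

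For the unexpected return, I use the standard Campbell decomposition with the discount-rate news term set to zero by (M2):
\[
u_{t+1}=(\Exp_{t+1}-\Exp_t)\sum_{j\ge0}\rho^{j}\Delta d_{t+1+j}.
\]
From the recursion, $(\Exp_{t+1}-\Exp_t)\Delta d_{t+1+j}=A^{j}\varepsilon_{t+1}$, so $u_{t+1}=\sum_{j\ge0}\rho^jA^j\varepsilon_{t+1}=(I-\rho A)^{-1}\varepsilon_{t+1}$. This is the identification of $Q$ with the cascade series.

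For (ii), (M4) gives $\Sigma_u=Q\,\sigma^2I\,Q^{\top}$, and this equals $\sigma^2Q^2$ because $Q=(I-\rho A)^{-1}$ is self-adjoint when $A$ is. By the spectral theorem, $\Sigma_u=\int\sigma^2(1-\rho s)^{-2}\,dE_A(s)$. On $\spec(A)$ we have $s<1/\rho$ by the condition $\rho\,\rad(A)<1$. There $s\mapsto\sigma^2(1-\rho s)^{-2}$ is strictly increasing and positive, and $s\mapsto(1-\rho s)^{-1}$ is positive. Therefore $\sigma Q$ is the unique positive square root of $\Sigma_u$, which gives $Q=\sigma^{-1}\Sigma_u^{1/2}$. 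Both $\Sigma_u$ and $A$ are functions of the same projection-valued measure $E_A$, so they share the eigenframe. Strict monotonicity of the spectral map preserves the eigenvalue order, so $e_1(\Sigma_u)=e_1(A)$.

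For (iii), the norm formula is Proposition~\ref{prop:pricing} with $A\succeq0$. Then $\rad(\Sigma_u)=\sup_s\sigma^2(1-\rho s)^{-2}=\sigma^2\|Q\|^2=\sigma^2(1-\rho\,\rad(A))^{-2}$, which is strictly increasing in $\rho\,\rad(A)$ and divergent as it tends to $1$. For the correlation carrier, write $B=\Delta^{-1/2}\Sigma_u\Delta^{-1/2}$ with $\Delta=\operatorname{diag}\Sigma_u$. Under regular exposure, $\Delta$ is a scalar multiple of $I$, so $B$ is a positive multiple of $\Sigma_u$ and inherits its eigenframe directly.

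The general case is the step I expect to be the main obstacle. I will assume the top eigenvalue $\rad(A)$ is simple, with eigenvector $e_1$ having no zero coordinates. I then write
\[
\Sigma_u/\rad(\Sigma_u)=e_1e_1^{\top}+R,
\]
where $\|R\|\le\big((1-\rho\,\rad(A))/(1-\rho\lambda_2(A))\big)^2\to0$ as $\rho\,\rad(A)\to1$. Normalising by the diagonal is continuous wherever the limiting diagonal $e_{1,i}^2$ is bounded away from zero. So $B\to\operatorname{sgn}(e_1)\operatorname{sgn}(e_1)^{\top}$, a rank-one matrix with eigenvalue $\|\operatorname{sgn}(e_1)\|^2=N$, and by Weyl's inequality (Theorem~\ref{thm:finitesample}(i)) $\lambda_1(B)\to N$.

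Two points need to be stated explicitly. First, the non-vanishing coordinates of $e_1$ are needed; I would supply them via Perron--Frobenius for a nonnegative irreducible $A$. Second, the assertion that $\beta\,\rad(B)=1$ and $\rho\,\rad(A)=1$ are ``the same event'' is a limiting, monotone-ordering statement about the collapse onto $e_1$, valid for calibrated $\beta\in(1/N,1)$; it is not an exact equality of the two thresholds, and I would state it that way.
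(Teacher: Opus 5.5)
\textbf{Verdict: correct, largely the paper's route, with a real improvement in (iii) and one conclusion stated more weakly than it needs to be.}

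\textbf{Parts (i), (ii) and regular exposure.} These follow the paper's proof essentially step for step: forward iteration of (M1), the news decomposition with the discount-rate term removed by (M2), $\Sigma_u=\sigma^2QQ^{\ast}$, and the positive square root from the spectral calculus. Two small points favour your version. Your forward constant $\kappa/(1-\rho)$ is the correct one; the paper's display carries $\kappa$ twice, which is harmless since it is absorbed into $c$. Your observation that $s\mapsto(1-\rho s)^{-2}$ is strictly increasing on all of $(-\infty,1/\rho)$ also covers a self-adjoint but indefinite $A$. (M4) permits such an $A$, and the paper's restriction to $[0,1/\rho)$ does not cover it.

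\textbf{General case of (iii).} This is where you genuinely differ. The paper only asserts that $\lambda_1(B)$ ``rises monotonically toward its rank-one ceiling $N$''. You actually prove convergence, using $\Sigma_u/\rad(\Sigma_u)=e_1e_1^{\top}+R$ with $\|R\|\to0$, continuity and scale invariance of the diagonal standardisation, and Weyl's inequality.

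What this buys is that your extra hypotheses are necessary, not cosmetic.
\begin{itemize}
\item Without a simple top eigenvalue the claim fails: for $A=aI$ one has $B=I$ for every $\rho$.
\item If $e_1$ has a zero coordinate the limit falls short of $N$. For $A=\operatorname{diag}(J_2,\tfrac12)$ with $J_2$ the $2\times2$ all-ones matrix and $N=3$, $B$ stays block-diagonal and $\lambda_1(B)\to2$.
\end{itemize}
Your conditions are exactly what make the statement's ``in general'' true. Perron--Frobenius for an irreducible nonnegative $A$ delivers both simplicity and strictly positive $e_1$ at once. Your argument gives the limit but not monotonicity. The paper claims monotonicity without argument, and the statement needs only the limit.

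\textbf{The closing ``same event'' clause.} Here you concede more than necessary. The paper closes it by calibrating $\beta=1/\rad(B)$ at criticality, which in your rank-one limit is $\beta=1/N$. For every interior $\rho$ the carrier $\Sigma_u$ is positive definite. Hence $B$ is positive definite with trace $N$, so $\lambda_1(B)<N$ strictly, and by continuity $\lambda_1(B)$ is bounded away from $N$ on compact subintervals of $[0,1/\rad(A))$.

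Combined with your limit, this gives $\beta\,\lambda_1(B)<1$ throughout the structural interior and $\beta\,\lambda_1(B)\to1$ exactly as $\rho\,\rad(A)\to1$. With that calibration the two edges do coincide as a limiting event, which is what (iii) asserts. Your range $\beta\in(1/N,1)$ instead puts the recovered crossing strictly inside the structural interior. That is the ordering you describe, not the coincidence the statement claims.
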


\begin{proof}
(i) Solving the identity of (M1) forward and imposing the terminal condition gives
$p_t-d_t=\kappa/(1-\rho)+\Exp_t\sum_{j\ge0}\rho^{j}\big(\Delta d_{t+1+j}-(r_{t+1+j}-\kappa)\big)$, the
decomposition of \citet{campbellshiller1988} and \citet{campbell1991}. Under (M2) the required-return terms are
constant and fold into the scalar $c$. By (M3), $\Exp_t\Delta d_{t+1+j}=A^{j+1}\Delta d_t$, so
$p_t-d_t-c\,\mathbf{1}=\sum_{j\ge0}\rho^{j}A^{j+1}\Delta d_t=A\sum_{j\ge0}(\rho A)^{j}\Delta d_t
=A(I-\rho A)^{-1}\Delta d_t$, the geometric sum converging in operator norm because
$\rho\,\rad(A)<1$ (Proposition~\ref{prop:pricing}). For the return news, the decomposition of
\citet{campbell1991} gives $u_{t+1}=\sum_{j\ge0}\rho^{j}(\Exp_{t+1}-\Exp_t)\Delta d_{t+1+j}$ with the
discount-rate-news term zero under (M2). For $j=0$ the innovation is
$(\Exp_{t+1}-\Exp_t)\Delta d_{t+1}=\varepsilon_{t+1}$, and for $j\ge1$,
$(\Exp_{t+1}-\Exp_t)\Delta d_{t+1+j}=A^{j}\varepsilon_{t+1}$ by (M3); summing,
$u_{t+1}=\sum_{j\ge0}(\rho A)^{j}\varepsilon_{t+1}=(I-\rho A)^{-1}\varepsilon_{t+1}=Q\varepsilon_{t+1}$, which is
the Neumann series of Proposition~\ref{prop:pricing} with $(A,\rho)$ for $(B,\beta)$.

(ii) From (i) and (M4), $\Sigma_u=Q\,\Exp[\varepsilon\varepsilon^{\top}]Q^{*}=\sigma^{2}QQ^{*}
=\sigma^{2}(I-\rho A)^{-2}$, using $Q^{*}=Q$ since $A$ is self-adjoint. By the spectral theorem
\citep{reedsimon1980}, each eigenpair $(\lambda,e)$ of $A$ is an eigenpair $(\sigma^{2}(1-\rho\lambda)^{-2},e)$
of $\Sigma_u$; the map $\lambda\mapsto\sigma^{2}(1-\rho\lambda)^{-2}$ is strictly increasing on $[0,1/\rho)$, so
the eigenframe is shared and the eigenvalue ordering is preserved, whence
$e_1(\Sigma_u)=e_1(A)$. Functional calculus gives $\Sigma_u^{1/2}=\sigma(I-\rho A)^{-1}=\sigma Q$, that is
$Q=\sigma^{-1}\Sigma_u^{1/2}$.

(iii) For $A\succeq0$ the supremum in $\|Q\|=\sup_{\lambda\in\spec(A)}(1-\rho\lambda)^{-1}$ is attained at
$\lambda=\rad(A)$, giving $\|Q\|=1/(1-\rho\,\rad(A))$, and
$\rad(\Sigma_u)=\sigma^{2}(1-\rho\,\rad(A))^{-2}=\sigma^{2}\|Q\|^{2}$ by (ii), strictly increasing in the gain
and divergent as $\rho\,\rad(A)\uparrow1$. The bounded correlation reading $B$ is the standardisation
$D^{-1/2}\Sigma_uD^{-1/2}$ with $D=\operatorname{diag}(\Sigma_u)$, a positive-definite congruence; under the
regular-exposure condition $\operatorname{diag}((I-\rho A)^{-2})\propto I$ it preserves the eigenframe of
$\Sigma_u$, and in general the leading eigenvalue of $B$ rises monotonically toward its rank-one ceiling $N$ as
the systemic mode comes to dominate $\Sigma_u$, that is as $\rho\,\rad(A)\to1$. Calibrating the discount to the
recovered ceiling, $\beta=1/\rad(B)$ at criticality, the recovered edge $\beta\,\rad(B)=1$ and the structural
edge $\rho\,\rad(A)=1$ coincide.
\end{proof}

Proposition~\ref{prop:microfound} answers the reduced-form objection. The recovered contemporaneous carrier is
not an operator borrowed from a different theory; under Assumption~\ref{ass:network} it is, in covariance form,
the spectral image of the very operator $A$ that the present-value resolvent inverts, sharing its eigenframe and
its edge, and in the standardised correlation form of the body it shares the dominant mode in the edge limit. The
clean statement is the square-root identity $Q=\sigma^{-1}\Sigma_u^{1/2}$: the present-value cascade operator is
the operator square root of the recovered covariance carrier, up to the primitive shock scale. This is the
precise sense in which a contemporaneous comovement object appears inside present values. The body writes the
cascade in $B$-space as $(I-\beta B)^{-1}$ and the microfoundation writes it in $A$-space as
$(I-\rho A)^{-1}=\sigma^{-1}\Sigma_u^{1/2}$; the two writings differ in parametrisation, with $\beta\ne\rho$ and
$B\ne A$ as operators, but they share the single edge event, the divergence of the leading recovered eigenvalue,
which Corollary~\ref{cor:edge} reads three ways and which is here the structural network edge.

\begin{remark}[certification of the structural reading]\label{rem:microfoundgate}
The gate of Proposition~\ref{prop:gate} applies term by term to the structural objects.
The structural edge $\rho\,\rad(A)=1$, equivalently the spectral radius at which both the present-value
resolvent norm $\|Q\|$ and the systemic return variance $\rad(\Sigma_u)$ diverge, is \strong: it is a functional
of the spectrum alone, invariant under orthogonal rotations of the unresolved eigenframe, and recoverable above
the detectability threshold $\theta>\sqrt{q}$ of \citet{bbp2005}; it is the structural network edge.
The recoverable systemic direction is the leading eigenvector $e_1(B)$ of the standardised carrier; under the
regular-exposure condition it equals $e_1(\Sigma_u)=e_1(A)$, the structural propagation mode, and in general it
maps to that mode only up to the standardisation gap of part (iii). The bubble direction and the dividend edge
mass $m=\inner{e_1}{\widehat D}^{2}$ with $\widehat D=\mathbf{1}/\sqrt{N}$ are therefore \proxyid: a
leading-eigenvector object identified only up to the squared overlap of \citet{bbp2005} and, away from regular
exposure, up to the exposure standardisation.
The identity of the asset that carries the systemic mode is \unid\ by transpose invariance, and, separately, the
level of $\rad(A)$ and the split of the network gain $A$ from the primitive shock scale $\sigma^{2}$ are \unid\
from contemporaneous returns alone: by part (iii) the cross-section identifies $\sigma^{2}(1-\rho\,\rad(A))^{-2}$
as a single number, so without an external value for $\sigma$ or $\rho$ only the edge event and the eigenvalue
ordering are recovered, not the structural level. This is the identification boundary noted for the influence
vector of \citet{acemoglu2012network} and for the cascade-gain decomposition of
Proposition~\ref{prop:welfare}: the network is recovered, not observed, so the gain and the shock scale enter
only through their product.
\end{remark}

\section{Generalisation, eigenvector inference, and partial identification}\label{sec:hardening}

The strong results so far assume a self-adjoint carrier and certify the spectral radius, while the existence of a
bubble rests on a leading-eigenvector overlap. We harden the construction on four fronts. We relax
self-adjointness to normal and near-normal carriers; we bound, in finite samples, the eigenvector that carries
the edge mass; we locate the fluctuation of the recovered edge within the limit laws of large random matrices;
and we recast the three certification labels as a formal partial-identification statement.

\subsection{Normal carriers and near-normal perturbations}\label{sec:normal}

\begin{proposition}[normal carriers and near-normal perturbations]\label{prop:normal}
Let $B$ be a bounded operator on $\Hil$ with spectral radius $\rad(B)$.
\begin{enumerate}
\item[(i)] If $B$ is normal, $BB^\ast=B^\ast B$, and its spectral radius is attained at a simple real positive
eigenvalue $\lambda_1(B)=\rad(B)$, in particular if $B$ is self-adjoint and positive semi-definite as the
recovered correlation operator is, then Proposition~\ref{prop:pricing}, Theorem~\ref{thm:bubble} and
Corollary~\ref{cor:edge} hold verbatim with the distance to the pole read as
$\|Q\|=1/\dist(1,\beta\,\spec(B))$ over the complex spectrum: the cascade resolvent converges in norm if and only
if $\beta\,\rad(B)<1$, a bubble is an eigen-element at the real edge $1/\beta$, and the fragility edge
$\beta\,\rad(B)=1$ is unchanged. For a normal carrier whose peripheral spectrum is complex the convergence
criterion $\beta\,\rad(B)<1$ and the resolvent-norm formula still hold, but the edge need not be a real
eigenvalue, so the bubble-at-$1/\beta$ reading of Corollary~\ref{cor:edge} applies only at a real edge.
\item[(ii)] Let $B=B_0+E$ with $B_0$ normal and $E$ bounded. The exclusion theorem of \citet{bauerfike1960}
places every point of $\spec(B)$ within $\|E\|$ of $\spec(B_0)$, which gives the one-sided bound
$\rad(B)\le\rad(B_0)+\|E\|$. If in addition the top eigenvalue $\lambda_1(B_0)=\rad(B_0)$ is simple and real
positive with spectral gap $\gamma=\rad(B_0)-\max\{|\lambda|:\lambda\in\spec(B_0),\,\lambda\ne\lambda_1(B_0)\}>0$
and $\|E\|<\gamma/2$, then exactly one eigenvalue of $B$ lies within $\|E\|$ of $\lambda_1(B_0)$, it attains
$\rad(B)$, and the two-sided bound $\big|\rad(B)-\rad(B_0)\big|\le\|E\|$ holds; the fragility edge of $B$ then
differs from that of its normal part by at most $\beta\|E\|$. The strong label of Proposition~\ref{prop:gate}(i)
therefore extends to near-normal carriers with a simple, well-separated top mode, the departure $\|E\|$ carried as
a certified bound rather than assumed away.
\end{enumerate}
\end{proposition}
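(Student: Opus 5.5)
The plan is to work throughout on the complexification of $\Hil$, where the spectral theorem for bounded normal operators supplies a projection-valued measure $E_B$ on the compact set $\spec(B)\subset\mathbb{C}$. For part (i), I will first establish the two facts the self-adjoint proofs actually used.

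\begin{itemize}
\item[(a)] For normal $B$ one has $\|B\|=\rad(B)$ and $\|B^k\|=\|B\|^k$. The second follows from $\|B^{2}x\|\,\|x\|\ge\|B^\ast Bx\|\,\|x\|\ge\|Bx\|^2$, using $\|B^\ast y\|=\|By\|$, applied along powers of two, together with Gelfand's formula.
\item[(b)] $\|f(B)\|=\sup_{z\in\spec(B)}|f(z)|$ for bounded Borel $f$.
\end{itemize}

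With (a), the Neumann argument of Proposition~\ref{prop:pricing} goes through verbatim, so $Q$ converges if and only if $\beta\,\rad(B)<1$. With (b) and $f(z)=(1-\beta z)^{-1}$ we get $\|Q\|=1/\dist(1,\beta\,\spec(B))$.

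For Theorem~\ref{thm:bubble}, the only step beyond this functional calculus is the solvability criterion. For that I will use that normality gives $\ker(I-\beta B)=\ker(I-\beta B^\ast)$, since $\|(N)x\|=\|N^\ast x\|$ for the normal operator $N=I-\beta B$. Hence at an isolated eigenvalue $\operatorname{ran}(I-\beta B)$ is closed and equals $\ker(I-\beta B)^\perp$, which is exactly what the self-adjoint proof used. The hypothesis that $\rad(B)$ is attained at a simple real positive eigenvalue places the edge $1/\beta$ on the real axis, so the bubble-at-the-edge reading and Corollary~\ref{cor:edge} transfer. If instead the peripheral spectrum is complex, only the norm statements survive, as the proposition says.

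For part (ii), I will prove the Bauer--Fike exclusion in operator form rather than cite the matrix version. Take $z$ with $\dist(z,\spec(B_0))>\|E\|$. Then by (b), $\|(z-B_0)^{-1}\|=1/\dist(z,\spec(B_0))<1/\|E\|$. Writing $z-B=(z-B_0)\big(I-(z-B_0)^{-1}E\big)$ and inverting the second factor by a Neumann series shows $z\notin\spec(B)$. Every point of $\spec(B)$ therefore lies within $\|E\|$ of $\spec(B_0)$, and taking moduli gives $\rad(B)\le\rad(B_0)+\|E\|$.

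For the two-sided bound, let $\Delta$ be the closed disc of radius $\|E\|$ about $\lambda_1=\lambda_1(B_0)$, and let $K$ be the closed $\|E\|$-neighbourhood of the rest of $\spec(B_0)$. Points of $K$ have modulus at most $\lambda_1-\gamma+\|E\|$. Since $\|E\|<\gamma/2$, this is strictly less than $\lambda_1-\|E\|$, so $\Delta\cap K=\emptyset$. I then run the homotopy $B_t=B_0+tE$ for $t\in[0,1]$. Each $\spec(B_t)$ lies in $\Delta\cup K$ by the exclusion step applied to $tE$, so the circle $\Gamma$ of radius $\gamma/2$ about $\lambda_1$ never meets $\spec(B_t)$. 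The Riesz projection $P_t=\frac{1}{2\pi i}\oint_\Gamma(z-B_t)^{-1}dz$ is then norm-continuous in $t$. Idempotents at distance less than one have equal rank, so $\operatorname{rank}P_t$ stays equal to $\operatorname{rank}P_0=1$. Consequently $B$ has exactly one eigenvalue $\mu$ in $\Delta$, and it is simple.

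Because $B$ is a real operator, $\bar\mu$ is also an eigenvalue in $\Delta$, so uniqueness forces $\mu$ to be real. Every other spectral point lies in $K$ and so has modulus less than $\lambda_1-\|E\|\le|\mu|$. Hence $\rad(B)=|\mu|$, and $|\rad(B)-\rad(B_0)|\le|\mu-\lambda_1|\le\|E\|$. Multiplying by $\beta$ gives the stated bound on the edge, which extends the strong label of Proposition~\ref{prop:gate}(i).

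I expect the main obstacle to be this counting step in infinite dimensions. Bauer--Fike alone only localises the spectrum; it does not say that $\Delta$ contains exactly one spectral point. The Riesz-projection continuity argument, together with the finite rank of $P_0$ at the isolated simple eigenvalue, is what closes that gap. I will check carefully that the contour $\Gamma$ stays in the resolvent set uniformly in $t$, which is precisely where the strict inequality $\|E\|<\gamma/2$ is used.
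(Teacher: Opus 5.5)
Your proposal is correct and follows the paper's route. Part (i) rests on the spectral theorem for normal operators, and part (ii) on Bauer--Fike localisation, disjointness of the $\|E\|$-neighbourhoods from $\|E\|<\gamma/2$, norm-continuity of the spectral (Riesz) projection to count the eigenvalue near $\lambda_1(B_0)$, and the final modulus comparison. Your execution is tighter than the paper's in infinite dimensions in three places, none of which changes the route:
\begin{itemize}
\item the resolvent estimate $\|(z-B_0)^{-1}\|=1/\dist(z,\spec(B_0))$ replaces the paper's eigenbasis, condition-number form of Bauer--Fike;
\item the homotopy $B_0+tE$, with a fixed contour and rank-constancy of nearby idempotents, makes explicit the continuity step the paper only cites;
\item the conjugation argument that the perturbed top eigenvalue is real is an addition the paper does not make, and does not need for the stated bounds.
\end{itemize}
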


\begin{proof}
(i) The spectral theorem holds for normal operators \citep{reedsimon1980}, giving a projection-valued measure on
$\spec(B)\subset\mathbb{C}$ with $f(B)=\int f\,dE_B$ and $\|f(B)\|=\sup_{s\in\spec(B)}|f(s)|$. Hence
$Q=\int(1-\beta s)^{-1}dE_B(s)$ has finite norm if and only if $1\notin\overline{\beta\,\spec(B)}$, and the
Neumann series $\sum\beta^kB^k$ converges in norm if and only if $\beta\,\rad(B)<1$, the argument of
Proposition~\ref{prop:pricing} with moduli. When $\rad(B)$ is attained at a simple real positive eigenvalue,
$1/\beta$ is an eigenvalue at the edge exactly as in Theorem~\ref{thm:bubble}(iii) and Corollary~\ref{cor:edge},
which are therefore unchanged; for a complex peripheral spectrum the convergence criterion is unaffected but the
edge eigen-element reading holds only at a real edge. (ii) For $B_0$ normal the eigenvector basis is orthonormal,
so its condition number is one, and the Bauer-Fike exclusion theorem gives $\dist(\lambda,\spec(B_0))\le\|E\|$ for
every $\lambda\in\spec(B)$, whence $\rad(B)\le\rad(B_0)+\|E\|$. Under the gap hypothesis the disk
$\{|z-\lambda_1(B_0)|\le\|E\|\}$ is disjoint from the disks about the remaining eigenvalues, whose centres have
modulus at most $\rad(B_0)-\gamma$ and which therefore stay below $\rad(B_0)-\gamma/2$, while the first disk lies
above $\rad(B_0)-\gamma/2$. The spectral projection of the isolated eigenvalue $\lambda_1(B_0)$ is norm-continuous
in the perturbation \citep{reedsimon1980}, so exactly one eigenvalue $\lambda^\ast$ of $B$, counted with
multiplicity, lies in that disk, with $|\lambda^\ast-\lambda_1(B_0)|\le\|E\|$. Every other eigenvalue has modulus
below $\rad(B_0)-\gamma/2<\rad(B_0)-\|E\|\le|\lambda^\ast|$, so $\rad(B)=|\lambda^\ast|$ and
$\big|\rad(B)-\rad(B_0)\big|=\big|\,|\lambda^\ast|-\lambda_1(B_0)\,\big|\le|\lambda^\ast-\lambda_1(B_0)|\le\|E\|$;
multiplying by $\beta$ gives the edge bound.
\end{proof}

\subsection{Finite-sample inference on the bubble direction}\label{sec:daviskahan}

Theorem~\ref{thm:finitesample} bounds the recovered radius. The edge mass $\mu_{\widehat{D}}(\{\rad(B)\})=
\inner{e_1}{\widehat{D}}^2$ of \eqref{eq:edgemass}, the proxy for bubble existence, depends instead on the
leading eigenvector $e_1$, which we now control.

\begin{theorem}[finite-sample edge-mass band]\label{thm:daviskahan}
Let $e_1$ and $\widehat{e}_1$ be unit leading eigenvectors of $B$ and $\widehat{B}$, and suppose the leading
eigenvalue of $B$ is simple with spectral gap $g=\lambda_1(B)-\lambda_2(B)>0$. Then by the $\sin\Theta$ theorem
of \citet{daviskahan1970} in the form of \citet{yuwangsamworth2015} there is a sign $s\in\{-1,+1\}$ with
\[
\big\|s\,\widehat{e}_1-e_1\big\|\le\frac{2\sqrt{2}\,\|\widehat{B}-B\|}{g}.
\]
Consequently the estimated edge mass $\widehat{m}=\inner{\widehat{e}_1}{\widehat{D}}^2$ and its population
counterpart $m=\inner{e_1}{\widehat{D}}^2$ satisfy, since $\|\widehat{D}\|=1$ and the edge mass is invariant to the
sign of the eigenvector,
\[
\big|\sqrt{\widehat{m}}-\sqrt{m}\big|\le\frac{2\sqrt{2}\,\|\widehat{B}-B\|}{g},
\]
and, combining with Theorem~\ref{thm:finitesample}(ii), with probability at least $1-\delta$,
$\big|\sqrt{\widehat{m}}-\sqrt{m}\big|\le 2\sqrt{2}\,\varepsilon_{T,\delta}/g$.
\end{theorem}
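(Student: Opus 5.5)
The argument has three links: the eigenvector perturbation bound, a reduction from the edge mass to an eigenvector distance, and the probabilistic band.

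\emph{First link: the eigenvector bound.} I apply the Davis--Kahan $\sin\Theta$ theorem in the variant of \citet{yuwangsamworth2015}, specialised to one leading eigenvector ($r=s=1$) of the symmetric pair $B,\widehat{B}$. That variant requires a gap only for the population operator, namely $g=\lambda_1(B)-\lambda_2(B)>0$, which is assumed. It gives
\[
\sin\Theta(\widehat{e}_1,e_1)\le \frac{2\|\widehat{B}-B\|}{g}.
\]
Choose the sign $s$ so that $\inner{s\widehat{e}_1}{e_1}\ge0$. Then
\[
\|s\widehat{e}_1-e_1\|^2=2-2|\inner{\widehat{e}_1}{e_1}|\le 2\sin^2\Theta,
\]
using $1-\cos\theta\le\sin^2\theta$ for $\cos\theta\in[0,1]$. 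Hence $\|s\widehat{e}_1-e_1\|\le\sqrt2\,\sin\Theta\le 2\sqrt2\,\|\widehat{B}-B\|/g$. This is the stated Yu--Wang--Samworth corollary, and I would cite it directly.

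\emph{Second link: from edge mass to eigenvector distance.} Since $\sqrt{m}=|\inner{e_1}{\widehat{D}}|$ and $\sqrt{\widehat m}=|\inner{\widehat{e}_1}{\widehat{D}}|=|\inner{s\widehat{e}_1}{\widehat{D}}|$, the edge mass is sign-invariant. The reverse triangle inequality and Cauchy--Schwarz with $\|\widehat{D}\|=1$ then give
\[
\big|\sqrt{\widehat m}-\sqrt m\big|\le |\inner{s\widehat{e}_1-e_1}{\widehat{D}}|\le \|s\widehat{e}_1-e_1\|.
\]
Chaining this with the first link yields the deterministic bound in the statement.

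\emph{Third link: the probabilistic band.} The deterministic bound holds pointwise on the sample space. Intersecting it with the event of Theorem~\ref{thm:finitesample}(ii), on which $\|\widehat{B}-B\|\le\varepsilon_{T,\delta}$, gives the bound $2\sqrt2\,\varepsilon_{T,\delta}/g$ with probability at least $1-\delta$. No union bound is needed, because the first two links are deterministic.

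The only delicate point is the first link. I need the Davis--Kahan variant whose denominator is the population gap $g$ rather than a sample-dependent gap, and I must convert correctly from $\sin\Theta$ to a signed Euclidean distance. If that variant is unavailable in exactly this form, I would fall back on the classical bound $\sin\Theta\le\|\widehat{B}-B\|/(g-\|\widehat{B}-B\|)$. When $\|\widehat{B}-B\|\le g/2$ this is at most $2\|\widehat{B}-B\|/g$. When $\|\widehat{B}-B\|>g/2$ the claimed bound exceeds $\sqrt2$, while $\|s\widehat{e}_1-e_1\|\le\sqrt2$ always holds for the sign-aligned pair, so the claim is trivial. This covers both cases and recovers the constant $2\sqrt2$.
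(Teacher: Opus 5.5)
Your proposal is correct and follows the paper's proof step for step: the Yu--Wang--Samworth Corollary~1 bound $\sin\Theta\le 2\|\widehat{B}-B\|/g$, the sign-aligned chord conversion that gives the constant $2\sqrt{2}$, then the reverse triangle inequality with Cauchy--Schwarz, and finally substitution of the band of Theorem~\ref{thm:finitesample}(ii). Your inequality $1-\cos\theta\le\sin^2\theta$ is equivalent to the paper's $2\sin(\theta/2)\le\sqrt{2}\sin\theta$, and your fallback (classical Davis--Kahan with Weyl when $\|\widehat{B}-B\|\le g/2$, trivial bound otherwise) is also valid, being essentially how that corollary is proved.
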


\begin{proof}
The $\sin\Theta$ bound $\sin\angle(\widehat{e}_1,e_1)\le 2\|\widehat{B}-B\|/g$ for the leading eigenvector of a
symmetric operator with a simple top eigenvalue is \citet{yuwangsamworth2015}, Corollary~1; the chord bound
$\min_{s\in\{-1,1\}}\|s\widehat{e}_1-e_1\|=2\sin(\angle/2)\le\sqrt{2}\,\sin\angle$ gives the stated constant
$2\sqrt{2}$. For the edge mass, $\sqrt{m}=|\inner{e_1}{\widehat{D}}|$ and, choosing the sign $s$ above,
$\big|\,|\inner{\widehat{e}_1}{\widehat{D}}|-|\inner{e_1}{\widehat{D}}|\,\big|
=\big|\,|\inner{s\widehat{e}_1}{\widehat{D}}|-|\inner{e_1}{\widehat{D}}|\,\big|
\le|\inner{s\widehat{e}_1-e_1}{\widehat{D}}|\le\|s\widehat{e}_1-e_1\|\,\|\widehat{D}\|$
by the reverse triangle inequality and Cauchy-Schwarz, with $\|\widehat{D}\|=1$. Substituting the chord bound and
then Theorem~\ref{thm:finitesample}(ii) gives the two displays.
\end{proof}

The band widens as the spectral gap $g$ closes: a near-degenerate top eigenspace makes the bubble direction, and
hence the bubble content, statistically unrecoverable, which is the finite-sample face of the proxy-to-unidentified
transition. The edge mass is thus \proxyid\ with an explicit two-sided band, sharpening
Proposition~\ref{prop:gate}(ii) from a random-matrix overlap into a usable interval. The gap $g=\lambda_1(B)-
\lambda_2(B)$ is a population quantity; a feasible, data-computable band replaces it with the plug-in lower bound
$\widehat{g}-2\|\widehat{B}-B\|$, where $\widehat{g}=\lambda_1(\widehat{B})-\lambda_2(\widehat{B})$, since by the
Weyl inequality each eigenvalue moves by at most $\|\widehat{B}-B\|$, so $g\ge\widehat{g}-2\|\widehat{B}-B\|$. The
resulting band is conservative and is reported only while this lower bound is positive, that is while the top mode
remains separated after accounting for the recovery error of Theorem~\ref{thm:finitesample}.

\subsection{The fluctuation of the recovered edge}\label{sec:tracywidom}

\begin{proposition}[edge fluctuation and the detection null]\label{prop:tracywidom}
Let $\widehat{B}$ be recovered from $T$ observations of $N$ coordinates with aspect ratio $q=N/T$.
\begin{enumerate}
\item[(i)] Under the null of no spike, that is independent standard Gaussian coordinates, the largest sample
eigenvalue, recentred by the bulk edge $(1+\sqrt{q})^2$ and rescaled by the corresponding Tracy-Widom scale,
converges in distribution to the Tracy-Widom law of order one: for the sample covariance matrix by
\citet{johnstone2001}, and for the sample correlation matrix, the object recovered here, by
\citet{baopanzhou2012}, the limit being the law of \citet{tracywidom1994}.
\item[(ii)] Above the detectability threshold $\theta>\sqrt{q}$ of Proposition~\ref{prop:gate}, the leading
sample eigenvalue has the almost-sure limit $(1+\theta)(1+q/\theta)$ with Gaussian fluctuations of order
$T^{-1/2}$, and the squared overlap of the sample and population leading eigenvectors converges to
$(1-q/\theta^2)/(1+q/\theta)$ \citep{benaychnadakuditi2011}.
\end{enumerate}
\end{proposition}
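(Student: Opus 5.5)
This proposition assembles established limit laws, so the plan is to place $\widehat{B}$ in the exact frameworks of the cited results and check that their hypotheses hold. No new random-matrix estimate is needed. I will treat the two parts separately. In each, I first state the regime explicitly: $N,T\to\infty$ with $N/T\to q\in(0,\infty)$. Then I identify the normalisation under which the cited theorem applies.

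For part (i), the null is the identity population matrix with independent standard Gaussian coordinates. The steps are as follows.
\begin{itemize}
\item For the \emph{sample covariance} reading, I invoke \citet{johnstone2001} directly. The largest eigenvalue of the real Wishart matrix, centred at $(\sqrt{T-1}+\sqrt{N})^2/T$ and scaled by the matching $T^{-2/3}$ factor, converges to the TW$_1$ law of \citet{tracywidom1994}.
\item I then observe that this centring is $(1+\sqrt{q})^2+o(T^{-2/3})$ after rescaling, so the stated recentring by the bulk edge is legitimate.
\item For the \emph{correlation} object actually recovered, the obstacle is the diagonal standardisation by sample variances. Here I appeal to \citet{baopanzhou2012}, who show that the same Tracy--Widom limit holds for the sample correlation matrix with identical centring and scaling.
\item To use that result, I only need to check its hypotheses: independent coordinates with sufficiently many moments, and Gaussian data a fortiori. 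I also note that the error from standardisation is of smaller order than $T^{-2/3}$ at the edge.
\end{itemize}
This standardisation step is the point I expect to require the most care, because a naive Weyl-type perturbation bound from Theorem~\ref{thm:finitesample}(i) only gives $O(T^{-1/2})$. That rate is too crude for the $T^{-2/3}$ edge scale. So I will rely on the cited theorem rather than a perturbation argument.

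For part (ii), I work in the spiked model with one population eigenvalue $1+\theta$, $\theta>\sqrt{q}$, and the rest equal to one. The steps are as follows.
\begin{itemize}
\item The almost-sure limit $(1+\theta)(1+q/\theta)$ follows from the spiked-model analysis of \citet{baiksilverstein2006} and, in general form, from \citet{benaychnadakuditi2011}. Their limit is obtained by inverting the $D$-transform of the Marchenko--Pastur law at $1/\theta$. I will show this inversion yields the stated closed form.
\item The same paper gives the squared eigenvector overlap $(1-q/\theta^2)/(1+q/\theta)$ above threshold. This matches the overlap already used in Proposition~\ref{prop:gate}(ii).
\item The Gaussian $T^{-1/2}$ fluctuations are the supercritical regime of \citet{bbp2005}, with the real case treated by Paul and by Bai--Yao.
\end{itemize}

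Finally, I check consistency at the threshold. At $\theta=\sqrt{q}$ the limit $(1+\theta)(1+q/\theta)$ equals $(1+\sqrt{q})^2$, the bulk edge, and the overlap vanishes. This confirms that parts (i) and (ii) meet at the detectability boundary of Proposition~\ref{prop:gate}.
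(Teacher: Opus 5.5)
Your route is the paper's: its proof is nothing more than the citations to \citet{johnstone2001} and \citet{baopanzhou2012} for (i) and \citet{benaychnadakuditi2011} for (ii). Several of your additions improve on it, notably sourcing the $T^{-1/2}$ Gaussian fluctuations in \citet{bbp2005}, Paul and Bai--Yao, since \citet{benaychnadakuditi2011} proves only almost-sure limits of the outlier and of the eigenvector projection.

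Two hypothesis checks in (i) should be sharpened. First, \citet{baopanzhou2012} assume $N/T\to q\in(0,1)$ with symmetric sub-exponential entries, and cover the mean-centred correlation only for Gaussian data, so your regime $q\in(0,\infty)$ must be narrowed. Second, your recentring check is valid because $q$ is the finite-sample ratio $N/T$, as the statement defines it. Centring at the limiting ratio would cost $O(|N/T-q|)$, which need not be $o(T^{-2/3})$.

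In (ii) the mechanism is misnamed. In \citet{benaychnadakuditi2011} the spiked sample covariance is a multiplicative perturbation, and the outlier is located by the $T$-transform $T_\mu(z)=\int t/(z-t)\,d\mu(t)=zG_\mu(z)-1$ of the Marchenko--Pastur law $\mu_q$. With $qzG^2-(z-1+q)G+1=0$, the equation $\lambda G_{\mu_q}(\lambda)=1+1/\theta$ gives $\lambda=(1+\theta)(1+q/\theta)$. At the bulk edge $(1+\sqrt q)^2$, $T_{\mu_q}$ equals $1/\sqrt q$, which yields the threshold $\theta>\sqrt q$. The $D$-transform belongs to the rectangular information-plus-noise companion paper; it acts on the singular-value law and is inverted at $1/\theta^2$. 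It returns the same number here only through a Gaussian rotational-invariance equivalence between the two rank-one models, which you would have to supply.

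The substantive gap is that (ii) is argued for the spiked sample covariance, while the statement concerns $\widehat B$, the sample correlation. The standardisation you rightly flagged in (i) silently disappears.

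For the almost-sure eigenvalue limit and the squared overlap this is harmless. Let $\Delta$ be the diagonal of population variances and $\hat d_i$ the sample variances. Then $\max_i|\hat d_i/\Delta_{ii}-1|=O(\sqrt{\log N/T})\to0$ almost surely, so $\|\widehat B-\Delta^{-1/2}\widehat\Sigma\Delta^{-1/2}\|\to0$. Weyl's inequality together with a $\sin\Theta$ bound then transfers the limits from the population-standardised covariance, whose population matrix is $B$; the $\sin\Theta$ step works because the outlier stays separated from the bulk edge when $\theta>\sqrt q$.

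For the Gaussian $T^{-1/2}$ fluctuations the transfer fails, for the very reason you gave in (i). The Weyl bound on the standardisation error is only $O(\sqrt{\log N/T})$, which is not small against $T^{-1/2}$, so Paul and Bai--Yao do not deliver the claim for $\widehat B$. You need a spiked central limit theorem for sample correlation matrices, such as that of Morales-Jimenez, Johnstone, McKay and Yang (2021). It gives Gaussian $T^{-1/2}$ fluctuations, but with a limiting variance that in general differs from the covariance one. The paper's one-line proof shares this omission, so this is a step both arguments must add.
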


\begin{proof}
Part (i) is the largest-eigenvalue limit theorem of \citet{johnstone2001} for the sample covariance matrix and of
\citet{baopanzhou2012} for the sample correlation matrix, whose limit is the Tracy-Widom distribution of
\citet{tracywidom1994}; part (ii) is the outlier eigenvalue and eigenvector limit of
\citet{benaychnadakuditi2011} for finite-rank perturbations of a bulk, specialised to a single spike.
\end{proof}

Proposition~\ref{prop:tracywidom} supplies the missing boundary uncertainty. Part (i) is a calibrated null that
turns the strong reading of the radius into a test: a recovered edge counts as a genuine spike, and not a bulk
fluctuation, exactly when it lies in the Tracy-Widom right tail beyond the bulk. Part (ii) is the source of the
proxy band of Proposition~\ref{prop:gate}(ii) and of the edge-mass band of Theorem~\ref{thm:daviskahan}, since
the limiting overlap is precisely the squared-overlap factor there. The edge is \strong\ when it clears both the
Tracy-Widom bulk and the detectability threshold, and \unid\ below them. The Tracy-Widom null is an idealised
reference distribution, derived under independent coordinates; the empirical panel is stationary in returns but
heavy-tailed and cross-sectionally dependent, so the null is used as a calibrated benchmark for the bulk edge
rather than as an exact finite-sample distribution, and the operative criterion remains the detectability
threshold $\theta>\sqrt{q}$.

\subsection{The certification gate as partial identification}\label{sec:partialid}

The three labels are not informal. They are a partial-identification statement in the sense of
\citet{tamer2010}: each component of the structural object has an identified set, and the labels record whether
that set is a point, a proper subset, or the whole space.

\begin{theorem}[partial identification of the bubble triple]\label{thm:partialid}
Let the structural object be the triple $(\kappa,m,i^\ast)$, where $\kappa=\beta\,\rad(B)$ is the discounted
spectral radius, $m=\inner{e_1}{\widehat{D}}^2$ is the edge mass, and $i^\ast$ is the coordinate that carries the
bubble eigenvector. Under the recovery model of Proposition~\ref{prop:gate} with aspect ratio $q$ and a spike
$\theta>\sqrt{q}$, the identified sets are as follows.
\begin{enumerate}
\item[(i)] The identified set for $\kappa$ is the singleton $\{\beta\,\rad(B)\}$ after inversion of the
random-matrix bias, with a finite-sample confidence interval of half-width $\beta\,\varepsilon_{T,\delta}$ by
Theorem~\ref{thm:finitesample}, the conservative raw-radius half-width carried through the invertible bias: $\kappa$
is \strong, point-identified.
\item[(ii)] An outer (bounding) set for $m$ is an interval $[m_{\mathrm{lo}},m_{\mathrm{hi}}]\subset(0,1)$ whose
width is set by the squared-overlap deficiency $1-(1-q/\theta^2)/(1+q/\theta)$ of
Proposition~\ref{prop:tracywidom}(ii) and the gap-dependent band of Theorem~\ref{thm:daviskahan}; the set is
nondegenerate above threshold and widens to $[0,1]$ as $\theta\downarrow\sqrt{q}$ or $g\downarrow0$: $m$ is
\proxyid, set-identified within these computable outer bounds, and sharpness of the bounds is not claimed.
\item[(iii)] The bounding set for $i^\ast$ is the entire coordinate set $\{1,\dots,N\}$: the cross-section is
exchangeable under the coordinate permutations that fix the equal-weighted aggregate-claim direction
$\widehat{D}=\mathbf{1}/\sqrt{N}$, so every coordinate is consistent with the recovered spectrum and edge mass:
$i^\ast$ is \unid, the bounding set is the whole space.
\end{enumerate}
\end{theorem}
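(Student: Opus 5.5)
The proof treats the three components separately. In each case the plan is to write down the observational equivalence class of the recovery model, meaning the set of structural values compatible with the recovered spectrum together with the random-matrix limits. That set is then either reduced to a point, bounded from outside, or shown to be everything. The framework is that of \citet{tamer2010}. The recovery model of Proposition~\ref{prop:gate} is taken as given, together with the limits in Proposition~\ref{prop:tracywidom}(ii). Only identifiability of the population objects from that model is argued; no new concentration inequality is needed.

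\emph{Part (i).} By Proposition~\ref{prop:tracywidom}(ii), for $\theta>\sqrt{q}$ the map $\theta\mapsto(1+\theta)(1+q/\theta)$ sends the spike to the almost-sure limit of the leading sample eigenvalue. Its derivative is $1-q/\theta^2>0$, so it is strictly increasing on $(\sqrt{q},\infty)$. The first step is to record this, which shows that the recovered leading eigenvalue determines $\theta$ uniquely. That inversion is the ``invertible bias'' of Proposition~\ref{prop:gate}(i). Since $\rad(B)=1+\theta$ in the spiked model and $\beta$ is calibrated, the identified set for $\kappa$ is the singleton $\{\beta\,\rad(B)\}$. The finite-sample half-width comes directly from Theorem~\ref{thm:finitesample}(iii), which gives $\beta\varepsilon_{T,\delta}$ for the raw radius. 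A monotone inverse with derivative bounded near the estimate carries this width over with at most a constant change. I would state this as the conservative half-width, as the statement does, without tracking the constant.

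\emph{Part (ii).} Write $\widehat m=\inner{\widehat e_1}{\widehat D}^2$ and decompose $\widehat e_1=\sqrt{\phi}\,s\,e_1+\sqrt{1-\phi}\,w$. Here $w\perp e_1$ is a unit vector and $\phi=(1-q/\theta^2)/(1+q/\theta)$ is the limiting squared overlap. Rotation invariance of the bulk leaves $w$ unrestricted within $e_1^\perp$. The observationally equivalent values of $m$ are therefore those $m$ with
\[
\sqrt{\widehat m}\in\big[\,|\sqrt{\phi m}-\sqrt{(1-\phi)(1-m)}|,\ \sqrt{\phi m}+\sqrt{(1-\phi)(1-m)}\,\big].
\]
This follows from Cauchy--Schwarz together with the fact that $\|\Pi_{e_1^\perp}\widehat D\|^2=1-m$. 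Inverting this relation in $m$ gives an outer interval whose width is controlled by $1-\phi$. Intersecting it with the Davis--Kahan band of Theorem~\ref{thm:daviskahan}, $|\sqrt{\widehat m}-\sqrt m|\le2\sqrt2\varepsilon_{T,\delta}/g$, gives $[m_{\mathrm{lo}},m_{\mathrm{hi}}]$. The limiting behaviour is then read off. As $\theta\downarrow\sqrt q$ we have $\phi\to0$, the constraint becomes vacuous and the set is $[0,1]$. As $g\downarrow0$ the band diverges, with the same effect. For $\phi>0$ and $g>0$ the interval is nondegenerate and proper. Sharpness is not claimed, because the inversion uses only the worst-case alignment of $w$.

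The main obstacle is showing that the set lies inside the open interval $(0,1)$, as the statement asserts. The inversion above can reach $0$ or $1$ when $\widehat m$ is extreme. I would handle this by noting that, for a nonnegative correlation carrier with $\theta>\sqrt q$, Perron--Frobenius makes $e_1$ entrywise nonnegative, so $m\ge1/N$. I would also use that $m=1$ forces $B$ to commute with the all-ones matrix, a non-generic case to be excluded by assumption. If this argument does not go through, I would weaken the claim to a subset of $[0,1]$ that is proper above threshold.

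\emph{Part (iii).} Let $\sigma$ be any permutation of coordinates, with permutation matrix $P_\sigma$. Then $P_\sigma\widehat D=\widehat D$, and $P_\sigma^\top BP_\sigma$ has the same spectrum as $B$, by Proposition~\ref{prop:gate}(iii) with $U=P_\sigma$. Its leading eigenvector is $P_\sigma^\top e_1$, and the edge mass is unchanged: $\inner{P_\sigma^\top e_1}{\widehat D}^2=\inner{e_1}{P_\sigma\widehat D}^2=m$. The operator $P_\sigma^\top BP_\sigma$ therefore reproduces every recovered quantity, namely $\kappa$, $m$ and the whole spectrum, while moving the carrying coordinate from $i^\ast$ to $\sigma^{-1}(i^\ast)$. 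Because the permutation group acts transitively on $\{1,\dots,N\}$, every coordinate lies in the identified set. Hence that set is the whole coordinate set and $i^\ast$ is unidentified.
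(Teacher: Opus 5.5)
Your parts (i) and (iii) follow the paper's proof. The paper simply cites Proposition~\ref{prop:gate}(i) with Theorem~\ref{thm:finitesample}(iii); your inversion of $\theta\mapsto(1+\theta)(1+q/\theta)$, strictly increasing because its derivative is $1-q/\theta^2>0$, is exactly the ``invertible bias'' the paper leaves implicit. The permutation argument in (iii) is the same as the paper's. Part (ii) is where you go beyond the paper, whose proof only asserts in one sentence that the overlap factor and the Davis--Kahan band confine $m$ to a proper interval that widens to $[0,1]$. Your decomposition $\widehat{e}_1=\sqrt{\phi}\,s\,e_1+\sqrt{1-\phi}\,w$ is the right way to make that concrete. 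However, the displayed constraint is wrong, and the limiting claims do not follow from it.

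\textbf{The lower endpoint.} Because $\inner{w}{\widehat{D}}$ sweeps all of $[-\sqrt{1-m},\sqrt{1-m}]$, the inner product $\inner{\widehat{e}_1}{\widehat{D}}$ can pass through zero. So the lower endpoint must be $\max\{0,\sqrt{\phi m}-\sqrt{(1-\phi)(1-m)}\}$, not the absolute value. As written, your set excludes attainable values of $\widehat{m}$, so it is not an outer set. At $\phi=0$ it pins $\widehat{m}=1-m$ exactly, which is the opposite of vacuous.

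\textbf{The corrected set.} Corrected, the constraint is the triangle inequality for the angles between the lines spanned by $e_1$, $\widehat{e}_1$ and $\widehat{D}$. With $(x)_+=\max\{x,0\}$, it inverts in closed form to
\[
m\in\Big[\big(\sqrt{\phi\widehat{m}}-\sqrt{(1-\phi)(1-\widehat{m})}\big)_+^2,\ \big(\sqrt{\phi\widehat{m}}+\sqrt{(1-\phi)(1-\widehat{m})}\big)^2\Big].
\]
When the lower end is positive, the width is $4\sqrt{\phi(1-\phi)\widehat{m}(1-\widehat{m})}$, of order $\sqrt{1-\phi}$.

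This formula disposes of your ``main obstacle'' without Perron--Frobenius: the set lies in $(0,1)$ exactly when $\phi+\widehat{m}>1$ and $\phi\ne\widehat{m}$, which holds on the panel. Your Perron--Frobenius route does not work as intended. Folding $m\ge 1/N$ into the set would stop it from ever reaching $[0,1]$. Excluding the single point $m=1$ does not yield $m_{\mathrm{hi}}<1$.

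\textbf{The limiting claims.} The same formula shows that as $\phi\to0$ the set tends to $[0,1-\widehat{m}]$, not $[0,1]$. Reaching $[0,1]$ requires $\widehat{m}\to0$ at the threshold, that is, the Haar delocalisation of $\widehat{e}_1$. That is precisely the distributional content you discarded by calling $w$ ``unrestricted''. Rotation invariance makes $w$ Haar-distributed, not arbitrary. Used above threshold, it would give $\widehat{m}/\phi\to m$ and collapse the set to a point.

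Separately, an intersection becomes $[0,1]$ only when both constraints become vacuous. If $g\downarrow0$ with $\phi>0$ fixed, the overlap constraint still binds, so your phrase ``with the same effect'' is false.

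The paper's proof makes the same open-interval and ``either margin'' assertions without argument. These are weaknesses your explicit construction exposes rather than ones the paper resolves, but as written your step (ii) does not establish them.
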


\begin{proof}
(i) Proposition~\ref{prop:gate}(i) point-identifies $\rad(B)$ above the threshold up to the invertible
multiplicative bias, and Theorem~\ref{thm:finitesample}(iii) supplies the half-width $\beta\,\varepsilon_{T,\delta}$.
(ii) The population overlap is bounded only up to the factor $(1-q/\theta^2)/(1+q/\theta)$ by
Proposition~\ref{prop:tracywidom}(ii), which confines $m$ to an interval of that width; intersecting with the
finite-sample band of Theorem~\ref{thm:daviskahan} keeps the interval nondegenerate and bounded inside $(0,1)$
while $\theta>\sqrt{q}$ and $g>0$, and lets it expand to $[0,1]$ as either margin vanishes; the interval is an
outer set and is not shown to be sharp. (iii) Any coordinate permutation $P$ fixes the equal-weighted direction
$\widehat{D}=\mathbf{1}/\sqrt{N}$ and satisfies $\spec(P^\top B P)=\spec(B)$ and
$\inner{Pe_1}{\widehat{D}}^2=\inner{e_1}{\widehat{D}}^2$, so it preserves both the recovered spectrum and the edge
mass while relabelling the carrying coordinate; together with the delocalisation of the edge eigenvector above
threshold, no proper subset of $\{1,\dots,N\}$ is distinguished by the data. The permutation subgroup fixes
$\widehat{D}$, so it leaves $m$ in part (ii) identified while unidentifying $i^\ast$, and the two readings are
consistent.
\end{proof}

Theorem~\ref{thm:partialid} states the paper's ontology exactly: \strong\ is point identification, \proxyid\ is
set identification within a computable outer set, and \unid\ is the bounding set equal to the whole space. The
certification gate is thus a partial-identification discipline applied reading by reading.

\section{The endogenous fragility edge}\label{sec:endo}

Throughout, the discount $\beta$ has been calibrated rather than determined within the economy. We now let it
respond to the systemic radius, $\beta=\beta(r)$ with $r=\rad(B)$, as a no-arbitrage discount on the systemic
factor in the sense of \citet{hansenjagannathan1991} and \citet{campbellshiller1988}, and ask where the fragility
edge then sits. Write $g(r)=r\,\beta(r)$ for the discounted spectral radius; the edge is the fixed point $g(r)=1$.

\begin{theorem}[endogenous fragility edge]\label{thm:endoedge}
Let $r=\rad(B)$ range over a compact interval $[r_{\mathrm{lo}},r_{\mathrm{hi}}]\subset[1,N]$ and let
$\beta:[r_{\mathrm{lo}},r_{\mathrm{hi}}]\to(0,\infty)$ be continuous, $C^1$ on the interior, and strictly positive,
with discounted radius $g(r)=r\,\beta(r)$. Assume the inelastic-discount condition $g'(r)=\beta(r)+r\beta'(r)>0$,
equivalently the discount elasticity $E_\beta(r)=r\beta'(r)/\beta(r)>-1$, and the bracketing
$g(r_{\mathrm{lo}})<1<g(r_{\mathrm{hi}})$. Then:
\begin{enumerate}
\item[(i)] there is a unique $r^\ast\in(r_{\mathrm{lo}},r_{\mathrm{hi}})$ with $g(r^\ast)=\beta(r^\ast)r^\ast=1$;
\item[(ii)] $g'(r^\ast)>0$, so $r^\ast$ is a simple root and the bifurcation point of the cascade
$x_{t+1}=\beta(r)Bx_t+c$, whose spectral radius is exactly $g(r)$: for $r<r^\ast$ the resolvent
$Q(r)=(I-\beta(r)B)^{-1}$ exists with $\|Q(r)\|=1/(1-g(r))$ and convergent present value, for $r>r^\ast$ it
diverges, and $r^\ast$ is $C^1$ in any parameter $\theta$ of $\beta$, with $\mathrm{d}r^\ast/\mathrm{d}\theta
=-(\partial_\theta g)/g'(r^\ast)$;
\item[(iii)] fixing the constant-discount benchmark $\beta_0=\beta(r_0)$ at a reference $r_0\in[r_{\mathrm{lo}},
r^\ast)$, with fixed edge $r_{\mathrm{fixed}}=1/\beta_0$: if $\beta$ is non-increasing on $(r_0,r^\ast)$ with a
strict decrease somewhere then $r^\ast>r_{\mathrm{fixed}}$, so a countercyclical discount delays the edge, and if
non-decreasing with a strict increase then $r^\ast<r_{\mathrm{fixed}}$, so a procyclical discount advances it.
\end{enumerate}
\end{theorem}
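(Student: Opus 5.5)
The proof is one-dimensional calculus on the scalar map $g(r)=r\,\beta(r)$, combined with Proposition~\ref{prop:pricing} applied at each fixed $r$.

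\textbf{Part (i).} For existence, $g$ is continuous on the compact interval $[r_{\mathrm{lo}},r_{\mathrm{hi}}]$ and satisfies the bracketing $g(r_{\mathrm{lo}})<1<g(r_{\mathrm{hi}})$. The intermediate value theorem then gives a root $r^\ast$ in the open interval. For uniqueness, the inelastic-discount condition $g'>0$ on the interior, together with continuity at the endpoints, makes $g$ strictly increasing on the whole interval (mean value theorem). So $g-1$ has at most one zero. The equivalence $g'>0\iff E_\beta>-1$ follows by dividing $g'=\beta(1+E_\beta)$ by $\beta>0$.

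\textbf{Part (ii).} Simplicity of the root is immediate, since $g'(r^\ast)>0$ because $r^\ast$ is interior. For the cascade, I would apply Proposition~\ref{prop:pricing} at each fixed $r$ with discount $\beta(r)$ and the positive semi-definite carrier $B$ of radius $r$. Then the iteration operator $\beta(r)B$ has spectral radius $\beta(r)\,r=g(r)$, because $\beta(r)>0$ and $\rad(B)=r$. The resolvent therefore exists, with $\|Q(r)\|=1/(1-g(r))$, exactly when $g(r)<1$. By strict monotonicity of $g$, that holds exactly when $r<r^\ast$. For $r>r^\ast$ we have $g(r)>1$, and the Neumann series (equivalently the cascade iterates along $e_1$) diverges. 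For the comparative static, write $g(r;\theta)$, assume $g$ is $C^1$ in $(r,\theta)$, and apply the implicit function theorem to $g(r,\theta)-1=0$ at $(r^\ast,\theta)$; it applies because $\partial_r g=g'(r^\ast)\neq0$. This yields $r^\ast(\theta)$ as a $C^1$ function with $\mathrm{d}r^\ast/\mathrm{d}\theta=-\partial_\theta g/g'(r^\ast)$, and global uniqueness from part (i) identifies the local branch with the root.

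\textbf{Part (iii).} Here the main care is needed, because the comparison is between $r^\ast$ and $r_{\mathrm{fixed}}=1/\beta_0$, where $r_{\mathrm{fixed}}$ need not lie in the domain. In the countercyclical case, suppose $r^\ast\le r_{\mathrm{fixed}}$. Since $\beta$ is non-increasing on $(r_0,r^\ast)$ with a strict decrease somewhere, continuity gives $\beta(r^\ast)<\beta(r_0)=\beta_0$. Hence $1=r^\ast\beta(r^\ast)<r^\ast\beta_0\le r_{\mathrm{fixed}}\beta_0=1$, a contradiction, so $r^\ast>r_{\mathrm{fixed}}$. In the procyclical case we get $\beta(r^\ast)>\beta_0$, so $1=r^\ast\beta(r^\ast)>r^\ast\beta_0$, which gives $r^\ast<1/\beta_0=r_{\mathrm{fixed}}$ directly. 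The step to verify carefully is the strict inequality $\beta(r^\ast)\neq\beta_0$. I would obtain it from monotonicity: a strict drop at some interior point persists to $r^\ast$, using continuity of $\beta$ at the right endpoint $r^\ast$ of the open interval. I would also note that the hypothesis $r_0<r^\ast$ is what makes $(r_0,r^\ast)$ nonempty, so the monotonicity assumption has content.
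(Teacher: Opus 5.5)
Your proof is correct and follows the paper's route: the intermediate value theorem plus strict monotonicity of $g$ for (i), Proposition~\ref{prop:pricing} applied at each fixed $r$ together with the implicit function theorem for (ii), and the strict comparison of $\beta(r^\ast)$ with $\beta_0$ combined with $r^\ast\beta(r^\ast)=1$ for (iii). Your extra care fills in details the paper leaves implicit: the joint $C^1$ dependence on $\theta$ needed for the implicit function theorem, and endpoint continuity of $\beta$ to get $\beta(r^\ast)\neq\beta_0$ (which you need at $r_0$ as well as at $r^\ast$); the contradiction in (iii) can also simply be replaced by the direct step $r^\ast=1/\beta(r^\ast)>1/\beta_0$.
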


\begin{proof}
(i) $g$ is continuous and, by the inelastic-discount condition, strictly increasing on the interval, so the
bracketing $g(r_{\mathrm{lo}})<1<g(r_{\mathrm{hi}})$ gives a unique root $r^\ast$ by the intermediate value
theorem. (ii) $g'(r^\ast)>0$ is the inelastic-discount condition at $r^\ast$, so the root is simple and the
implicit function theorem gives the stated derivative; the cascade operator $\beta(r)B$ is self-adjoint with
spectral radius $g(r)$, so by Proposition~\ref{prop:pricing} the resolvent norm is $1/(1-g(r))$ on the interior
and the Neumann series diverges beyond the edge, the operator-theoretic content of \citet{kato1995}. (iii) For a
non-increasing $\beta$ with a strict decrease, $\beta(r^\ast)<\beta(r_0)=\beta_0$ because $r^\ast>r_0$, so
$r^\ast=1/\beta(r^\ast)>1/\beta_0=r_{\mathrm{fixed}}$; the non-decreasing case is symmetric.
\end{proof}

A transparent special case is the linear stochastic-discount factor $R(r)=R_f+\gamma r$, $\beta(r)=1/(R_f+\gamma
r)$, under which $g(r)=r/(R_f+\gamma r)$ is strictly increasing, $E_\beta(r)=-\gamma r/(R_f+\gamma r)\in(-1,0)$
automatically, and a unique interior edge exists when $1-R_f<\gamma<1-R_f/N$, in closed form
$r^\ast=R_f/(1-\gamma)$. The intermediary channels of \citet{hekrishnamurthy2013} and
\citet{brunnermeiersannikov2014} supply the countercyclical discount of part (iii): as systemic comovement rises,
the required return on the systemic factor rises and the present-value multiplier falls, so $\beta$ decreases in
$r$ and the endogenous edge is pushed out relative to a fixed discount. Calibrating the linear factor at
$\gamma=0.5$ and $R_f=3.5$ on the eighteen-index panel places the endogenous edge at $r^\ast=7.00$, between the
calm mean $6.74$ and the crisis mean $8.03$ of the leading eigenvalue: the discounted radius is $g=0.98$ in calm,
inside the edge, and $g=1.07$ in crisis, beyond it, with discount elasticity moving from $-0.49$ to $-0.53$, and
the countercyclical discount delays the edge by $r^\ast-r_{\mathrm{fixed}}=0.13$ relative to the constant-discount
benchmark. The calibration is illustrative and transparent: consistent with the discipline applied to the fixed
discount throughout, $\beta(\cdot)$ is calibrated, not identified from behaviour, and the theorem asserts the
existence, uniqueness, and comparative static of $r^\ast$ given $\beta(\cdot)$, not the recovery of $\beta(\cdot)$.
The recovered edge $g(r^\ast)=1$ is the first point at which a bubble eigen-element can exist
\citep{montrucchio2004}, now determined within the economy rather than imposed.

\section{The welfare cost of edge proximity}\label{sec:welfare}

The resolvent norm $\|Q\|=1/(1-\rho)$ is a present-value amplification, and as the edge is approached it diverges.
We read this amplification as a welfare cost in the consumption-equivalent units of \citet{lucas1987}, but we do so
by embedding the carrying mode in a representative-agent consumption process rather than by reinterpreting a
variance, and we certify only the part of the reading that the gate permits.

Three maintained hypotheses carry the structural content; each is named explicitly, and the certification of the
conclusion is conditional on exactly these and on nothing more.

\emph{(H-ISO) Isotropic innovation.} The cascade innovation is zero-mean with
$\Exp[\varepsilon\varepsilon^\top]=\sigma^2 I$; for the exact exponential form in (ii) we further take
$\varepsilon$ Gaussian, so the carrying-mode component is log-normal, and without Gaussianity part (ii) holds to
second order. This is the identifying normalisation introduced with the operator Euler equation~\eqref{eq:euler}:
the split of observed comovement into a cascade gain $Q$ and a shock covariance is not identified from
contemporaneous returns alone, exactly as the influence vector of \citet{acemoglu2012network} is identified there
only because the production network is observed, while here the carrier is recovered. The level of every welfare
statement inherits this normalisation.

\emph{(H-LOAD) Structural consumption loading.} Aggregate log consumption growth loads linearly on the
aggregate-claim image of the cascade,
\begin{equation}\label{eq:cload}
\Delta c_{t+1}=\mu+\phi\,\inner{\widehat{D}}{x_t}+u_{t+1},\qquad x_t=Q\varepsilon_t,
\end{equation}
with a constant loading $\phi\in\Real$ and an idiosyncratic residual $u_{t+1}$ orthogonal to the carrying mode.
The coefficient $\phi$ is a primitive of the consumption block, not a recovered object; it converts the
dimensionless cascade coordinate into consumption units. Equation~\eqref{eq:cload} is what makes the present-value
variance a consumption-growth variance rather than a variance of an abstract deviation.

\emph{(H-PREF) Recursive preferences.} The representative agent has Epstein--Zin--Weil recursive utility
\citep{epsteinzin1989} with relative risk aversion $\gamma>0$ and unit elasticity of intertemporal substitution.
In this case, by the risk-sensitive reduction of \citet{tallarini2000}, the compensation for consumption risk is
governed by the risk-aversion coefficient $\gamma$ alone, decoupled from the elasticity of substitution. This is
the device that makes the cost a genuine consumption-equivalent magnitude in the units of \citet{lucas1987}, and
that lets the amplification stand well above the expected-utility benchmark, the lesson of \citet{barlevy2004} and
\citet{alvarezjermann2004} and, in the long-run-risk reading, of \citet{bansalyaron2004}.

\begin{proposition}[structural consumption-equivalent welfare cost of edge proximity]\label{prop:welfare}
Let $B$ be self-adjoint positive semi-definite with eigenpairs $(\lambda_k,e_k)$, $\lambda_1=\rad(B)$, let
$\beta\in(0,1/\rad(B))$ so that $\rho=\beta\,\rad(B)\in(0,1)$ and $Q=(I-\beta B)^{-1}$ is bounded with
$\|Q\|=1/(1-\rho)$, and let $\widehat{D}=\mathbf{1}/\sqrt{N}$ with edge mass $m=\inner{e_1}{\widehat{D}}^2$. Under
\emph{(H-ISO)}, \emph{(H-LOAD)}, \emph{(H-PREF)}:
\begin{enumerate}
\item[(i)] the carrying-mode component of consumption-growth variance is
\[
V_1(\rho)=\phi^2 m\,\sigma^2/(1-\rho)^2=\phi^2 m\,\sigma^2\,\|Q\|^2,
\]
the amplification $\|Q\|^2=1/(1-\rho)^2$ being read off the resolvent rather than posited;
\item[(ii)] the consumption-equivalent welfare cost attributable to the carrying mode is
\[
L(\rho)=\exp\!\Big(\tfrac{\gamma}{2}\,\phi^2 m\sigma^2/(1-\rho)^2\Big)-1,
\qquad
L(\rho)\approx \tfrac{\gamma}{2}\,\phi^2 m\sigma^2/(1-\rho)^2,
\]
with $\gamma$ the risk-aversion coefficient of \emph{(H-PREF)}; $L$ is strictly increasing and convex on $[0,1)$
and $L(\rho)\to\infty$ as $\rho\to1^-$;
\item[(iii)] between two regimes with leading eigenvalues $\lambda_1<\lambda_1'$, the spectral-radius ratio
$\lambda_1'/\lambda_1$ is a functional of the spectrum alone, rotation-invariant, and recoverable above the
threshold of \citet{bbp2005}, so its square $(\lambda_1'/\lambda_1)^2$ is certified $\strong$ as a spectral object,
free of $\beta$, $\gamma$, $\phi$, $\sigma^2$ and $m$; it is the squared ratio of recovered radii and is not the
welfare-cost amplification. The welfare-cost ratio itself,
\[
\frac{L(\rho')}{L(\rho)}\approx\frac{m'}{m}\Big(\frac{1-\rho}{1-\rho'}\Big)^2,
\]
depends on the calibrated discount through $(\rho,\rho')$ and on the proxy mass ratio $m'/m$, ranges over
$(1,\infty)$ across the admissible discount and diverges as either regime approaches the edge, and is therefore
$\proxyid$ and discount-conditional, not $\strong$; there is no $\strong$ between-regime welfare-cost amplification;
\item[(iv)] the elasticity of the second-order cost $K(\rho)=\tfrac{\gamma}{2}\phi^2 m\sigma^2/(1-\rho)^2$ with
respect to edge proximity is
\[
\frac{\mathrm{d}\log K}{\mathrm{d}\log\rho}=\frac{2\rho}{1-\rho},
\]
a functional form independent of $\gamma$, $\phi$, $\sigma^2$ and $m$; the form, strictly increasing and divergent
as $\rho\to1^-$, is $\strong$ as a shape, while its value at any given $\rho$ is discount-conditional.
\end{enumerate}
\end{proposition}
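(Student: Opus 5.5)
The plan is to reduce everything to the spectral decomposition of the cascade $x_t=Q\varepsilon_t$ along the eigenframe of $B$, which Proposition~\ref{prop:pricing} supplies. Write $\varepsilon_t=\sum_k\inner{e_k}{\varepsilon_t}e_k$. Then $Q=\int(1-\beta s)^{-1}dE_B(s)$ gives
\[
\inner{\widehat{D}}{x_t}=\sum_k\frac{\inner{e_k}{\widehat{D}}\,\inner{e_k}{\varepsilon_t}}{1-\beta\lambda_k}.
\]
Under \emph{(H-ISO)} the coordinates $\inner{e_k}{\varepsilon_t}$ are uncorrelated with common variance $\sigma^2$. So the variance of $\inner{\widehat{D}}{x_t}$ splits additively across modes. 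The $k=1$ summand has variance $\inner{e_1}{\widehat{D}}^2\sigma^2/(1-\beta\lambda_1)^2=m\sigma^2/(1-\rho)^2$. Multiplying by $\phi^2$ from \eqref{eq:cload}, and using that $u_{t+1}$ is orthogonal to the carrying mode, gives $V_1(\rho)$ in (i). Since $B\succeq0$, Proposition~\ref{prop:pricing} gives $\|Q\|=1/(1-\rho)$, so $V_1=\phi^2m\sigma^2\|Q\|^2$.

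For (ii), I would invoke the risk-sensitive reduction of \citet{tallarini2000} under \emph{(H-PREF)}. With unit EIS, the continuation-value recursion turns log-normal consumption risk of variance $V$ into a certainty-equivalent penalty $\tfrac{\gamma}{2}V$ in log units. The consumption-equivalent compensation in the sense of \citet{lucas1987} is therefore $\exp(\tfrac{\gamma}{2}V)-1$. Gaussianity of $\varepsilon$ makes the carrying-mode component exactly log-normal, which gives the exponential formula. Without Gaussianity the same identity holds to second order. The linear approximation is $e^y-1\approx y$. Monotonicity and convexity follow because $\rho\mapsto(1-\rho)^{-2}$ is increasing and convex on $[0,1)$, $y\mapsto e^{cy}-1$ is increasing and convex for $c=\tfrac{\gamma}{2}\phi^2m\sigma^2\ge0$, and a composition of increasing convex maps is increasing and convex. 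Divergence as $\rho\to1^-$ is immediate.

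For (iii), the strong part reuses Proposition~\ref{prop:gate}(i): $\lambda_1,\lambda_1'$ are spectral functionals, so their ratio and its square are too, and neither involves $\beta,\gamma,\phi,\sigma^2$ or $m$. For the cost ratio I take $L\approx K$ from (ii). The constants $\gamma,\phi,\sigma^2$ cancel, leaving $(m'/m)\big((1-\rho)/(1-\rho')\big)^2$. For the range, fix the two spectra and let $\beta$ vary over $(0,1/\lambda_1')$. Then $\rho=\beta\lambda_1$ and $\rho'=\beta\lambda_1'$ move together. The factor $(1-\rho)/(1-\rho')$ increases continuously from $1$ as $\beta\to0$ to $\infty$ as $\beta\to1/\lambda_1'$. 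This shows that the ratio is discount-conditional and carries the proxy factor $m'/m$ (Theorem~\ref{thm:partialid}(ii)), so it cannot be certified \strong.

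Part (iv) is a one-line calculation: $\log K=\text{const}-2\log(1-\rho)$, so $\mathrm{d}\log K/\mathrm{d}\log\rho=\rho\cdot2/(1-\rho)$, which is free of $\gamma,\phi,\sigma^2,m$, increasing, and divergent as $\rho\to1^-$.

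I expect (ii) to be the main obstacle, specifically pinning the exact exponential form of the cost to the Tallarini reduction. That reduction is usually stated for a random-walk log-consumption process, whereas \eqref{eq:cload} loads on the serially correlated $x_t$. I would first restrict to the i.i.d.-in-$t$ innovation reading, where each period's carrying-mode shock is log-normal with variance $V_1$. Only if needed would I fall back to stating the exponential as a second-order identity.

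A secondary issue is the claimed range $(1,\infty)$ in (iii). It holds as stated when $m'\ge m$. If $m'<m$ the lower end of the range is $m'/m$ rather than $1$, so I would either add $m'\ge m$ as a hypothesis or weaken the claim to this range, while noting that divergence at the edge holds either way.
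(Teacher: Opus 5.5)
Your proposal is correct and follows the paper's proof essentially step for step. Both arguments use the same four moves: the eigenbasis expansion of $\inner{\widehat{D}}{x_t}$ with uncorrelated isotropic modes for (i), the Tallarini--Lucas one-period log-normal consumption-equivalent computation for (ii), cancellation of $\gamma,\phi,\sigma^2$ together with the limits $\beta\to0$ and $\beta\lambda_1'\to1$ for (iii), and differentiation of $\log K$ for (iv). Your range caveat is accurate, because the paper's proof establishes $(1,\infty)$ only for the bracket $\big((1-\rho)/(1-\rho')\big)^2$, so the full ratio sweeps $(m'/m,\infty)$. Your serial-correlation worry is unnecessary, since $x_t=Q\varepsilon_t$ is a within-period transform; with serially independent innovations, the per-period log-normal reading you fall back on is exactly the one the paper's proof uses.
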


\begin{proof}
(i) Since $Q$ is self-adjoint with $Qe_1=(1-\rho)^{-1}e_1$, the carrying-mode coordinate of the cascade is
$\inner{e_1}{x_t}=\inner{e_1}{Q\varepsilon_t}=(1-\rho)^{-1}\inner{e_1}{\varepsilon_t}$. Projecting the
aggregate-claim direction onto the eigenbasis, $\inner{\widehat{D}}{x_t}=\sum_k(1-\beta\lambda_k)^{-1}
\inner{e_k}{\widehat{D}}\inner{e_k}{\varepsilon_t}$, and under (H-ISO) the modes are uncorrelated with common
variance $\sigma^2$, so the $k=1$ term of $\Var(\inner{\widehat{D}}{x_t})$ is $m\sigma^2/(1-\rho)^2$. By (H-LOAD)
the carrying-mode component of $\Var(\Delta c_{t+1})$ is $\phi^2$ times this, which is
$V_1(\rho)=\phi^2 m\sigma^2\|Q\|^2$, using $\|Q\|=1/(1-\rho)$.

(ii) Under (H-PREF), with unit elasticity of substitution the recursion of \citet{epsteinzin1989} reduces the
continuation value to a log aggregator with a risk adjustment $-\tfrac{\gamma}{2}$ times the conditional variance
of the continuation, the risk-sensitive form of \citet{tallarini2000}; hence the coefficient that prices
consumption risk is the risk-aversion $\gamma$, decoupled from the elasticity. For a log-normal carrying-mode
component of log consumption with variance $V_1$, secured by the Gaussian clause of (H-ISO), removing that
component while preserving the mean of consumption raises certainty-equivalent welfare by the factor
$\exp(\gamma V_1/2)$, the consumption-equivalent identity of \citet{lucas1987} in the risk-sensitive units of
\citet{tallarini2000}; subtracting unity and substituting $V_1$ from (i) gives $L(\rho)$, and the second-order
form is its first Taylor term, which holds under (H-ISO) without Gaussianity. Strict monotonicity and convexity,
and divergence at $\rho\to1^-$, follow from $\mathrm{d}(1-\rho)^{-2}/\mathrm{d}\rho=2(1-\rho)^{-3}>0$ and the
convexity of $\exp$.

(iii) The spectral radius $\rad(B)=\lambda_1$ is, by the spectral theorem, a functional of the spectrum alone and
invariant under any orthogonal change of the eigenbasis, recoverable above the detectability threshold of
\citet{bbp2005}; the ratio $\lambda_1'/\lambda_1$ of two such radii inherits these properties and contains no
$\beta$, so its square is a $\strong$ spectral object. The welfare-cost ratio follows from (ii) at second order,
$L(\rho')/L(\rho)\approx V_1(\rho')/V_1(\rho)=(m'/m)\,((1-\rho)/(1-\rho'))^2$; the bracket equals
$((1-\beta\lambda_1)/(1-\beta\lambda_1'))^2$, which tends to $1$ as $\beta\to0$ and to $\infty$ as
$\beta\lambda_1'\to1$, hence ranges over $(1,\infty)$ and is not pinned by the spectrum; with $m$ identified only
up to squared overlap near threshold the welfare-cost ratio is $\proxyid$ and discount-conditional, and is a
different object from the squared spectral-radius ratio. (iv) Differentiating
$\log K(\rho)=\mathrm{const}-2\log(1-\rho)$ gives $\mathrm{d}\log K/\mathrm{d}\rho=2/(1-\rho)$, and multiplying by
$\rho$ gives the stated elasticity; the constant absorbs $\gamma$, $\phi$, $\sigma^2$ and $m$, so the form does not
depend on them.
\end{proof}

The certification discipline governs what may be claimed. The structural step is genuine: under (H-LOAD) and
(H-PREF) the quantity $L(\rho)$ is a consumption-equivalent compensation in the units of \citet{lucas1987}, with
the amplification $\|Q\|^2$ derived from the resolvent rather than tabulated as a discount ladder, and with the
risk-aversion coefficient $\gamma$, not a reduced-form variance, carrying the cost. What survives the gate as
$\strong$ is narrower than the level. The only $\strong$ between-regime statement is the rise in the spectral
radius itself, the leading eigenvalue moving from $6.74$ in calm windows to $8.03$ in crisis windows, a ratio of
$1.19$ with a bootstrap difference confidence interval strictly above zero; its square, about $1.42$, is the
squared ratio of recovered radii, a $\strong$ spectral object free of the discount and of every preference and
loading parameter, and it is not the welfare-cost amplification. The welfare-cost level $L(\rho)$ and the
welfare-cost between-regime ratio are conditional on the calibrated discount through $\rho$ and on the product
$\gamma\phi^2 m\sigma^2$, with $m$ a $\proxyid$ overlap; they are read as a discount-conditional magnitude, not a
certified welfare number, so the present-value welfare channel, which is the paper's novel object, carries no
$\strong$ between-regime certificate, only its level conditioned on the discount. The parameter-free elasticity
$2\rho/(1-\rho)$ is $\strong$ only as a shape: it is $18$ at $\rho=0.90$, $38$ at $0.95$, and $198$ at $0.99$, and
the second-order cost rises a hundredfold from $\rho=0.90$ to $\rho=0.99$, but these values of $\rho$ are
edge-distance targets, not implied by any particular time preference, and the level at any single $\rho$ is
discount-conditional. The reading supports the monitoring protocol in shape: the convexity of $L$ implies that
intervention thresholds should be escalating rather than uniform, while their levels are not identified. The
individual asset that carries the bubble remains $\unid$, since the transpose of $B$ preserves its spectrum and the
welfare cost is a functional of the spectrum and the recovered leading direction, not of any coordinate.

\section{The identification boundary between the operator and the price level}\label{sec:boundary}

The recovered operator $B$ is estimated from stationary returns, while the bubble it characterises is a
nonstationary object on the price level. The two facts meet at the edge, and the meeting place is an
identification boundary.

\begin{proposition}[the stationary estimator flags but does not certify the crossing]\label{prop:boundary}
Suppose $B$ is recovered from an $I(0)$ return vector under a stationarity hypothesis, and let the cascade
dynamics be $x_{t+1}=\beta B x_t+\text{noise}$. Then:
\begin{enumerate}
\item[(i)] the dynamics are stationary if and only if $\beta\,\rad(B)<1$;
\item[(ii)] the event $\beta\,\rad(B)\to1$ is the boundary of stationarity of the cascade, a unit root of the
operator $\beta B$ acting on the price level, while the return process that identifies $B$ remains $I(0)$;
\item[(iii)] hence the stationary estimator can certify the approach to the edge from the interior,
$\beta\,\rad(B)$ bounded below one, but cannot certify the crossing itself, which is read on the integrated, that
is $I(1)$, price level by an explosive-root detector \citep{psy2015}.
\end{enumerate}
\end{proposition}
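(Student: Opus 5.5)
For part (i) I treat the cascade $x_{t+1}=\beta Bx_t+\eta_{t+1}$ as a linear VAR(1) with white-noise innovation $\eta$ of finite, nondegenerate covariance $\Omega$. For sufficiency, when $\beta\,\rad(B)<1$, Proposition~\ref{prop:pricing} gives $\|(\beta B)^k\|=(\beta\,\rad(B))^k$ for the self-adjoint carrier. The moving-average representation $x_t=\sum_{k\ge0}(\beta B)^k\eta_{t-k}$ then converges in mean square. It defines the unique stationary solution, with covariance $\Gamma=\sum_k(\beta B)^k\Omega(\beta B)^k$, the convergent Neumann-type series.

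For necessity, I take $\beta\,\rad(B)\ge1$ and use the spectral theorem to pick a top eigenvector $e_1$ with $Be_1=\rad(B)e_1$. In the finite case this is immediate; in general I use approximate eigenvectors. The scalar projection $y_t=\inner{e_1}{x_t}$ then obeys $y_{t+1}=\beta\rad(B)\,y_t+\inner{e_1}{\eta_{t+1}}$, an AR(1) with root at least one in modulus. If $\inner{e_1}{\Omega e_1}>0$, then $\Var(y_t)$ grows without bound (linearly at the unit root, geometrically beyond it). So no stationary solution exists with finite second moments, and (i) follows. The nondegeneracy of the noise on the edge mode is the hypothesis I must state explicitly. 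Without it the ``only if'' fails, so I will add it as the standing reading of ``noise.''

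For part (ii), (i) shows that the edge $\beta\,\rad(B)=1$ is the boundary of the stationarity region, attained exactly when the companion operator $\beta B$ has eigenvalue $1$: a unit root. The price level is the cumulated cascade, so on the edge mode $y_t$ becomes $I(1)$. Its first difference $\Delta y_{t+1}=\inner{e_1}{\eta_{t+1}}$, the return-type object, stays $I(0)$. This is the sense in which $B$, recovered from the contemporaneous covariance of differenced (return) data, is identified from an $I(0)$ process even at the edge. I will make the point by noting that the returns' covariance is finite and estimable on both sides of the boundary, whereas the level covariance $\Gamma$ diverges as $\beta\,\rad(B)\uparrow1$, matching $\|Q\|=1/(1-\beta\,\rad(B))$ of Proposition~\ref{prop:pricing}.

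For part (iii), certification from the interior is Corollary~\ref{cor:finitecert}(ii). Under the stationarity hypothesis the band $\beta\varepsilon_{T,\delta}$ of Theorem~\ref{thm:finitesample} is valid, so $\beta\,\rad(\widehat B)+\beta\varepsilon_{T,\delta}<1$ certifies the interior. The negative half is the main obstacle, because ``cannot certify'' is an impossibility claim. I will argue it at the level of the maintained model. The estimator's coverage guarantee is derived under the hypothesis that the level dynamics are stationary. By (i), that hypothesis is equivalent to $\beta\,\rad(B)<1$. An event certified crossed in Corollary~\ref{cor:finitecert}(i) would therefore contradict the very hypothesis that licenses the band, so the stationary procedure carries no valid coverage statement beyond the edge. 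In addition, returns remain $I(0)$ on both sides, so contemporaneous comovement alone cannot distinguish a unit or explosive root in levels from a near-unit one. The crossing must therefore be tested on the level by a right-tailed recursive unit-root statistic \citep{psy2015}, whose null is the unit root and whose alternative is the explosive root. I expect to present this as a logical (hypothesis-consistency) argument rather than a minimax lower bound, and to flag that reading explicitly.
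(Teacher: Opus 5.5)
Your overall route is the paper's. Part (i) is the standard stability criterion for the linear cascade, which the paper simply asserts. Part (ii) is the unit eigenvalue of $\beta B$ at the edge. In (iii) the interior half is Corollary~\ref{cor:finitecert}(ii), and the crossing is ceded because the band is licensed only by the stationarity premise, which by (i) is the interior. This is the reason the paper later formalises in (H1) and Proposition~\ref{prop:approach}(iv).

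Two of your supporting steps are wrong as written. The first is the necessity argument in (i), which proves less than you claim. Unbounded growth of $\Var(y_t)$ from an initial condition excludes only \emph{causal} stationary solutions. For $|\phi|>1$ the recursion $y_{t+1}=\phi y_t+e_{t+1}$ has the finite-variance stationary solution $y_t=-\sum_{k\ge1}\phi^{-k}e_{t+k}$. So ``no stationary solution exists'' fails strictly beyond the edge, in finite dimensions whenever no eigenvalue of $\beta B$ equals $\pm1$.

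The fix is to add, besides nondegeneracy, the innovation reading of the noise: $\eta_{t+1}$ uncorrelated with $x_s$ for $s\le t$. Stationarity of $y$ then gives $(1-\phi^2)\Var(y)=\inner{e_1}{\Omega e_1}>0$ at once, which is impossible for $|\phi|\ge1$. The same identity also covers a spectral radius attained at $-\rad(B)$ when $B$ is self-adjoint but not positive semi-definite.

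Your appeal to approximate eigenvectors in infinite dimensions does not close. For trace-class $\Omega$ and weakly null approximate eigenvectors $v_n$, one has $\inner{v_n}{\Omega v_n}\to0$, and a stationary solution can genuinely exist at a continuous-spectrum edge. For example, take $\beta B$ to be multiplication by $s$ on $L^2[0,1]$ and $\Omega=f\otimes f$ with $f(s)=\sqrt{1-s}$. Then $\operatorname{tr}\Gamma=\int_0^1(1+s)^{-1}ds=\log 2$, although $\beta\,\rad(B)=1$. State (i) for the finite case, or for an edge that is an eigenvalue charged by the noise.

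The second problem is your claim that returns stay $I(0)$ ``on both sides'' of the boundary. This is false in the cascade you wrote down: on the top mode $\Delta y_{t+1}=(\phi-1)y_t+e_{t+1}$, which is $I(0)$ at $\phi=1$ but inherits the explosive root for $\phi>1$. The paper claims only that returns remain $I(0)$ up to and at the edge. Your observational-equivalence aside in (iii) therefore rests on a false premise and should be dropped.

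Your hypothesis-consistency argument carries the negative half alone. Under the maintained model $\beta\,\rad(B)<1$, so the ``crossed'' branch of Corollary~\ref{cor:finitecert}(i) fires only off the band event, with probability at most $\delta$, and is never correct.

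Likewise, the sentence in (ii) that $B$ is identified from the differenced data at the edge is not something this VAR delivers. At $\phi=1$ the edge-mode return is the pure noise $e_{t+1}$, whose variance does not involve $\rad(B)$. Present it, as the paper does, as the maintained recovery premise rather than as a consequence of the cascade.
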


\begin{proof}
(i) The linear recursion $x_{t+1}=\beta Bx_t+\varepsilon_{t+1}$ has a stationary solution if and only if the
spectral radius of its operator $\beta B$ is strictly less than one, that is $\beta\,\rad(B)<1$. (ii) At
$\beta\,\rad(B)=1$ the operator $\beta B$ has a unit eigenvalue, so the cascade recursion on the price level has
a unit root and no stationary solution; the returns that identify $B$ remain $I(0)$ and continue to estimate it,
but the level they drive ceases to be stationary at the edge. (iii) In the interior the estimator is consistent
and the approach is estimable; at the edge the level leaves $I(0)$ even though the returns do not, so the same
return-based estimator measures the approach $\beta\,\rad(B)\uparrow1$ while the crossing is an $I(1)$
explosive-root event on the price level \citep{psy2015}.
\end{proof}

Proposition~\ref{prop:boundary} draws a clean line. The recovered operator on $I(0)$ returns and the
explosive-root detector on $I(1)$ levels are not competitors; they describe the two sides of one boundary. The
certification gate restricts the strong claim to the interior and treats the crossing as a flagged boundary,
not a certified event, which is the disciplined reading the data permit.

\subsection{Approach detection as a certified distance}\label{sec:approach}

One may grant Proposition~\ref{prop:boundary} and press further: if the stationary operator only ever sees
the interior, the recovered scalar is a coincident approach-detector and nothing more. We accept the premise and
convert it into a positive operating characteristic. The content of this subsection is that approach-detection is
itself a certified measurement. The edge distance $d_t=1-\rho_t$, with $\rho_t=\beta\,\rad(B_t)$, is a \strong,
point-identified, estimable quantity carrying the same finite-sample band as the radius; the interior-approach
decision admits a false-positive guarantee at any preset confidence; and the operator and the explosive-root
detector of \citet{psy2015} together deliver a two-sided reading of one boundary, each side certified for what it
identifies. None of the statements is predictive. Every object is contemporaneous to the estimation window, and we
make the coincident discipline a maintained hypothesis.

We name the maintained hypotheses explicitly.

\begin{assumption}[approach-detection regime]\label{ass:approach}
Fix a window and let $B_t$, $\widehat{B}_t$ be the population and sample correlation operators of the standardised
$I(0)$ return vector on that window, $\rho_t=\beta\,\rad(B_t)$, $\widehat{\rho}_t=\beta\,\rad(\widehat{B}_t)$,
$d_t=1-\rho_t$, $\widehat{d}_t=1-\widehat{\rho}_t$.
\begin{enumerate}
\item[(H1)] \emph{Supercritical recovery on stationary returns.} The returns are $I(0)$ on the window and the
leading spike satisfies $\theta>\sqrt{q}$ with $q=N/T$, so by Proposition~\ref{prop:gate}(i) the radius is \strong\
and point-identified up to the invertible random-matrix bias, taken inverted throughout. This is the recovery
premise; by Proposition~\ref{prop:boundary} it holds in the interior $\rho_t<1$ and fails at and beyond the edge,
where the price level leaves $I(0)$.
\item[(H2)] \emph{Band hypothesis.} The conditions of Theorem~\ref{thm:finitesample} hold, so for every
$\delta\in(0,1)$ the event $E_{T,\delta}=\{\,|\widehat{\rho}_t-\rho_t|\le h_{T,\delta}\,\}$ has
$\mathbb{P}(E_{T,\delta})\ge 1-\delta$, with $h_{T,\delta}=\beta\,\varepsilon_{T,\delta}$ the half-width of
Theorem~\ref{thm:finitesample}(iii).
\item[(H3)] \emph{Calibrated discount.} $\beta>0$ is a fixed calibrated constant.
\item[(H4)] \emph{Coincidence.} Every quantity refers to the current window; no statement concerns $d_{t+h}$ for
$h>0$.
\item[(H5)] \emph{Interior tolerance.} The decision level $\tau\in(0,1)$ is fixed strictly inside, so we certify
approach to a level strictly short of the crossing, never the crossing.
\end{enumerate}
\end{assumption}

\begin{proposition}[the edge distance is a certified estimable quantity]\label{prop:edgedist}
Under Assumption~\ref{ass:approach}:
\begin{enumerate}
\item[(i)] $d_t=1-\beta\,\rad(B_t)$ is \strong: it is an affine function of the spectral radius, hence a functional
of the spectrum alone, invariant under orthogonal rotations of the unresolved eigenframe, and point-identified
above the threshold of \citet{bbp2005}.
\item[(ii)] $\widehat{d}_t$ estimates $d_t$ with the radius band: on $E_{T,\delta}$, $|\widehat{d}_t-d_t|\le
h_{T,\delta}$, so $d_t$ carries the confidence interval $[\,\widehat{d}_t-h_{T,\delta},\ \widehat{d}_t+
h_{T,\delta}\,]$ at level $1-\delta$.
\item[(iii)] $d_t$ is the reciprocal worst-case present-value amplification, $d_t=1/\|Q_t\|$ with
$Q_t=(I-\beta B_t)^{-1}$; a shrinking certified $d_t$ is therefore a growing certified amplification.
\end{enumerate}
\end{proposition}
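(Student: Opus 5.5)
The three parts reduce to earlier results: part (i) to Proposition~\ref{prop:gate}(i), part (ii) to the band hypothesis (H2), and part (iii) to Proposition~\ref{prop:pricing}. The plan is to exploit that $d_t$ is an affine image of $\rad(B_t)$ with slope $-\beta$. Every property of the radius then passes to $d_t$: identification, rotation invariance, and the confidence band.

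For (i), I would note that $d_t=1-\beta\,\rad(B_t)$ is a fixed affine map $\phi(r)=1-\beta r$ of the spectral radius, with $\beta$ a calibrated constant by (H3).
\begin{itemize}
\item \emph{Functional of the spectrum.} Since $\rad(B_t)=\sup|\spec(B_t)|$, $d_t$ depends only on $\spec(B_t)$.
\item \emph{Rotation invariance.} For every orthogonal $U$, $\spec(U^\top B_tU)=\spec(B_t)$, so $d_t$ is unchanged by rotations of the eigenframe.
\item \emph{Point identification.} Under (H1), $\rad(B_t)$ is point-identified once the random-matrix bias is inverted. Because $\phi$ is a bijection of $\Real$, the identified set for $d_t$ is the image $\phi(\{\rad(B_t)\})$, again a singleton. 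This matches Theorem~\ref{thm:partialid}(i), with $d_t=1-\kappa$.
\end{itemize}

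For (ii), the key identity is
\[
\widehat{d}_t-d_t=(1-\widehat{\rho}_t)-(1-\rho_t)=\rho_t-\widehat{\rho}_t,
\]
so $|\widehat{d}_t-d_t|=|\widehat{\rho}_t-\rho_t|$. On the event $E_{T,\delta}$ of (H2), this is at most $h_{T,\delta}=\beta\varepsilon_{T,\delta}$; this is Theorem~\ref{thm:finitesample}(iii). Since $\mathbb{P}(E_{T,\delta})\ge1-\delta$, the interval $[\widehat{d}_t-h_{T,\delta},\,\widehat{d}_t+h_{T,\delta}]$ contains $d_t$ with probability at least $1-\delta$.

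For (iii), I would use that $B_t$ is a correlation operator, hence self-adjoint and positive semi-definite. Under (H1) the window is interior, $\rho_t<1$, so Proposition~\ref{prop:pricing} applies and gives
\[
\|Q_t\|=\frac{1}{1-\beta\,\rad(B_t)}=\frac{1}{d_t}.
\]
Hence $d_t=1/\|Q_t\|$, which is finite and positive. The map $x\mapsto1/x$ is strictly decreasing on $(0,\infty)$. So a certified upper bound $d_t\le\widehat{d}_t+h_{T,\delta}$ gives a certified lower bound on the amplification, $\|Q_t\|\ge(\widehat{d}_t+h_{T,\delta})^{-1}$, on the same event and at the same confidence.

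The main point of care is the domain of validity rather than any computation. I would handle it as follows.
\begin{itemize}
\item The formula $\|Q_t\|=1/d_t$ requires $\rho_t<1$, which holds by (H1) and Proposition~\ref{prop:boundary}.
\item The amplification lower bound requires $\widehat{d}_t+h_{T,\delta}>0$. If this fails, the band reaches the edge and I would report no amplification bound, consistent with Corollary~\ref{cor:finitecert}(iii).
\item The bias inversion in (H1) must be applied consistently to both $\widehat{\rho}_t$ and the half-width. As in Theorem~\ref{thm:partialid}(i), I would take $h_{T,\delta}$ as the conservative raw-radius half-width carried through the invertible bias.
\end{itemize}
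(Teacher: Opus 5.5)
Your proposal is correct and follows the paper's proof essentially step for step: the affine-image inheritance argument for (i), the identity $\widehat d_t-d_t=-(\widehat\rho_t-\rho_t)$ on $E_{T,\delta}$ for (ii), and Proposition~\ref{prop:pricing} with positive semi-definiteness on the interior for (iii). A small point on your extra amplification bound: on $E_{T,\delta}$ with $d_t>0$ you automatically have $\widehat d_t+h_{T,\delta}\ge d_t>0$, so that condition needs no separate check, and its failure, $\widehat\rho_t-h_{T,\delta}\ge 1$, is (apart from the equality case) the certified-crossing case of Corollary~\ref{cor:finitecert}(i), not the straddle case (iii).
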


\begin{proof}
(i) $d_t=1-\beta\,\rad(B_t)$ is affine in $\rad(B_t)=\sup|\spec(B_t)|$, which depends only on the spectrum and is
invariant under any unitary change of the eigenbasis; point-identification above $\theta>\sqrt{q}$ is
Proposition~\ref{prop:gate}(i). An affine image of a \strong\ scalar with fixed known coefficients inherits the
three defining properties of the \strong\ label. (ii) By Theorem~\ref{thm:finitesample}(iii), on $E_{T,\delta}$,
$|\widehat{\rho}_t-\rho_t|\le h_{T,\delta}$; since $\widehat{d}_t-d_t=-(\widehat{\rho}_t-\rho_t)$, the same bound
holds, and inverting around $\widehat{d}_t$ gives the interval. (iii) By Proposition~\ref{prop:pricing}, $Q_t$ is
bounded with $\|Q_t\|=1/(1-\rho_t)=1/d_t$ on the interior.
\end{proof}

The certified distance turns the qualitative concession into a quantity with a band. We now make detecting its
shrinkage a decision with a size guarantee, and read off the resolution at which that decision is possible.

\begin{proposition}[interior-approach detection with false-positive control]\label{prop:approach}
Under Assumption~\ref{ass:approach}, fix the tolerance $\tau\in(0,1)$ and define the interior-approach detector
\[
\mathcal{A}_\tau \ \text{fires} \iff \widehat{d}_t+h_{T,\delta}<\tau \iff \widehat{\rho}_t-h_{T,\delta}>1-\tau .
\]
\begin{enumerate}
\item[(i)] \emph{Size.} If $d_t\ge\tau$, then $\mathbb{P}(\mathcal{A}_\tau\ \text{fires})\le\delta$: the
false-positive rate of declaring interior approach is controlled at $\delta$, uniformly over the far region.
\item[(ii)] \emph{Power.} If $d_t<\tau-2h_{T,\delta}$, then $\mathbb{P}(\mathcal{A}_\tau\ \text{fires})\ge 1-\delta$.
\item[(iii)] \emph{Consistency.} Since $h_{T,\delta}\to0$ as $T\to\infty$ under (H2), for every fixed interior
configuration $d_t<\tau$ the detector fires with probability tending to one.
\item[(iv)] \emph{Resolution and the ceded crossing.} Around any level $\tau$ there is an uncertifiable collar
$\{\,\tau-2h_{T,\delta}\le d_t<\tau\,\}$ of width $2h_{T,\delta}=O(\sqrt{r_e(B_t)/T})$ on which neither (i) nor (ii)
is in force. The family $\{\mathcal{A}_\tau:\tau\in(0,1)\}$ certifies approach to any interior level but does not
extend to a crossing test: at $\tau=0$ the decision concerns $d_t\le0$, that is $\rho_t\ge1$, where by (H1) and
Proposition~\ref{prop:boundary} the recovery premise fails, the price level leaves $I(0)$, and neither
$\widehat{\rho}_t$ nor its band is defined; the crossing is therefore ceded to the price-level detector,
recovering Proposition~\ref{prop:boundary}(iii) and the straddle case of Corollary~\ref{cor:finitecert}(iii).
\end{enumerate}
\end{proposition}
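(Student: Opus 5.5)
The whole argument runs on the single band event $E_{T,\delta}=\{|\widehat{\rho}_t-\rho_t|\le h_{T,\delta}\}$ of (H2), which has probability at least $1-\delta$. Every statement is obtained by intersecting the detector's firing region with $E_{T,\delta}$ or with its complement. The first step is to rewrite the firing rule in $\rho$-coordinates using Proposition~\ref{prop:edgedist}(ii). On $E_{T,\delta}$ we have $|\widehat{d}_t-d_t|\le h_{T,\delta}$, so
\[
d_t-h_{T,\delta}\le\widehat{d}_t\le d_t+h_{T,\delta}.
\]

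\emph{Size.} Suppose $d_t\ge\tau$. On $E_{T,\delta}$,
\[
\widehat{d}_t+h_{T,\delta}\ge d_t\ge\tau,
\]
so $\mathcal{A}_\tau$ does not fire. Hence $\{\mathcal{A}_\tau\text{ fires}\}\subset E_{T,\delta}^c$, and $\mathbb{P}(\mathcal{A}_\tau\text{ fires})\le\mathbb{P}(E^c_{T,\delta})\le\delta$. The bound on $E_{T,\delta}$ does not depend on the configuration, so the control is uniform over $\{d_t\ge\tau\}$.

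\emph{Power.} Suppose $d_t<\tau-2h_{T,\delta}$. On $E_{T,\delta}$,
\[
\widehat{d}_t+h_{T,\delta}\le d_t+2h_{T,\delta}<\tau,
\]
so the detector fires on all of $E_{T,\delta}$, and its firing probability is at least $1-\delta$.

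\emph{Consistency.} Fix an interior configuration with $d_t<\tau$ and fix $\delta$. Under (H2) the half-width $h_{T,\delta}=\beta\varepsilon_{T,\delta}$ has the form $c\beta\rad(B)\big(\sqrt{(r_e+\log(2/\delta))/T}+\dots\big)$ and tends to $0$. So for $T$ large enough $2h_{T,\delta}<\tau-d_t$, and the power statement gives firing probability at least $1-\delta$. Because $\delta$ is arbitrary, the firing probability has $\liminf\ge1-\delta$ for every $\delta$, hence tends to one. This needs $r_e(B_t)$ and $\rad(B_t)$ to stay fixed, or bounded by $N/1$ and $N$, as $T$ grows with the configuration held fixed, which holds for fixed $N$.

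\emph{Resolution.} The collar is the complement of the regions $\{d_t\ge\tau\}$ and $\{d_t<\tau-2h\}$ covered by size and power. Its width is $2h_{T,\delta}$, and reading $\varepsilon_{T,\delta}$ at leading order gives $O(\sqrt{r_e(B_t)/T})$.

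\emph{Ceded crossing.} This is the only non-mechanical step, and the one I expect to be the main obstacle. It is a scope argument, not an inequality. Taking $\tau\to0$, the null region $\{d_t\ge0\}$ and the alternative $\{d_t<0\}$ would become a crossing test. The alternative is $\rho_t>1$, which violates (H1). By Proposition~\ref{prop:boundary}(ii) the cascade level is no longer $I(0)$ there, so the premise of Theorem~\ref{thm:finitesample} fails, $E_{T,\delta}$ carries no probability guarantee, and neither size nor power transfers. (H5) excludes $\tau=0$ by fiat. The remaining content is to note that the family certifies every $\tau>0$ but no limiting test, so the crossing falls to the explosive-root detector, matching Proposition~\ref{prop:boundary}(iii). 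The straddle case of Corollary~\ref{cor:finitecert}(iii) is exactly the collar at $\tau\downarrow0$.
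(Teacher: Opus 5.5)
Your proposal is correct and follows the paper's proof essentially step for step. Size comes from $\{\mathcal{A}_\tau\ \text{fires}\}\subseteq E_{T,\delta}^{c}$ on $\{d_t\ge\tau\}$, power from firing on all of $E_{T,\delta}$ when $d_t<\tau-2h_{T,\delta}$, consistency from feeding $h_{T,\delta}\to0$ into (ii), the collar width from the rate of Theorem~\ref{thm:finitesample}(ii), and the crossing is ceded because (H1) confines recovery to $\rho_t<1$.

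Your extra $\delta\downarrow0$ step in (iii) closes a point the paper's one-line ``(ii) applies'' leaves open, since that line alone gives only $1-\delta$. However, the detector itself depends on $\delta$, so you need the nesting $\{\text{fires at level }\delta'\}\subseteq\{\text{fires at level }\delta\}$ for $\delta'\le\delta$, which holds because $h_{T,\delta}$ is decreasing in $\delta$. Separately, the straddle of Corollary~\ref{cor:finitecert}(iii) is not \emph{exactly} the collar at $\tau\downarrow0$: the straddle is an event on $\widehat{\rho}_t$ corresponding (on $E_{T,\delta}$) to the two-sided set $|d_t|\le2h_{T,\delta}$, whereas that collar is one-sided and lies beyond the edge.
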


\begin{proof}
(i) On $\{d_t\ge\tau\}$, the event $\{\mathcal{A}_\tau\ \text{fires}\}=\{\widehat{d}_t<\tau-h_{T,\delta}\}$ implies
$\widehat{d}_t-d_t<\tau-h_{T,\delta}-d_t\le -h_{T,\delta}$, hence $\widehat{\rho}_t-\rho_t>h_{T,\delta}$, contained
in $E_{T,\delta}^{c}$, of probability at most $\delta$ by (H2). (ii) On $\{d_t<\tau-2h_{T,\delta}\}$ and on
$E_{T,\delta}$, $\widehat{d}_t\le d_t+h_{T,\delta}<\tau-h_{T,\delta}$, so $\widehat{d}_t+h_{T,\delta}<\tau$ and the
detector fires; since $\mathbb{P}(E_{T,\delta})\ge 1-\delta$ the claim follows. (iii) For fixed $d_t<\tau$,
$h_{T,\delta}\to0$ gives $d_t<\tau-2h_{T,\delta}$ for all large $T$, and (ii) applies. (iv) The collar width is
$2h_{T,\delta}=2\beta\,\varepsilon_{T,\delta}$, of order $\sqrt{r_e(B_t)/T}$ by Theorem~\ref{thm:finitesample}(ii).
At $\tau=0$ the decision is the crossing $\rho_t\ge1$; by (H1) the recovery hypothesis holds only on the interior
$\rho_t<1$, and at and beyond the edge the price level leaves $I(0)$ by Proposition~\ref{prop:boundary}, so the
return-based radius estimate and its band are not defined and no interior detector certifies the crossing. This is
the noncertifiability of Proposition~\ref{prop:boundary}(iii) and the straddle of
Corollary~\ref{cor:finitecert}(iii).
\end{proof}

Proposition~\ref{prop:approach} reads the concession as a theorem with a rate. The stationary operator certifies
interior approach to any level $\tau$ at controlled size, down to an $O(\sqrt{r_e/T})$ neighbourhood of that level;
the only region it cannot resolve is a collar that shrinks at the parametric rate. The crossing is not a hard case
of the same test but lies outside the operator's domain of validity, by the breakdown of the stationary-recovery
premise at the edge; it is read on the price level by the detector that can see it.

The final statement pairs the two sides of the boundary and certifies each for its own identification target.

\begin{proposition}[a two-sided reading of one boundary]\label{prop:twosided}
Under Assumption~\ref{ass:approach}, let the operator side be the family $\{\mathcal{A}_\tau:\tau\in(0,1)\}$ acting
on the $I(0)$ returns, and let the level side be the explosive-root statistic of \citet{psy2015} acting on the
$I(1)$ price level. Then:
\begin{enumerate}
\item[(i)] \emph{Operator side, certified.} For every fixed $\tau$, the operator point-identifies the \strong\
distance $d_t$ with the band of Proposition~\ref{prop:edgedist}(ii) and certifies the interior decision
$\{d_t\ge\tau\}$ against $\{d_t<\tau\}$ at size $\delta$ outside the collar of Proposition~\ref{prop:approach}(iv).
It does not identify the crossing $\{d_t\le0\}$, which lies outside its recovery domain.
\item[(ii)] \emph{Level side, certified.} The crossing $\{d_t\le0\}$, equivalently $\rho_t\ge1$, is the
explosive-root episode on the price level; it is coincidently date-stamped by the supremum statistic of
\citet{psy2015} on the $I(1)$ series, which is the object the stationary operator cedes by
Proposition~\ref{prop:boundary}.
\item[(iii)] \emph{Complementarity.} The two readings act on different integration orders, $I(0)$ returns and the
$I(1)$ level, so neither nests the other. The operator certifies the interior outside its collar, the
explosive-root detector certifies the crossing away from its own boundary neighbourhood, and the single residual
uncovered set is a neighbourhood of the boundary $\rho_t=1$ itself, where the operator's recovery premise lapses
and the supremum unit-root statistic has vanishing power. The pair is a two-sided reading of the single boundary
$\beta\,\rad(B)=1$, each side certified only for what it identifies, both contemporaneous.
\end{enumerate}
\end{proposition}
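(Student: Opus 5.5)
The plan is to assemble the statement from results already proved, treating it largely as a synthesis with one genuinely new ingredient, the level-side reading in (ii). For (i), I would invoke Proposition~\ref{prop:edgedist}: part (i) of that result gives that $d_t$ is \strong\ and point-identified under (H1), and part (ii) gives the band $[\widehat{d}_t-h_{T,\delta},\widehat{d}_t+h_{T,\delta}]$ at level $1-\delta$. The decision between $\{d_t\ge\tau\}$ and $\{d_t<\tau\}$ is then exactly the detector $\mathcal{A}_\tau$. Its size is Proposition~\ref{prop:approach}(i), and its power outside the collar is Proposition~\ref{prop:approach}(ii). The negative clause, that the operator does not identify $\{d_t\le0\}$, is Proposition~\ref{prop:approach}(iv) combined with (H1): the recovery premise is stated only on $\rho_t<1$, so at $\tau=0$ neither $\widehat{\rho}_t$ nor $E_{T,\delta}$ carries a guarantee.

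For (ii), I would first use Proposition~\ref{prop:boundary}(ii) to translate the crossing $\rho_t\ge1$ into a unit or explosive eigenvalue of $\beta B_t$ acting on the level recursion $x_{t+1}=\beta B_t x_t+\varepsilon_{t+1}$. Projecting onto the top eigenvector $e_1$ gives a scalar autoregression $\inner{e_1}{x_{t+1}}=\rho_t\inner{e_1}{x_t}+\inner{e_1}{\varepsilon_{t+1}}$, since $B_t$ is self-adjoint. Its autoregressive root is exactly $\rho_t$, so $\rho_t>1$ is a mildly or strongly explosive scalar series. I would then cite the consistency of the backward sup-ADF date-stamping of \citet{psy2015}, which holds for explosive alternatives and has controlled size under the unit-root null. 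The level side is thus certified for the crossing in the sense of that test, away from its local-to-unity neighbourhood. The projection onto $e_1$ is only a device; the price index is in practice an aggregate $\inner{\widehat{D}}{x_t}$. For that aggregate, the explosive component dominates whenever $m=\inner{e_1}{\widehat{D}}^2>0$ by Theorem~\ref{thm:bubble}. I would state this as the maintained link.

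For (iii), non-nesting follows because the operator statistic is a functional of the $I(0)$ return covariance, while the PSY statistic is a functional of the level path. Once both are stated on their domains, the uncovered set is the union of two neighbourhoods of $\rho_t=1$. The first is the operator collar near $\tau\downarrow0$, where (H1) lapses. The second is the local-to-unity region $\rho_t=1+c/T$, where the sup-ADF power is bounded away from one. Both shrink as $T\to\infty$, at rate $\sqrt{r_e/T}$ and $1/T$ respectively, which gives a single residual neighbourhood. Contemporaneity is (H4).

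The main obstacle is (ii). Importing the PSY guarantees requires that the projected level process meet their regularity conditions, which are martingale-difference errors and a mildly explosive root of the form $1+c/T^{\alpha}$ with a fixed origination date. It also requires relating the finite-window $\rho_t$ to their local parametrisation. I would try first to state (ii) conditionally: \emph{if} the projected level satisfies the PSY data-generating assumptions, their theorems yield consistent date-stamping. I would not attempt to derive those assumptions from Assumption~\ref{ass:approach}, in keeping with the paper's labelling discipline.
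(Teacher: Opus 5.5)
Your proposal is correct and follows essentially the same route as the paper's proof. The paper likewise assembles (i) from Proposition~\ref{prop:edgedist}(i)--(ii) and Proposition~\ref{prop:approach}(i),(ii),(iv), takes (ii) from Proposition~\ref{prop:boundary}(ii)--(iii) together with the supremum statistic, and obtains (iii) from the different integration orders plus the $O(\sqrt{r_e/T})$ collar and the loss of explosive-root power near the unit root.

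Your projection of the level recursion onto $e_1$, and your explicit conditioning on the PSY data-generating assumptions, are a more careful version of what the paper only asserts in (ii). One small correction: the dominance of the edge component in the aggregate $\inner{\widehat{D}}{x_t}$ follows directly from the eigen-expansion, not from Theorem~\ref{thm:bubble}.
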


\begin{proof}
(i) is Proposition~\ref{prop:edgedist}(i)--(ii) with Proposition~\ref{prop:approach}(i)--(ii),(iv): $d_t$ is
\strong\ and banded, the interior decision is size-controlled, and the crossing lies outside the recovery domain.
(ii) is Proposition~\ref{prop:boundary}(ii)--(iii): at $\rho_t=1$ the operator $\beta B$ has a unit eigenvalue and
the level carries a unit, then explosive, root while the returns remain $I(0)$ until the edge; the supremum
recursive statistic of \citet{psy2015} date-stamps the in-progress explosive episode, coincidently, consistent
with (H4). (iii) The operator estimator is a functional of the $I(0)$ return correlation and the explosive-root
statistic of the $I(1)$ level, so neither identified region contains the other; the operator collar shrinks at
$O(\sqrt{r_e/T})$, while the explosive-root detector loses power in a neighbourhood of the unit root, so the
residual uncovered set is a boundary neighbourhood, not merely the operator collar. All objects are evaluated on
the current window, so the reading is coincident.
\end{proof}

Two scope remarks place the result against the gate and the panel. First, the operator side is \strong\ in the
sense of Proposition~\ref{prop:gate}(i): $d_t$ and the decisions $\mathcal{A}_\tau$ are functionals of the
spectrum, rotation-invariant, recoverable above $\theta>\sqrt{q}$; the construction adds no claim about the bubble
direction or the carrying coordinate, which remain \proxyid\ and \unid. The level side is a separate detector on a
separate series, certified in its own explosive-root sense, not relabelled as a spectral object. Second, on the
panel of Section~\ref{sec:empirics} the band $\varepsilon_{T,\delta}$ is read as a rate rather than a numerical
guarantee, since the returns are weakly dependent and heavy-tailed and the reported inference is carried by the
dependence-robust moving-block resampling; the size statement of Proposition~\ref{prop:approach}(i), the
false-positive control that matters most in practice, is conservative and robust, while the power statements inherit the
band hypothesis and are read as scaling statements on the panel, in line with the remark following
Theorem~\ref{thm:finitesample}. The substantive conclusion is invariant: approach is the detection of a real,
estimable, gate-certified distance shrinking at a known rate, and the crossing is ceded, by the breakdown of
stationary recovery at the edge, to the detector that can see it.

\section{The structural edge-hazard}\label{sec:hazard}

The readings so far are coincident descriptions of one operator. We now give the edge a structural dynamic
content, the intensity of crisis onset as a function of the certified edge distance, without crossing into
forecasting. On the symmetric line the cascade resolvent obeys $\|Q_t\|=1/(1-\rho_t)=1/d_t$, with
$\rho_t=\beta\,\rad(B_t)$ and edge distance $d_t=1-\rho_t$; this is the worst-case present-value amplification of a
contemporaneous shock, and it diverges at the edge. The structural postulate of this section is that the
instantaneous intensity of crisis onset inherits this amplification monotonically. The link is postulated, not
derived: the resolvent supplies the canonical form of the singularity at the edge, while the mapping to intensity
is a maintained structural assumption, in the doubly-stochastic sense of \citet{lando1998} and the
counting-process sense of \citet{andersengill1982}.

\begin{proposition}[structural edge-hazard with identification split]\label{prop:hazard}
Let $B_t$ be self-adjoint positive semi-definite with eigenvalues $0\le\lambda_N(t)\le\cdots\le\lambda_1(t)=
\rad(B_t)$, let $\beta>0$, and work on the certified interior $\rho_t=\beta\,\rad(B_t)<1$, with edge distance
$d_t=1-\rho_t\in(0,1]$. Let $\chi(t)\ge0$ be the instantaneous intensity of crisis onset and $\chi_0(t)>0$ a
baseline.
\begin{enumerate}
\item[(i)] $Q_t=(I-\beta B_t)^{-1}$ is self-adjoint and positive definite with
$\|Q_t\|=1/\min_k(1-\beta\lambda_k(t))=1/(1-\rho_t)=1/d_t$, the minimum attained at $\lambda_1(t)$ and the norm
equal to the largest eigenvalue because $Q_t$ is positive definite.
\item[(ii)] for any continuously differentiable strictly decreasing link $\ell:(0,1]\to(0,\infty)$ and fixed
$\chi_0>0$, the intensity $\chi(t)=\chi_0\,\ell(d_t)$ is strictly increasing in $\rho_t$, with
$\mathrm{d}\chi/\mathrm{d}\rho_t=-\chi_0\,\ell'(d_t)>0$.
\item[(iii)] for the canonical link $\ell(d)=1/d$, and for any $\ell$ with $\ell(0^+)=+\infty$, $\chi$ is finite
and continuous on the interior $[0,1)$ and its unique singularity is at the boundary $\rho_t=1$, $d_t=0$, where
$\chi\to+\infty$.
\item[(iv)] writing $\ell(d_t)=\exp(\theta x_t)$ with $x_t$ a fixed strictly decreasing transform of $d_t$ and
$\theta$ free, the model $\chi(t)=\chi_0(t)\exp(\theta x_t)$ is of proportional-hazards form; $\theta$ is
identified from within-risk-set covariate comparisons at the observed event times, while the baseline $\chi_0(t)$
cancels from the partial likelihood, provided $x_t$ varies within risk sets.
\end{enumerate}
The algebraic parts (i)--(iii) and the cancellation in (iv) are exact; the consistency, asymptotic normality, and
sandwich variance of the maximum-partial-likelihood estimator of $\theta$ are invoked from \citet{andersengill1982}
and \citet{linwei1989}.
\end{proposition}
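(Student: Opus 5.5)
The plan is to take the four parts in order, since (i) feeds (ii)--(iii) and (iv) is separate.

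For (i), I would use the spectral theorem as in Proposition~\ref{prop:pricing}. On the interior $\rho_t<1$, each eigenvalue $\lambda_k(t)\in[0,\rad(B_t)]$ satisfies $1-\beta\lambda_k(t)\ge 1-\rho_t>0$. Hence $I-\beta B_t$ is self-adjoint and positive definite, and $Q_t$ has eigenvalues $(1-\beta\lambda_k(t))^{-1}>0$ on the same eigenframe. It follows that $Q_t$ is self-adjoint and positive definite, and its norm equals its largest eigenvalue. Since $\lambda\mapsto(1-\beta\lambda)^{-1}$ is increasing on $[0,1/\beta)$, that largest eigenvalue is attained at $\lambda_1(t)$. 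This gives $\|Q_t\|=1/\min_k(1-\beta\lambda_k(t))=1/(1-\rho_t)=1/d_t$.

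For (ii), I would apply the chain rule to $\chi=\chi_0\,\ell(1-\rho_t)$. This gives $\mathrm{d}\chi/\mathrm{d}\rho_t=-\chi_0\,\ell'(d_t)$. Strict monotonicity of $\chi$ follows directly from $\ell$ being strictly decreasing. The strict positivity of the derivative at every point needs $\ell'<0$, which is stronger than strict decrease: a strictly decreasing function can have $\ell'=0$ at isolated points. I would either read "strictly decreasing" as $\ell'<0$ or note that the derivative inequality holds except at such isolated points. This is the one place where the statement is a little loose, and I would flag it.

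For (iii), the argument is that $\rho\in[0,1)$ corresponds to $d=1-\rho\in(0,1]$, on which $\ell$ is finite and continuous, so $\chi$ is finite and continuous there. For the canonical link, $\chi=\chi_0/d_t=\chi_0\|Q_t\|$ by (i). Under $\ell(0^+)=+\infty$, $\chi\to\infty$ as $\rho_t\uparrow1$. So the only singularity is at the boundary, which is the edge of Corollary~\ref{cor:edge}.

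For (iv), I would set $x_t=\log\ell(d_t)$, taking $\theta=1$ at the canonical point and letting $\theta$ free generalise it. Note that $x_t$ is a strictly increasing function of $\rho_t$, hence a strictly decreasing transform of $d_t$, because $\log$ is increasing and $\ell$ is decreasing. Substituting gives $\chi(t)=\chi_0(t)\exp(\theta x_t)$, which is the form of \citet{cox1972}. I would then write the partial likelihood contribution at each event time $t_j$:
\[
\frac{\chi_0(t_j)e^{\theta x_{i_j}(t_j)}}{\sum_{l\in R(t_j)}\chi_0(t_j)e^{\theta x_l(t_j)}},
\]
where $R(t_j)$ is the risk set. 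The common factor $\chi_0(t_j)$ cancels between numerator and denominator. The log partial likelihood has second derivative equal to minus a sum of within-risk-set variances of $x$. This is strictly negative provided $x$ is nondegenerate within at least one risk set, so the log partial likelihood is strictly concave and $\theta$ is identified. Consistency, asymptotic normality and the sandwich variance would be cited from \citet{andersengill1982} and \citet{linwei1989}, as the statement allows.

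I expect (iv) to be the main obstacle. The hard part is stating precisely what "identified" means, namely strict concavity of the partial likelihood, together with the nondegeneracy condition on $x_t$ within risk sets. The other parts are routine.
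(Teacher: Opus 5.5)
Your proposal is correct, and for (i)--(iii) it follows the paper's argument: the spectral theorem gives eigenvalues $(1-\beta\lambda_k(t))^{-1}$, maximised at $\lambda_1(t)$; (ii) is the chain rule; (iii) is the direct limit at $d_t=0$.

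The one genuine difference is in (iv). You prove the sufficiency direction. The second derivative of the log partial likelihood is minus a sum of weighted within-risk-set variances, so it is strictly negative once one risk set is nondegenerate, and the likelihood is strictly concave. That is exactly what the ``provided'' clause asserts. The paper's proof of (iv) records only the converse: if the covariate is common to every subject at risk, each factor reduces to $1/|R(t_k)|$ and the partial likelihood is flat in $\theta$. The paper gives the strict-concavity argument later, in Proposition~\ref{prop:hazardident}(ii), at $\theta=0$. Your route therefore covers the stated claim more directly. Adding the paper's one-line converse would give you the full equivalence. It would also make explicit that the systemic $d_t$ itself, being a single series, carries no within-risk-set information.

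Your caveat on (ii) is also right; the paper's proof simply writes ``as $\ell'<0$''. Your fallback ``except at isolated points'' is too narrow, though. For a strictly decreasing $C^1$ function the zero set of $\ell'$ need only be closed with empty interior, and it can be uncountable. The clean repair is to assume $\ell'<0$ outright.
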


\begin{proof}
(i) By Proposition~\ref{prop:pricing} $Q_t$ exists with eigenvalues $1/(1-\beta\lambda_k(t))$, each positive since
$1-\beta\lambda_k(t)\in[1-\rho_t,1]\subset(0,1]$, so $Q_t$ is positive definite and its norm equals its largest
eigenvalue $1/\min_k(1-\beta\lambda_k(t))$; as $\lambda\mapsto1-\beta\lambda$ is decreasing, the minimum is at
$\lambda_1(t)$, giving $1/d_t$. (ii) by the chain rule $\mathrm{d}\chi/\mathrm{d}\rho_t=\chi_0\,\ell'(d_t)\cdot(-1)
>0$, as $\ell'<0$. (iii) $\chi=\chi_0/d_t$ is finite on $(0,1]$ and blows up only at $d_t=0$. (iv) substituting
$\ell(d_t)=\exp(\theta x_t)$, the partial likelihood $\prod_k \exp(\theta x_{i_k}(t_k))/\sum_{j\in R(t_k)}
\exp(\theta x_j(t_k))$ depends on $\theta$ only through within-risk-set covariate comparisons, $\chi_0(t)$
cancelling; if $x_t$ is common to all subjects at risk the ratio is free of $\theta$, so identification requires
within-risk-set variation. Consistency and asymptotic normality follow from \citet{andersengill1982}, robust
variance from \citet{linwei1989}.
\end{proof}

The reading is structural and coincident, not causal and not predictive. The covariate is the certified edge,
whose scalar carrier $\rho_t$ is the strong object of Proposition~\ref{prop:gate}(i); the elasticity $\theta$ is a
fitted association under a maintained proportional intensity, not a functional of $\rho_t$, so it is itself a proxy
quantity, identified only up to the maintained model. The baseline $\chi_0(t)$ is a nuisance scale, the
counterpart of the deferred discount: just as $\beta$ fixes the level at which the edge binds, $\chi_0$ fixes the
absolute onset rate, while the edge-to-intensity elasticity is the pinned content. A single systemic series
cancels in every within-instant risk-set ratio by part (iv), so the elasticity on the systemic edge is identified
only through across-time variation, and a genuine within-risk-set partial likelihood in the sense of
\citet{cox1972,cox1975} requires the subject-specific participation in the leading mode, which the gate labels
proxy. The discrete-time and stochastic-covariate hazard precedents of \citet{prenticegloeckler1978},
\citet{shumway2001} and \citet{duffiesaitawang2007} are method references; their use is predictive, whereas the
reading here is coincident by construction, the covariate and the onset sharing the time index, and the
proportional form is testable, not assumed \citep{grambschtherneau1994}.

\subsection{Identification of the edge-hazard elasticity}\label{sec:hazardident}

Proposition~\ref{prop:hazard}(iv) states that the elasticity $\theta$ is identified only when the covariate
varies within risk sets. We now make that clause structural rather than generic: the within-risk-set variation of
the cross-sectional edge covariate is governed by the same participation, or effective-rank, object that governs
the pricing resolvent, so the operator that drives the theory also decides whether the hazard is identifiable. The
cross-sectional covariate is the leading-mode participation $u_i(t)=N\,e_{1,i}(t)^2$, with $e_1(t)$ the unit
leading eigenvector of the recovered carrier $B_t$, so $\sum_i e_{1,i}(t)^2=1$ and $\tfrac1N\sum_i u_i(t)=1$. Write
the leading-mode participation ratio
\[
P_1(t)=\Big(\sum_{i=1}^N e_{1,i}(t)^4\Big)^{-1}\in[1,N],
\]
which equals $1$ when the mode is carried by a single coordinate and $N$ when it is uniform across the
cross-section; $P_1(t)$ is the leading-eigenvector counterpart of the effective rank used elsewhere in the paper.
We maintain two named hypotheses. \textbf{(PH)} the proportional-intensity model of
Proposition~\ref{prop:hazard}(iv), $\chi_i(t)=\chi_0(t)\exp(\theta u_i(t))$. \textbf{(AG)} the recurrent
counting-process design of \citet{andersengill1982}, each subject $i$ at risk with covariate $u_i(t)$, re-entering
on recovery to a prior peak, the risk set at the $k$th event time being $R(t_k)$ of size $m_k$.

\begin{proposition}[identification budget of the edge-hazard]\label{prop:hazardident}
Under \textnormal{(PH)} and \textnormal{(AG)}, let $D$ be the number of events, let
$\bar u_k(\theta)=\big[\sum_{j\in R(t_k)}e^{\theta u_j}u_j\big]\big/\big[\sum_{j\in R(t_k)}e^{\theta u_j}\big]$ be
the risk-set weighted covariate mean, let
$V_k(\theta)=\big[\sum_{j\in R(t_k)}e^{\theta u_j}u_j^2\big]\big/\big[\sum_{j\in R(t_k)}e^{\theta u_j}\big]-
\bar u_k(\theta)^2$ be the risk-set weighted covariate variance, and let
$J(\theta)=\sum_{k=1}^{D}V_k(\theta)$ be the observed partial-likelihood information for $\theta$. Write the
cross-sectional dispersion of the covariate $S_t=\tfrac1N\sum_{i=1}^N(u_i(t)-1)^2$.
\begin{enumerate}
\item[(i)] the dispersion of the edge covariate is an affine functional of the leading-mode participation ratio,
\[
S_t=N\sum_{i=1}^N e_{1,i}(t)^4-1=\frac{N}{P_1(t)}-1 .
\]
\item[(ii)] the per-event information is bounded by the cross-sectional dispersion through the risk-set size,
$V_k(0)\le (N/m_k)\,S_{t_k}$, hence $J(0)\le \big(ND/m_{\min}\big)\max_t S_t=\big(ND/m_{\min}\big)\max_t
\big(N/P_1(t)-1\big)$, with $m_{\min}=\min_k m_k$; and $\theta$ is locally identified at $\theta=0$ if and only if
$J(0)>0$, that is, if and only if at least one risk set carries strictly positive within-set covariate variance.
\item[(iii)] if the leading mode is asymptotically delocalised, $P_1(t)/N\to1$ uniformly in $t$, then $S_t\to0$,
$J(0)\to0$, and $\theta$ is not identified in the limit; at the exactly uniform leading eigenvector
$e_{1,i}(t)\equiv N^{-1/2}$ one has $u_i(t)\equiv1$, $J(\theta)\equiv0$, and the identified set for $\theta$ is the
whole line.
\end{enumerate}
Parts (i)--(iii) are exact algebra given \textnormal{(PH)} and \textnormal{(AG)}; the information form of $J$ and
its role in consistency and asymptotic normality are those of \citet{andersengill1982}, with the robust variance of
\citet{linwei1989}.
\end{proposition}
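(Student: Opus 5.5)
The plan is to treat all three parts as finite algebra on the leading eigenvector $e_1(t)$ and on the Cox partial likelihood under \textnormal{(PH)} and \textnormal{(AG)}. I use only the normalisation $\sum_i e_{1,i}(t)^2=1$ and the definitions of $u_i$, $S_t$, $P_1$, $V_k$ and $J$. No asymptotics are needed, except that the consistency and asymptotic-normality role of $J$ is imported from \citet{andersengill1982}, as the statement says.

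For part (i) I expand the square directly. Since $u_i=N e_{1,i}^2$,
\[
S_t=\tfrac1N\sum_i\big(N^2e_{1,i}^4-2Ne_{1,i}^2+1\big)=N\sum_ie_{1,i}^4-2+1=N\sum_i e_{1,i}^4-1 .
\]
The definition of $P_1$ then gives $S_t=N/P_1(t)-1$. This step is routine.

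For part (ii) I first observe that at $\theta=0$ the weights $e^{\theta u_j}$ are all equal. Hence $V_k(0)$ is the plain empirical variance $\frac1{m_k}\sum_{j\in R(t_k)}(u_j-\bar u_k)^2$ of the covariate over the risk set. The key inequality is that a variance is minimised about its own mean. Replacing $\bar u_k$ by the global mean $1$ therefore gives
\[
V_k(0)\le\tfrac1{m_k}\sum_{j\in R(t_k)}(u_j-1)^2 .
\]
Extending the sum from $R(t_k)$ to all $N$ coordinates only adds nonnegative terms, so the right side is at most $\frac{N}{m_k}S_{t_k}$. Summing over the $D$ events, bounding $m_k\ge m_{\min}$ and $S_{t_k}\le\max_tS_t$, and substituting part (i) gives the bound on $J(0)$.

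For the identification clause I use that $J(\theta)$ is minus the second derivative of the log partial likelihood $\ell_p(\theta)=\sum_k\big[\theta u_{i_k}-\log\sum_{j\in R(t_k)}e^{\theta u_j}\big]$. Each $V_k(\theta)$ is a variance under strictly positive weights. Consequently $V_k(\theta)>0$ holds exactly when the covariate is non-constant on $R(t_k)$, and this condition does not depend on $\theta$.
\begin{itemize}
\item If $J(0)>0$, some risk set carries positive variance. Then $\ell_p$ is strictly concave and the information is nonsingular at $0$, which is local identification in the Rothenberg information sense.
\item If $J(0)=0$, every risk set is constant in $u$. Each factor of the partial likelihood is then $e^{\theta c}/(m_ke^{\theta c})=1/m_k$, so $\ell_p$ is flat and $\theta$ is not identified.
\end{itemize}

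For part (iii), part (i) turns $P_1/N\to1$ uniformly into $\sup_tS_t\to0$. The bound in part (ii) then forces $J(0)\to0$, provided $D$ and $N/m_{\min}$ stay bounded along the limit; I would state this proviso explicitly. In the exactly uniform case $u_i\equiv1$, so every $V_k(\theta)$ vanishes identically and the partial likelihood equals $\prod_k1/m_k$ for all $\theta$. The identified set is therefore all of $\mathbb{R}$.

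The main obstacle is the "if and only if" in part (ii). The difficulty is fixing what "locally identified" means, and I would adopt the nonsingular-information definition. The direction "$J(0)=0$ implies non-identification" needs the separate argument that zero variance at $\theta=0$ means zero variance for every $\theta$. That is what makes the flatness of $\ell_p$ global rather than merely first-order.
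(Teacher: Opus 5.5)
Your proposal is correct and follows the paper's proof essentially step for step. That means the direct expansion in (i); in (ii), the bound from the mean minimising squared deviations, then extending the sum to the full cross-section; $\ell''=-J$ with concavity for the identification clause; and flatness of the partial likelihood at the uniform eigenvector in (iii). You also tighten two points the paper's proof leaves implicit: a zero risk-set variance at $\theta=0$ stays zero for every $\theta$, so $J(0)=0$ makes the partial likelihood globally flat rather than merely degenerate at second order; and $J(0)\to0$ in (iii) needs $D$ and $N/m_{\min}$ to stay bounded.
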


\begin{proof}
(i) Since $u_i=N e_{1,i}^2$ and $\tfrac1N\sum_i u_i=1$, $S_t=\tfrac1N\sum_i u_i^2-1=\tfrac1N\sum_i N^2 e_{1,i}^4-1=
N\sum_i e_{1,i}^4-1=N/P_1(t)-1$ by the definition of $P_1$. (ii) The log partial likelihood
$\ell(\theta)=\sum_k\big[\theta u_{i_k}-\log\sum_{j\in R(t_k)}e^{\theta u_j}\big]$ has
$\ell''(\theta)=-\sum_k V_k(\theta)=-J(\theta)$, the standard counting-process information of
\citet{andersengill1982}; each $V_k(\theta)\ge0$ as a variance, so $\ell$ is concave and $J(\theta)\ge0$. At
$\theta=0$, $V_k(0)$ is the unweighted within-risk-set variance, and for any reference point, in particular the
cross-sectional mean $1$, $V_k(0)=\tfrac1{m_k}\sum_{j\in R(t_k)}(u_j-\bar u_k(0))^2\le
\tfrac1{m_k}\sum_{j\in R(t_k)}(u_j-1)^2\le\tfrac1{m_k}\sum_{i=1}^N(u_i-1)^2=(N/m_k)S_{t_k}$, the first inequality
because the mean minimises the sum of squared deviations and the second by extending the sum to the full
cross-section of non-negative terms. Summing over $k$ and bounding $m_k\ge m_{\min}$ gives the stated bound. Since
$\ell$ is concave with $\ell''=-J$, $\ell$ is strictly concave at $\theta=0$, hence has a locally unique maximiser
and $\theta$ is locally identified, if and only if $J(0)>0$; and $J(0)>0$ if and only if some $V_k(0)>0$, that is,
some risk set has positive within-set variance. (iii) By (i), $P_1(t)/N\to1$ gives $S_t=N/P_1(t)-1\to0$ uniformly,
so by (ii) $J(0)\le(ND/m_{\min})\max_t S_t\to0$; at $e_{1,i}\equiv N^{-1/2}$, $u_i\equiv1$, every risk-set variance
is zero, $J(\theta)\equiv0$, $\ell$ is constant in $\theta$, and the identified set is all of $\Real$.
\end{proof}

The reading of the gate is as follows. Part (i) is the structural content: the discriminating power of the edge
covariate is, exactly, the inverse participation ratio of the leading mode, so the object that decides whether the
hazard elasticity can be estimated is the same effective-rank functional of the recovered operator that the pricing
and fragility readings use. That functional is built from the leading eigenvector, so it inherits the random-matrix
overlap of Proposition~\ref{prop:gate}(ii) and the finite-sample band of Theorem~\ref{thm:daviskahan}; the
participation ratio $P_1$ and the dispersion $S_t$ are therefore \proxyid\ objects, of the same identification class
as the covariate. Part (ii) is exact given the maintained model and pins the identification budget: the
partial-likelihood information cannot exceed the cross-sectional dispersion of leading-mode participation, scaled by
the risk-set occupancy. Part (iii) is the limiting consequence: as the leading eigenvector delocalises toward
uniformity, every within-risk-set comparison flattens and $\theta$ ceases to be identified. The elasticity
$\theta$ remains \proxyid, as in Proposition~\ref{prop:hazard}; which asset carries the mode is \unid, the
transpose-invariance of the spectrum; and the only \strong\ object in the neighbourhood of the edge remains the
scalar $\rho_t=\beta\,\rad(B_t)$. Identifiability of the hazard read is thus governed by a \proxyid\ participation
object, which is why the entire edge-hazard reading is at most \proxyid.

\begin{remark}[the proportional form is testable]\label{rem:phtest}
Under \textnormal{(PH)} the scaled Schoenfeld residual for $\theta$ at each event has approximate expectation equal
to the elasticity prevailing at that time, so a regression of the scaled residuals on a function $g(t)$ of time,
with a test of zero slope, is a test of the constancy of $\theta$ and hence of the proportional form itself
\citep{grambschtherneau1994}. A rejection would indicate a time-varying edge elasticity, not a failure of the
spectral construction. The diagnostic inherits the same budget as Proposition~\ref{prop:hazardident}(ii): its power
is carried by the dispersion of the covariate across events, so on a cross-section with a near-uniform leading mode
the proportionality test is itself underpowered, and a non-rejection there is uninformative.
\end{remark}

\begin{remark}[recurrent and competing risk structure]\label{rem:competing}
The onset event is recurrent: an index that recovers to a prior peak is re-armed and can enter the drawdown state
again, which is why the design is the Andersen-Gill intensity with subject-clustered robust variance
\citep{andersengill1982,linwei1989}, rather than a single first-passage model. A gap-time recurrent formulation in
the manner of \citet{prenticewilliamspeterson1981} is an admissible alternative that conditions on the number of
prior onsets. If an absorbing exit, such as a delisting or a redenomination of the cross-section, competes with
drawdown onset, the cause-specific hazard of \citet{prenticewilliamspeterson1981} carries the same elasticity
interpretation under independent censoring, while the subdistribution hazard of \citet{finegray1999} targets the
cumulative incidence instead; the reading of $\theta$ as the edge elasticity is preserved on the cause-specific
scale, which is the one used here. In every case the identification budget of Proposition~\ref{prop:hazardident}
binds, since each formulation estimates $\theta$ from within-risk-set comparisons of the same covariate.
\end{remark}

\section{Empirical illustration}\label{sec:empirics}

We illustrate the spectral-radius geometry on a panel of $N=18$ daily global equity index return series over
21 December 2004 to 28 June 2024, $5025$ usable cross-sections after a trailing window. The recovered carrier is
the rolling sample correlation operator $\Phi_t$ on standardised returns over a window of $W=250$ trading days,
a positive semi-definite operator with $\operatorname{tr}\Phi_t=N$, so its spectral radius is the largest
eigenvalue $\lambda_1(\Phi_t)\in[1,N]$. We report $\lambda_1(\Phi_t)$, the normalised concentration
$\kappa_t=\lambda_1(\Phi_t)/N\in[1/N,1]$, the participation ratio
$N^{\mathrm{eff}}_t=(\sum_i\lambda_i)^2/\sum_i\lambda_i^2\in[1,N]$, and the admissible discount ceiling
$\beta_{\max}(t)=1/\lambda_1(\Phi_t)$ at which the cascade resolvent reaches its pole. The crisis label is
external: a drawdown of at least twenty per cent of the equal-weighted market index, the same label used in the
companion classification exercise. Differences in means between crisis and calm states are tested with a
moving-block bootstrap (block length $21$, $2000$ resamples, seed fixed); the computation is deterministic.

A contemporaneous correlation operator is a modelling choice and not the only conceivable dependence object, so
we state why it is the economically sufficient one for the geometry at hand. The choice is dictated by what the
pricing functional prices. The one-period operator $\beta B$ of Section~\ref{sec:pricing} propagates a common
innovation across the cross-section within the period, and the present value compounds that same contemporaneous
coupling geometrically, so the fragility edge $\beta\,\rad(B)=1$ is a statement about the strength of
contemporaneous comovement, the extent to which a shock is shared across the cross-section now, because that is
exactly what makes the discounted cascade diverge. The contemporaneous correlation operator delivers the object
the theory requires and no more: it is symmetric and positive semi-definite by construction, so it satisfies
Assumption~\ref{ass:sa} exactly, its spectrum is real and ordered, and its spectral radius is the largest
eigenvalue $\lambda_1$, the market mode in whose direction the cross-section moves together. A lead-lag, or
directed, operator would add information about the timing of transmission, but it is non-normal: its spectral
radius can detach from that of its symmetric part, its eigenframe is not orthogonal, and the strong certification
labels of Proposition~\ref{prop:gate} are then unavailable. What controls the resolvent pole, to first order, is the
size of the contemporaneous common component rather than the order in which assets respond, and it is precisely
this contemporaneous comovement that intensifies in crisis, as the collapsing participation ratio records. The directed half is not
discarded: on this panel its energy is small, so the symmetric carrier is a close approximation and the residual
is carried as the departure-from-normality bound of Section~\ref{sec:limits}. Contemporaneous correlation is, in
this precise sense, economically sufficient for the pricing and fragility geometry, while the directed extension
is a genuine and separate object left to future work.

Table~\ref{tab:contrast} reports the contrast. The spectral radius rises from a calm mean of $6.74$ to a crisis
mean of $8.03$ out of a maximum of $18$, a difference of $1.29$ with a bootstrap interval $[1.07,1.52]$ and a
bootstrap $p$ below $0.001$. The participation ratio collapses from $5.62$ to $3.86$ effective factors: in
crisis the eighteen indices behave like about four independent factors. The admissible discount ceiling tightens
from $0.152$ to $0.125$, a fall of about a sixth, with a sample minimum of $0.100$; the economy moves towards
the resolvent pole as it enters a crisis, which is the empirical content of the fragility edge.

\begin{table}[t]
\centering
\caption{Recovered spectral-radius geometry, calm versus crisis, eighteen global equity indices, 2004--2024.
Crisis is a drawdown of at least twenty per cent of the equal-weighted index. Intervals are moving-block
bootstrap $95\%$ intervals for the crisis-minus-calm difference; all three differences have bootstrap
$p<0.001$.}
\label{tab:contrast}
\setlength{\tabcolsep}{4pt}\footnotesize
\begin{tabular}{lcccl}
\toprule
quantity & calm mean & crisis mean & difference & 95\% interval \\
\midrule
spectral radius $\lambda_1(\Phi_t)$ & $6.74$ & $8.03$ & $+1.29$ & $[1.07,\,1.52]$ \\
concentration $\kappa_t=\lambda_1/N$ & $0.379$ & $0.469$ & $+0.089$ & $[0.077,\,0.102]$ \\
participation ratio $N^{\mathrm{eff}}_t$ & $5.62$ & $3.86$ & $-1.76$ & $[-1.96,\,-1.57]$ \\
discount ceiling $\beta_{\max}=1/\lambda_1$ & $0.152$ & $0.125$ & $-0.027$ & (min $0.100$) \\
\bottomrule
\end{tabular}
\end{table}

Table~\ref{tab:peaks} lists the six largest spectral-radius episodes after de-clustering within six months. They
span the March 2020 turbulence, which carries the maximum spectral radius $9.97$ with the cross-section reduced
to about three effective factors, a late-2020 episode, a 2016 episode, and the 2008--2009 crisis including the
trough at a drawdown of about one half. The drawdown column is candid about the limits of the marker: three of
the six episodes, late-2020, 2016 and early-2008, have drawdowns of only about three, ten and seventeen per cent
and so are not crises by the twenty per cent label. Leading-eigenvalue concentration therefore marks stress
episodes broadly and is not exclusive to the labelled crises, the same imperfect separation that holds the
coincident classifier of Table~\ref{tab:oos} below one; the spectral radius nonetheless rises at every labelled
crisis without any tuning to those dates.

\begin{table}[t]
\centering
\caption{The six largest de-clustered spectral-radius episodes. $\lambda_1$ is the leading eigenvalue of the
recovered correlation operator (maximum $18$), $\kappa=\lambda_1/N$, $N^{\mathrm{eff}}$ the participation ratio,
and drawdown the contemporaneous equal-weighted-index drawdown.}
\label{tab:peaks}
\begin{tabular}{lcccc}
\toprule
date & $\lambda_1$ & $\kappa$ & $N^{\mathrm{eff}}$ & drawdown \\
\midrule
2020-03-25 & $9.97$ & $0.552$ & $3.04$ & $-0.269$ \\
2020-12-01 & $9.69$ & $0.536$ & $3.20$ & $-0.030$ \\
2016-07-12 & $8.98$ & $0.497$ & $3.69$ & $-0.099$ \\
2009-09-09 & $8.84$ & $0.518$ & $3.33$ & $-0.293$ \\
2009-03-11 & $8.41$ & $0.493$ & $3.62$ & $-0.516$ \\
2008-01-21 & $8.33$ & $0.488$ & $3.61$ & $-0.168$ \\
\bottomrule
\end{tabular}
\end{table}

The contrast in Table~\ref{tab:contrast} is a coincident description, not a forecast, and we keep it inside that
scope. The out-of-sample standing of the same signature is established in a companion pre-registered exercise by
the author, summarised in Table~\ref{tab:oos}: a walk-forward classifier built on the symmetric concentration
signature alone, the leading-eigenvalue share and the participation ratio, against the external drawdown label,
earns an out-of-sample area under the curve of $0.886$ on this panel and $0.911$, $0.908$ and $0.700$ on three
further panels. The pre-registered decisive comparison adds the concentration signature to a trailing
realised-volatility benchmark and beats the benchmark alone by a difference in area of about $0.13$ to $0.17$,
positive on every panel with a DeLong and moving-block test below $0.001$; the in-sample to out-of-sample gap of
the signature is about $0.056$ on the main panel. On the primary panel the combined set scores below the
signature alone, $0.858$ against $0.886$, so adding the volatility benchmark to the concentration signature does
not help there; we report this as the pre-registration delivers it. The classifier is coincident; the one-step
forecastability ceiling for the aggregate return is respected, and no forecasting claim is made. This companion
exercise has its own pre-registration and deposited output, included with the replication materials; its figures
are not recomputed in the present package.

\begin{table}[t]
\centering
\caption{Companion pre-registered walk-forward out-of-sample classification of the external drawdown label, by
the author. All entries are out-of-sample areas under the curve. V is a trailing realised-volatility benchmark, S
the symmetric concentration signature alone (leading-eigenvalue share and participation ratio), and C the
combined feature set. The pre-registered decisive statistic is the combined-minus-benchmark difference
$\text{C}-\text{V}$, positive on every panel with a DeLong and moving-block test $p<0.001$. Coincident
classification, not a forecast.}
\label{tab:oos}
\setlength{\tabcolsep}{4pt}\footnotesize
\begin{tabular}{lccccc}
\toprule
panel & assets & benchmark (V) & signature (S) & combined (C) & $\text{C}-\text{V}$ \\
\midrule
primary global equity & $18$ & $0.691$ & $0.886$ & $0.858$ & $+0.166$ \\
sectors (49) & $49$ & $0.707$ & $0.911$ & $0.866$ & $+0.159$ \\
sectors (30) & $30$ & $0.714$ & $0.908$ & $0.861$ & $+0.146$ \\
developed (8) & $8$ & $0.581$ & $0.700$ & $0.708$ & $+0.126$ \\
\bottomrule
\end{tabular}
\end{table}

Two readings of Table~\ref{tab:contrast} and Table~\ref{tab:peaks} match the theory. First, the spectral radius
rising towards its maximum, with the participation ratio collapsing, is the recovered operator moving towards
the spectral edge of Corollary~\ref{cor:edge}, where the present-value resolvent norm diverges. Second, the
admissible discount ceiling $1/\lambda_1$ falling in crisis is the resolvent pole tightening: a smaller discount
suffices to push the discounted spectral radius to one. Both are coincident, descriptive, and certified only to
the extent the spectrum identifies them, which by Proposition~\ref{prop:gate} is the spectral radius itself, the
strong object, and not the identity of any single index.

\subsection{Cross-market replication}\label{sec:crossmarket}

The eighteen-index panel is small, and the question is whether the spectral-radius signature is an artefact of
that cross-section or a property of equity dependence more generally. We replicate the in-sample contrast on three
further panels drawn from a public daily portfolio library: the forty-eight industry portfolios, the thirty
industry portfolios, and the twenty-five size and book-to-market portfolios, each over a window matched to the
primary panel. The estimator is identical in every respect, the rolling sample correlation operator on
standardised returns over $W=250$ days, the same external twenty per cent drawdown label of the equal-weighted
panel index, and the same moving-block bootstrap at the same fixed seed; only the input cross-section changes.
The raw data are downloaded from the source library and parsed into return panels by the replication package, so
the exercise reproduces from public inputs.

Table~\ref{tab:crossmarket} reports the result. The fragility-edge signature replicates on every panel: the
spectral radius rises significantly from calm to crisis, the participation ratio collapses, and the admissible
discount ceiling tightens, each with a bootstrap $p$ below $0.001$. The portfolio panels are far more concentrated
than the index panel, as expected of cross-sections built from a single equity market: the leading eigenvalue
already carries from sixty-one to eighty-seven per cent of the trace in calm markets, against thirty-eight per
cent for the international indices, so the cross-sections behave like between one and three independent factors
even before a crisis and contract further within one. The direction and the significance of the movement, not its
magnitude, are what transfer; the magnitude scales with the baseline concentration of the panel. The signature is
therefore a property of equity dependence and not of the particular eighteen-index cross-section.

\begin{table}[t]
\centering
\caption{Cross-market replication of the spectral-radius contrast, calm versus crisis, on three portfolio panels
from a public daily library, recovered by the identical estimator and label. $\lambda_1$ is the leading
eigenvalue of the recovered correlation operator (maximum $N$), $N^{\mathrm{eff}}$ the participation ratio.
The $\lambda_1$ and $N^{\mathrm{eff}}$ columns report the calm
mean followed by the crisis mean. Intervals are moving-block bootstrap $95\%$ intervals for the crisis-minus-calm
difference in $\lambda_1$; all differences in $\lambda_1$ and $N^{\mathrm{eff}}$ have bootstrap $p<0.001$. The
primary panel is shown for comparison.}
\label{tab:crossmarket}
\setlength{\tabcolsep}{4pt}\footnotesize
\begin{tabular}{lccccc}
\toprule
panel & $N$ & $\lambda_1$ & $\Delta\lambda_1$ & 95\% interval & $N^{\mathrm{eff}}$ \\
\midrule
global equity indices & $18$ & $6.74\to8.03$ & $+1.29$ & $[1.07,\,1.52]$ & $5.62\to3.86$ \\
industry portfolios (48) & $48$ & $29.27\to35.94$ & $+6.67$ & $[5.79,\,7.43]$ & $2.83\to1.79$ \\
industry portfolios (30) & $30$ & $19.33\to23.66$ & $+4.33$ & $[3.77,\,4.83]$ & $2.52\to1.62$ \\
size and book-to-market (25) & $25$ & $21.73\to22.92$ & $+1.18$ & $[0.85,\,1.48]$ & $1.34\to1.20$ \\
\bottomrule
\end{tabular}
\end{table}

\subsection{Robustness to heavy tails: the distribution-free carrier}\label{sec:robustcarrier}

The recovered carrier is the sample correlation operator, and Section~\ref{sec:heavytail} shows that its
finite-sample certification band is not valid as stated under heavy tails, while a robust carrier restores it.
The data are markedly heavy-tailed: the maximum sample excess kurtosis across the eighteen standardised index
series is $374$, far outside the sub-Gaussian regime, which is the empirical reason the band needs the robust
replacement. We therefore re-run the calm-versus-crisis spectral-radius contrast with the distribution-free
Kendall carrier of Proposition~\ref{prop:kendall}, $\widetilde{B}_{jk}=\sin(\tfrac{\pi}{2}\widehat{\tau}_{jk})$,
on every panel, with the same trailing window, the same external drawdown label, and the same bootstrap and seed;
only the estimator changes. Under an elliptical law with finite second moments the Kendall carrier targets the
same operator as the sample correlation, so the two contrasts are directly comparable.

Table~\ref{tab:robustcarrier} reports the result. The fragility-edge signature survives the estimator swap on
every panel: the robust leading eigenvalue rises from calm to crisis on all four, with bootstrap $p$ below
$0.001$ throughout, tracking the sample-correlation benchmark in both direction and order of magnitude. On the
primary panel the distribution-free scatter estimator of \citet{tyler1987} gives a calm-to-crisis move of
$6.15\to7.31$, the same lift by a different robust route. The edge mass on the robust leading eigenvector is
nearly unchanged across the regime, $0.927$ in calm against $0.919$ in crisis on the primary panel, so the proxy
reading of the gate transfers as well. The signature is thus a property of the dependence, not an artefact of the
sample-correlation estimator under heavy tails.

\begin{table}[t]
\centering
\caption{Heavy-tail robustness. Leading eigenvalue of the recovered carrier, calm versus crisis, under the
sample correlation (Pearson) and the distribution-free Kendall carrier $\widetilde{B}_{jk}=
\sin(\tfrac{\pi}{2}\widehat{\tau}_{jk})$. The $95\%$ interval is the moving-block bootstrap interval for the
Kendall calm-to-crisis difference; the difference is significant at $p<0.001$ on every panel.}
\label{tab:robustcarrier}
\setlength{\tabcolsep}{3pt}\footnotesize
\begin{tabular}{lcccc}
\toprule
panel & $N$ & Pearson $\lambda_1$ & Kendall $\lambda_1$ & Kendall 95\% interval \\
\midrule
global equity indices & $18$ & $6.74\to8.03$ & $6.47\to7.64$ & $[1.00,\,1.34]$ \\
industry portfolios (48) & $48$ & $29.27\to35.94$ & $29.12\to34.58$ & $[4.22,\,6.53]$ \\
industry portfolios (30) & $30$ & $19.33\to23.66$ & $19.02\to22.57$ & $[2.73,\,4.26]$ \\
size and book-to-market (25) & $25$ & $21.73\to22.92$ & $21.47\to22.74$ & $[0.93,\,1.57]$ \\
\bottomrule
\end{tabular}
\end{table}

\subsection{Operationalising the dividend direction and the calibrated edge}\label{sec:operational}

The spectral-radius contrast above reports $\rad(B)$ rising in crisis but does not yet exhibit the discounted
quantity $\beta\,\rad(B)$ approaching one, nor does it make the dividend direction concrete. We address both,
on the same carrier $\Phi_t$.

First, the dividend direction. Taking the aggregate-claim direction $\widehat{D}=\mathbf{1}/\sqrt{N}$ of
Definition~\ref{def:D}, we measure the edge mass \eqref{eq:edgemass}, the squared overlap of $\widehat{D}$ with
the leading eigenvector $e_1(t)$ of $\Phi_t$. The overlap is large throughout, with an overall mean of $0.925$
and a maximum of $0.977$: the equal-weighted aggregate cash-flow claim is almost entirely the boundary mode, so
the bubble eigenspace, if reached, is the very direction in which the aggregate claim is held. The overlap is
essentially flat across regimes, $0.926$ in calm and $0.918$ in crisis, which says the aggregate claim sits on
the boundary mode at all times; what changes in crisis is the position of that mode relative to the pole, not the
overlap. This high overlap is close to mechanical: for an all-positive correlation carrier the leading
eigenvector is approximately the market mode, so its squared overlap with the equal-weighted direction is forced
near one by positivity, a Perron-Frobenius near-identity, and the reading confirms rather than discovers that the
aggregate claim sits on the leading mode. The mechanism does not drain the reading of content: mechanical under a
positive carrier though it is, it is economically interpretable, for the aggregate cash-flow claim is by
construction exposed to the systemic mode that the spectral radius measures, so the one scalar that governs the
fragility edge also governs the present value of the aggregate claim. The overlap is, in addition, strongly
identified here: with
a spike excess $\theta\approx\lambda_1-1$, with $\lambda_1\in[6.7,8.0]$, and $q=N/T=18/250\approx0.07$, the
random-matrix overlap band of Proposition~\ref{prop:gate} is about $0.99$, far above the detectability threshold,
so the proxy is tight.

Second, the calibrated edge. We fix a single transmission intensity $\beta^\ast$, calibrated transparently so
that the calm cross-section is exactly critical, $\beta^\ast=1/\overline{\lambda}_1^{\,\mathrm{calm}}=0.148$, and
report the discounted spectral radius $\rho_t=\beta^\ast\lambda_1(\Phi_t)$. By construction the calm mean of
$\rho$ is one; the content is the crisis elevation and the separation of the two regimes. In crisis $\rho$ rises
to a mean of $1.19$ and a maximum of $1.48$, past the resolvent pole, and the share of days with $\rho\ge1$ rises
from $0.448$ in calm to $0.966$ in crisis. The share of the aggregate claim carried by the supercritical modes,
those with $\beta^\ast\lambda_k\ge1$, rises from $0.424$ in calm to $0.887$ in crisis: in a crisis almost nine
tenths of the aggregate cash-flow claim sits in the part of the spectrum where the present-value resolvent
diverges. Table~\ref{tab:calibrated} collects these readings. We stress that $\beta^\ast$ is a calibrated
transmission and normalisation constant, not a time-preference rate, and is not identified by the exercise; the
calibration fixes the calm baseline and nothing more, consistent with the treatment of $\beta$ throughout.
Because $\beta^\ast$ is fixed to the calm mean, the crisis ratio $\rho_{\mathrm{crisis}}/\rho_{\mathrm{calm}}$
equals $\lambda_1^{\,\mathrm{crisis}}/\lambda_1^{\,\mathrm{calm}}$, and the supercritical-claim-mass contrast is
the edge mass times the share of days with $\rho\ge1$; these are re-expressions of the
Table~\ref{tab:contrast} spectral-radius contrast under a calibrated normalisation, not independent evidence. The
substantive content is the single, bootstrap-significant rise in $\lambda_1$, here read through the pole.

\begin{table}[t]
\centering
\caption{The calibrated fragility edge and the dividend-direction edge mass, eighteen global equity indices,
2004--2024. $\rho_t=\beta^\ast\lambda_1(\Phi_t)$ with $\beta^\ast=1/\overline{\lambda}_1^{\,\mathrm{calm}}$, so
the calm mean of $\rho$ is one by construction. Edge mass is the squared overlap of the aggregate-claim
direction with the leading eigenvector; supercritical claim mass is the share of the aggregate claim on modes
with $\beta^\ast\lambda_k\ge1$.}
\label{tab:calibrated}
\begin{tabular}{lccc}
\toprule
quantity & calm & crisis & overall \\
\midrule
discounted radius $\rho=\beta^\ast\lambda_1$ & $1.00$ & $1.19$ & max $1.48$ \\
share of days $\rho\ge1$ & $0.448$ & $0.966$ & \\
aggregate-claim edge mass $\inner{e_1}{\widehat{D}}^2$ & $0.926$ & $0.918$ & max $0.977$ \\
supercritical claim mass & $0.424$ & $0.887$ & max $0.977$ \\
\bottomrule
\end{tabular}
\end{table}

\subsection{The explosive-root bridge}\label{sec:bridge}

Proposition~\ref{prop:boundary} holds that the stationary estimator flags the approach to the edge from the
interior while the crossing is an explosive, $I(1)$, event read on the price level. We implement the price-level
side with the right-tailed recursive unit-root tests of \citet{psy2015}: the supremum augmented Dickey-Fuller
statistic, SADF, and its generalised, doubly recursive form, GSADF, on the monthly log level of the
equal-weighted index built from the same panel, $246$ months over 2004--2024. The minimum window is the
conventional $r_0=0.01+1.8/\sqrt{T}$, here $31$ months, the regression carries an intercept and no augmentation,
and critical values and the backward-SADF date-stamping sequence are obtained by Monte Carlo simulation of
Gaussian random walks of the same length, $2000$ replications at a fixed seed. Table~\ref{tab:sadf} reports the
outcome. The SADF statistic $1.483$ exceeds its five per cent critical value $1.392$, evidence of explosive
behaviour on the price level; the more conservative GSADF statistic $1.540$, the detector that underlies the
recursive date-stamping, does not exceed its five per cent critical value $2.124$. The evidence is therefore
marginal, but the split is itself interpretable: the SADF rejection alongside the GSADF non-rejection points to
single-episode explosivity, a one-time pre-crisis run-up, rather than recurrent explosivity repeated across the
sample, so we read the price level as showing single-bubble consistency, not a pattern of repeated bubbles, and
treat the dated episode as suggestive rather than established. The backward-SADF date-stamping, with the minimum episode length set to $\log T$ months, locates one
explosive episode running from August 2006 to December 2007, the pre-crisis global equity run-up.

\begin{table}[t]
\centering
\caption{Right-tailed recursive unit-root tests on the monthly log level of the equal-weighted global-equity
index, 2004--2024 ($246$ months). Critical values by Monte Carlo simulation of Gaussian random walks of equal
length, $2000$ replications. The date-stamping locates one explosive episode, August 2006 to December 2007.}
\label{tab:sadf}
\begin{tabular}{lccc}
\toprule
statistic & value & $95\%$ critical value & reject at $5\%$ \\
\midrule
SADF & $1.483$ & $1.392$ & yes \\
GSADF & $1.540$ & $2.124$ & no \\
\bottomrule
\end{tabular}
\end{table}

The price-level explosivity and the operator-edge concentration mark different phases of the same boundary. The
explosive run-up on the price level is dated to 2006--2007, while the operator-edge concentration peaks of
Table~\ref{tab:peaks} fall at the 2008--2009 crisis, when the recovered spectral radius reaches its local maxima
and the cross-section collapses to about three effective factors. The operator concentration therefore reads as a
coincident-to-lagging crisis marker, not a leading interior-approach flag preceding the price crossing, which is
consistent with the coincident discipline we maintain throughout, under which neither reading is a forecast. The
stationary operator cannot certify the crossing, by Proposition~\ref{prop:boundary}, and the explosive-root test
supplies the price-level reading the operator cannot. We claim only that the inflation phase shows on the price
level and the crisis phase shows in the operator spectrum, the two readings of one boundary in sequence, not a
month-for-month coincidence.

\subsection{The structural hazard read on the data}\label{sec:hazardemp}

We read Propositions~\ref{prop:hazard} and~\ref{prop:hazardident} on the data with discipline about what the event
count and the covariate dispersion permit. On the eighteen-index panel the equal-weighted index enters the twenty
per cent drawdown state three times over the sample, so the systemic series carries only three distinct onsets; by
Proposition~\ref{prop:hazard}(iv) a single systemic covariate cancels in every within-instant risk-set ratio, and
three events cannot support a trustworthy interval, so we do not report a systemic hazard ratio. The identifying
design is cross-sectional, in the sense of part (iv): each constituent is a subject, the event is that subject's
own first entry into a twenty per cent drawdown of its own level, recurrence is admitted on recovery to a prior
peak, and the time-varying covariate is the leading-mode participation $u_i(t)=N\,e_{1,i}(t)^2$, which sums to $N$
across the cross-section. The continuous-time partial likelihood of \citet{cox1972} in the counting-process form
of \citet{andersengill1982}, with Breslow ties and the subject-clustered robust sandwich variance of
\citet{linwei1989}, fits the scalar elasticity.

We run the same design on four cross-sections that differ in the heterogeneity of the leading mode and in the
event count, reported in Table~\ref{tab:hazardpanels}. On the international cross-section of eighteen global equity
indices the hazard ratio is $1.146$ per unit of leading-mode participation, with a robust ninety-five per cent
interval of $[0.618,2.128]$ over $54$ onsets. The three single-market Fama-French families give $0.962$ over
$213$ onsets, $[0.516,1.795]$ for the forty-eight industry portfolios; $1.241$ over $129$ onsets, $[0.511,3.017]$
for the thirty industry portfolios; and $0.811$ over $127$ onsets, $[0.122,5.391]$ for the twenty-five size and
book-to-market portfolios. Two panels carry a hazard ratio above one and two below, and every interval contains
one, so the elasticity is not robustly signed across cross-sections and more events do not pin the sign. We
therefore do not read a confirmed edge-hazard relationship from the cross-panel evidence.

Proposition~\ref{prop:hazardident} disciplines the reading rather than rescuing it. By part (i) the discriminating
power of the covariate is the dispersion of leading-mode participation, the inverse participation ratio of the
leading eigenvector, reported in the final column of Table~\ref{tab:hazardpanels}. That dispersion is small on the
size and book-to-market sort, where a single market factor loads almost uniformly so the leading eigenvector is
near-uniform; by parts (ii) and (iii) the partial-likelihood information is then bounded toward zero and the
elasticity is weakly identified, which is consistent with the very wide interval $[0.122,5.391]$ there. On the
remaining panels the dispersion is material and the event counts are large, so their imprecision and their mixed
signs are not attributable to a degenerate covariate; they reflect the absence of a robustly signed edge-hazard
elasticity in those cross-sections, not the identification budget binding. We do not claim that dispersion
explains the pattern of signs, which it does not: the thirty-industry panel returns the predicted sign at lower
dispersion than the international panel, and the forty-eight-industry panel returns the opposite sign at an
interval as tight as the international one.

The estimand is labelled \proxyid\ throughout: the covariate is built from the leading eigenvector and inherits
the random-matrix overlap of Proposition~\ref{prop:gate}(ii) and the finite-sample band of
Theorem~\ref{thm:daviskahan}, and its very identifiability is governed by the \proxyid\ participation ratio of
Proposition~\ref{prop:hazardident}. The reading is coincident by construction, the covariate and the onset sharing
the time index. The strengthening is structural and methodological, a derived link from the resolvent norm to an
onset intensity and a sharp account of when that link is estimable, not a manufactured significant result; the
cross-panel evidence is reported for what it is, directionally mixed and imprecise.

\begin{table}[t]
\centering
\caption{The cross-sectional edge-hazard across four return cross-sections. The covariate is leading-mode
participation $u_i(t)=N\,e_{1,i}(t)^2$; the hazard ratio is per unit of $u$, with subject-clustered robust
ninety-five per cent intervals \citep{linwei1989}. The final column is the covariate standard deviation
$\mathrm{sd}(u)$; its square is the dispersion $S_t=N/P_1(t)-1$ governed by the leading-mode participation ratio in
Proposition~\ref{prop:hazardident}(i). Every interval contains one; the dispersion, and with it the identifying
power of Proposition~\ref{prop:hazardident}(ii), is smallest on the homogeneous single-market sort.}
\label{tab:hazardpanels}
\setlength{\tabcolsep}{4pt}\footnotesize
\begin{tabular}{lcccc}
\toprule
Cross-section & HR per unit & 95\% CI (robust) & Events & $\mathrm{sd}(u)$ \\
\midrule
Global equity indices ($N=18$)       & $1.146$ & $[0.618,\,2.128]$ & $54$  & $0.460$ \\
FF48 industries                      & $0.962$ & $[0.516,\,1.795]$ & $213$ & $0.309$ \\
FF30 industries                      & $1.241$ & $[0.511,\,3.017]$ & $129$ & $0.268$ \\
FF25 size $\times$ book-to-market    & $0.811$ & $[0.122,\,5.391]$ & $127$ & $0.094$ \\
\bottomrule
\end{tabular}
\end{table}

\section{Contrast with a calibrated-operator model}\label{sec:dsge}

It is tempting to claim that a dynamic stochastic general-equilibrium model, or its heterogeneous-agent New
Keynesian (HANK) variant \citep{kaplanmollviolante2018}, cannot state the coupling between the pricing resolvent
and the fragility edge. That claim would be too strong. Production-network general equilibrium already carries a
spectral object: the vector of Domar weights is a dominant left-eigenvector of the input-output operator
\citep{acemoglu2012network}, and network models couple amplification to that structure. Overlapping-generations
bubble models state the growth-fragility comparison directly, through the comparison of the interest rate and the
growth rate.

The genuine difference is threefold and it is about identification, not expressive power. First, in the standard
models the operator is exogenous and calibrated, normal by construction, and its spectrum is a modelling input;
here the operator is recovered from data and its spectrum is an estimate with a stated random-matrix
detectability threshold. Second, the standard models impose the interior of the spectrum by a transversality
convention, choosing the interest rate above the growth rate so that the resolvent converges; here the boundary
case is detected rather than imposed, and the gate reports the approach to it. Third, the standard models do not
carry a per-reading certification of what is identified; here each spectral reading is labelled strong, proxy,
or unidentified. The construction is therefore a recovered operator with a certification gate over a set of
established spectral facts, a unification with identification, and not an impossibility theorem.

The reach is wider than the equity illustration. Nothing in Sections~\ref{sec:setup}--\ref{sec:boundary} uses
the asset class: any cross-section that admits a recovered self-adjoint dependence carrier carries the same
pricing resolvent, the same fragility edge $\beta\,\rad(B)=1$, the same bubble characterisation, and the same
certification gate. The framework therefore transfers unchanged to a housing and real-estate investment trust
cross-section, to credit and corporate-bond panels, and to sovereign-spread panels, in each case pairing the
operator recovered from stationary returns with the explosive-root reading on the matched price or yield level;
the dividend direction is replaced by the relevant cash-flow or coupon claim, and the gate decides what each
market identifies exactly as here. This portability, a single recovered operator read through one scalar across
markets, is the sense in which the construction is a general theoretico-empirical object rather than an equity
result.

\section{Scope and limitations}\label{sec:limits}

The results are deliberately narrow, and we state the boundaries.

First, the strong labels hold for normal carriers and, by Proposition~\ref{prop:normal}, for near-normal carriers
with the departure controlled: a genuinely directed, or non-normal, dependence operator is governed by the
spectral radius of the actual operator, not of its symmetric part, and the two can differ. The exclusion bound of
Proposition~\ref{prop:normal}(ii) always limits the radius from above by the size of the non-normal part, and when
the top mode is simple and well separated, the regime of a dominant market factor, it limits the difference
two-sidedly by that same size. On the panel studied the directed energy is small and the leading eigenvalue is
isolated, so the symmetric carrier is a close approximation and the residual enters as the certified bound of
Proposition~\ref{prop:normal}, rather than as an unbounded gap; the strong claim is made only up to that bound
whenever the operator is not verified normal.

Second, the existence of a bubble is a proxy, not a strong claim, because by Proposition~\ref{prop:gate} it
requires the projection of the dividend direction onto the boundary eigenspace, identified only up to the
random-matrix overlap. The strong content is the scalar fragility edge.

Third, the structural microfoundation of Section~\ref{sec:microfound} is conditional on a named model. The identity
between the recovered carrier and the cash-flow propagation operator holds under the cash-flow network of
Assumption~\ref{ass:network}, whose load-bearing restriction is the common-discount condition (M2), under which
unexpected returns are pure cash-flow news; if discount-rate news is present and correlated with cash-flow news,
the recovered carrier mixes the two channels and the structural reading degrades to the unconditional spectral one.
The recovered correlation carrier matches the structural operator's eigenframe only under the regular-exposure
condition, and in general shares only the dominant mode in the edge limit.

Fourth, the discount $\beta$ is calibrated in the baseline. Section~\ref{sec:endo} makes it endogenous and locates
the edge as a fixed point of $g(r)=r\,\beta(r)$, but that extension rests on a postulated discount schedule and an
inelastic-discount monotonicity condition, so it relocates the calibration rather than discharging it; the
separate question of identifying a discount from behaviour is left to companion work
\citep{calcottpetkov2024,bastianellovergopoulos2026}.

Fifth, the heavy-tail robustness of Section~\ref{sec:heavytail} is split by what it can claim. The effective-rank
band of Theorem~\ref{thm:heavytail} holds for the Pearson carrier under an $L^4$--$L^2$ norm equivalence, while
the distribution-free carrier of Proposition~\ref{prop:kendall} targets the same operator only under an elliptical
law with finite second moments; under genuinely non-elliptical heavy tails the robust carrier identifies a latent
generalised-correlation object, not the Pearson carrier, and the agreement in Table~\ref{tab:robustcarrier} is a
directional corroboration there, not a certified equality.

Sixth, the welfare cost of Section~\ref{sec:welfare} is structural in form, but its present-value channel carries
no strong between-regime certificate. Only the squared spectral-radius ratio is strong, and it is a spectral
object, not a welfare amplification; the welfare-cost level and its between-regime ratio carry the proxy edge mass
and the calibrated discount, so they are read as a discount-conditional magnitude, not a certified welfare number.

Seventh, the structural edge-hazard of Sections~\ref{sec:hazard} and~\ref{sec:hazardemp} is coincident and its
elasticity is a proxy. The systemic series carries too few onsets to support an interval; across four
cross-sections the elasticity is not robustly signed, every interval containing one; and its identifiability is
itself governed by the proxy leading-mode participation of Proposition~\ref{prop:hazardident}, so the section
contributes a derived link from the resolvent norm to an onset intensity and a sharp identification result, not a
confirmed hazard.

Eighth, the approach-detection result of Section~\ref{sec:approach} is asymmetric in what it guarantees. The
false-positive control of the interior detector is conservative and robust, but its power and consistency inherit
the band hypothesis, so on the dependent, heavy-tailed panel the detection-power statement is read as a scaling
rate rather than a certified probability until the moving-block band is itself turned into a theorem.

Ninth, a balanced-growth reading of the leading eigenvalue, $g=\log\lambda_1$, is not claimed on a financial
return operator, whose leading eigenvalue has no structural reason to equal an output growth rate; such a reading
would require a separately recovered production operator and is outside the scope of this paper.

Tenth, the empirical exercise is coincident throughout. The spectral-radius contrast and the companion classifier
are coincident markers; no leading-indicator or forecasting claim is made, consistent with the one-step
forecastability ceiling for the aggregate return.

\section{Conclusion}\label{sec:conc}

Lifting the classical bubble characterisation to a bounded self-adjoint operator turns three apparently separate
objects into readings of one scalar on one recovered operator: the convergence of the pricing resolvent, the
location of the bubble eigenspace, and the divergence of the present-value norm all sit at $\beta\,\rad(B)=1$.
The certification gate keeps the contribution disciplined: the scalar fragility edge is strong, the existence of
a bubble is a proxy that requires the dividend direction's overlap with the boundary eigenspace, and the
identity of a bubbly asset is unidentified. Because the operator is recovered from stationary returns while the
bubble lives on the nonstationary price level, the estimator flags the approach to the edge but cannot certify
the crossing, which is read on the price level. On eighteen global equity indices the recovered spectral radius
rises in every documented crisis, the cross-section collapsing to about four effective factors and the discount
ceiling tightening towards the pole; under a calibrated transmission intensity the discounted spectral radius
rises past one in crisis, the aggregate cash-flow claim sits almost entirely on the boundary mode, and a
right-tailed recursive unit-root test date-stamps the explosive price-level episode the stationary operator
cannot certify. The construction unifies a pricing object and a systemic-fragility object
on one recovered operator, certifies each reading for what the spectrum identifies, and leaves the directed half,
the discount, and the growth reading explicitly open. The contribution is not that bubbles can be observed
directly, but that the spectral geometry of their admissibility can be recovered, certified, and bounded before
the crossing.

\section*{Acknowledgements}
The author thanks colleagues at the Indian Institute of Technology Bhubaneswar for discussion. The author
acknowledges Google Cloud Research Credits, credit reference wilsonjessica-485604931, and Institute Seed Grant
SP128.

\bibliographystyle{elsarticle-harv}
\bibliography{references}

@article{hiranotoda2024,
  author  = {Hirano, Tomohiro and Toda, Alexis Akira},
  title   = {Bubble economics},
  journal = {Journal of Mathematical Economics},
  volume  = {111},
  year    = {2024},
  eid     = {102944},
  pages   = {102944},
  doi     = {10.1016/j.jmateco.2024.102944}
}

@article{hiranotoda2025,
  author  = {Hirano, Tomohiro and Toda, Alexis Akira},
  title   = {Bubble Necessity Theorem},
  journal = {Journal of Political Economy},
  volume  = {133},
  number  = {1},
  year    = {2025},
  pages   = {111--145},
  doi     = {10.1086/732528}
}

@article{montrucchio2004,
  author  = {Montrucchio, Luigi},
  title   = {Cass transversality condition and sequential asset bubbles},
  journal = {Economic Theory},
  volume  = {24},
  year    = {2004},
  pages   = {645--663},
  doi     = {10.1007/s00199-004-0502-8}
}

@article{santoswoodford1997,
  author  = {Santos, Manuel S. and Woodford, Michael},
  title   = {Rational asset pricing bubbles},
  journal = {Econometrica},
  volume  = {65},
  number  = {1},
  year    = {1997},
  pages   = {19--57}
}

@article{tirole1985,
  author  = {Tirole, Jean},
  title   = {Asset bubbles and overlapping generations},
  journal = {Econometrica},
  volume  = {53},
  number  = {6},
  year    = {1985},
  pages   = {1499--1528}
}

@article{weil1987,
  author  = {Weil, Philippe},
  title   = {Confidence and the real value of money in an overlapping generations economy},
  journal = {Quarterly Journal of Economics},
  volume  = {102},
  number  = {1},
  year    = {1987},
  pages   = {1--22}
}

@article{aots2015,
  author  = {Acemoglu, Daron and Ozdaglar, Asuman and Tahbaz-Salehi, Alireza},
  title   = {Systemic risk and stability in financial networks},
  journal = {American Economic Review},
  volume  = {105},
  number  = {2},
  year    = {2015},
  pages   = {564--608},
  doi     = {10.1257/aer.20130456}
}

@article{markose2021,
  author  = {Markose, Sheri and Giansante, Simone and Eterovic, Nicolas A. and Gatkowski, Mateusz},
  title   = {Early warning of systemic risk in global banking: eigen-pair {R} number for financial contagion and market price-based methods},
  journal = {Annals of Operations Research},
  volume  = {330},
  number  = {1--2},
  year    = {2023},
  pages   = {691--729},
  doi     = {10.1007/s10479-021-04120-1}
}

@article{billio2016,
  author  = {Billio, Monica and Casarin, Roberto and Costola, Michele and Pasqualini, Andrea},
  title   = {An entropy-based early warning indicator for systemic risk},
  journal = {Journal of International Financial Markets, Institutions and Money},
  volume  = {45},
  year    = {2016},
  pages   = {42--59}
}

@article{bbp2005,
  author  = {Baik, Jinho and Ben Arous, G{\'e}rard and P{\'e}ch{\'e}, Sandrine},
  title   = {Phase transition of the largest eigenvalue for nonnull complex sample covariance matrices},
  journal = {Annals of Probability},
  volume  = {33},
  number  = {5},
  year    = {2005},
  pages   = {1643--1697},
  doi     = {10.1214/009117905000000233}
}

@article{daviskahan1970,
  author  = {Davis, Chandler and Kahan, W. M.},
  title   = {The rotation of eigenvectors by a perturbation. {III}},
  journal = {SIAM Journal on Numerical Analysis},
  volume  = {7},
  number  = {1},
  year    = {1970},
  pages   = {1--46},
  doi     = {10.1137/0707001}
}

@article{yuwangsamworth2015,
  author  = {Yu, Yi and Wang, Tengyao and Samworth, Richard J.},
  title   = {A useful variant of the {Davis--Kahan} theorem for statisticians},
  journal = {Biometrika},
  volume  = {102},
  number  = {2},
  year    = {2015},
  pages   = {315--323},
  doi     = {10.1093/biomet/asv008}
}

@article{johnstone2001,
  author  = {Johnstone, Iain M.},
  title   = {On the distribution of the largest eigenvalue in principal components analysis},
  journal = {The Annals of Statistics},
  volume  = {29},
  number  = {2},
  year    = {2001},
  pages   = {295--327},
  doi     = {10.1214/aos/1009210544}
}

@article{baopanzhou2012,
  author  = {Bao, Zhigang and Pan, Guangming and Zhou, Wang},
  title   = {Tracy-Widom law for the extreme eigenvalues of sample correlation matrices},
  journal = {Electronic Journal of Probability},
  volume  = {17},
  year    = {2012},
  number  = {88},
  pages   = {1--32},
  doi     = {10.1214/EJP.v17-1962}
}

@article{benaychnadakuditi2011,
  author  = {Benaych-Georges, Florent and Nadakuditi, Raj Rao},
  title   = {The eigenvalues and eigenvectors of finite, low rank perturbations of large random matrices},
  journal = {Advances in Mathematics},
  volume  = {227},
  number  = {1},
  year    = {2011},
  pages   = {494--521},
  doi     = {10.1016/j.aim.2011.02.007}
}

@article{tracywidom1994,
  author  = {Tracy, Craig A. and Widom, Harold},
  title   = {Level-spacing distributions and the {Airy} kernel},
  journal = {Communications in Mathematical Physics},
  volume  = {159},
  number  = {1},
  year    = {1994},
  pages   = {151--174},
  doi     = {10.1007/BF02100489}
}

@article{tamer2010,
  author  = {Tamer, Elie},
  title   = {Partial identification in econometrics},
  journal = {Annual Review of Economics},
  volume  = {2},
  number  = {1},
  year    = {2010},
  pages   = {167--195},
  doi     = {10.1146/annurev.economics.050708.143401}
}

@article{bauerfike1960,
  author  = {Bauer, F. L. and Fike, C. T.},
  title   = {Norms and exclusion theorems},
  journal = {Numerische Mathematik},
  volume  = {2},
  number  = {1},
  year    = {1960},
  pages   = {137--141},
  doi     = {10.1007/BF01386217}
}

@article{koltchinskiilounici2017,
  author  = {Koltchinskii, Vladimir and Lounici, Karim},
  title   = {Concentration inequalities and moment bounds for sample covariance operators},
  journal = {Bernoulli},
  volume  = {23},
  number  = {1},
  year    = {2017},
  pages   = {110--133},
  doi     = {10.3150/15-BEJ730}
}

@article{baiksilverstein2006,
  author  = {Baik, Jinho and Silverstein, Jack W.},
  title   = {Eigenvalues of large sample covariance matrices of spiked population models},
  journal = {Journal of Multivariate Analysis},
  volume  = {97},
  number  = {6},
  year    = {2006},
  pages   = {1382--1408}
}

@book{hornjohnson2013,
  author    = {Horn, Roger A. and Johnson, Charles R.},
  title     = {Matrix Analysis},
  edition   = {Second},
  publisher = {Cambridge University Press},
  year      = {2013}
}

@book{vershynin2018,
  author    = {Vershynin, Roman},
  title     = {High-Dimensional Probability: An Introduction with Applications in Data Science},
  publisher = {Cambridge University Press},
  year      = {2018}
}

@article{psy2015,
  author  = {Phillips, Peter C. B. and Shi, Shuping and Yu, Jun},
  title   = {Testing for multiple bubbles: Historical episodes of exuberance and collapse in the {S\&P} 500},
  journal = {International Economic Review},
  volume  = {56},
  number  = {4},
  year    = {2015},
  pages   = {1043--1078},
  doi     = {10.1111/iere.12132}
}

@book{reedsimon1980,
  author    = {Reed, Michael and Simon, Barry},
  title     = {Methods of Modern Mathematical Physics. {I}: Functional Analysis},
  edition   = {Revised and Enlarged},
  publisher = {Academic Press},
  year      = {1980}
}

@article{kaplanmollviolante2018,
  author  = {Kaplan, Greg and Moll, Benjamin and Violante, Giovanni L.},
  title   = {Monetary Policy According to {HANK}},
  journal = {American Economic Review},
  year    = {2018},
  volume  = {108},
  number  = {3},
  pages   = {697--743},
  doi     = {10.1257/aer.20160042}
}

@article{acemoglu2012network,
  author  = {Acemoglu, Daron and Carvalho, Vasco M. and Ozdaglar, Asuman and Tahbaz-Salehi, Alireza},
  title   = {The Network Origins of Aggregate Fluctuations},
  journal = {Econometrica},
  year    = {2012},
  volume  = {80},
  number  = {5},
  pages   = {1977--2016},
  doi     = {10.3982/ECTA9623}
}

@article{calcottpetkov2024,
  author  = {Calcott, Paul and Petkov, Vladimir},
  title   = {How innocuous is it to approximate globally decreasing impatience with quasi-hyperbolic discounting?},
  journal = {Journal of Mathematical Economics},
  volume  = {111},
  year    = {2024},
  eid     = {102958},
  pages   = {102958},
  doi     = {10.1016/j.jmateco.2024.102958}
}

@article{bastianellovergopoulos2026,
  author  = {Bastianello, Lorenzo and Vergopoulos, Vassili},
  title   = {Discounted Subjective Expected Utility in continuous time},
  journal = {Journal of Mathematical Economics},
  volume  = {125},
  year    = {2026},
  eid     = {103264},
  pages   = {103264},
  doi     = {10.1016/j.jmateco.2026.103264}
}

@article{mendelsonzhivotovskiy2020,
  author  = {Mendelson, Shahar and Zhivotovskiy, Nikita},
  title   = {Robust covariance estimation under {$L_4$--$L_2$} norm equivalence},
  journal = {The Annals of Statistics},
  year    = {2020},
  volume  = {48},
  number  = {3},
  pages   = {1648--1664},
  doi     = {10.1214/19-AOS1862}
}

@article{minsker2018,
  author  = {Minsker, Stanislav},
  title   = {Sub-{G}aussian estimators of the mean of a random matrix with heavy-tailed entries},
  journal = {The Annals of Statistics},
  year    = {2018},
  volume  = {46},
  number  = {6A},
  pages   = {2871--2903},
  doi     = {10.1214/17-AOS1642}
}

@article{catoni2012,
  author  = {Catoni, Olivier},
  title   = {Challenging the empirical mean and empirical variance: {A} deviation study},
  journal = {Annales de l'Institut Henri Poincar\'e, Probabilit\'es et Statistiques},
  year    = {2012},
  volume  = {48},
  number  = {4},
  pages   = {1148--1185},
  doi     = {10.1214/11-AIHP454}
}

@article{keminskerrensunzhou2019,
  author  = {Ke, Yuan and Minsker, Stanislav and Ren, Zhao and Sun, Qiang and Zhou, Wen-Xin},
  title   = {User-Friendly Covariance Estimation for Heavy-Tailed Distributions},
  journal = {Statistical Science},
  year    = {2019},
  volume  = {34},
  number  = {3},
  pages   = {454--471},
  doi     = {10.1214/19-STS711}
}

@article{hanliu2017,
  author  = {Han, Fang and Liu, Han},
  title   = {Statistical analysis of latent generalized correlation matrix estimation in transelliptical distribution},
  journal = {Bernoulli},
  year    = {2017},
  volume  = {23},
  number  = {1},
  pages   = {23--57},
  doi     = {10.3150/15-BEJ702}
}

@article{wegkampzhao2016,
  author  = {Wegkamp, Marten and Zhao, Yue},
  title   = {Adaptive estimation of the copula correlation matrix for semiparametric elliptical copulas},
  journal = {Bernoulli},
  year    = {2016},
  volume  = {22},
  number  = {2},
  pages   = {1184--1226},
  doi     = {10.3150/14-BEJ690}
}

@incollection{lindskogmcneilschmock2003,
  author    = {Lindskog, Filip and McNeil, Alexander and Schmock, Uwe},
  title     = {Kendall's Tau for Elliptical Distributions},
  booktitle = {Credit Risk: Measurement, Evaluation and Management},
  editor    = {Bol, Georg and others},
  series    = {Contributions to Economics},
  publisher = {Physica-Verlag Heidelberg},
  year      = {2003},
  pages     = {149--156}
}

@article{tyler1987,
  author  = {Tyler, David E.},
  title   = {A Distribution-Free {$M$}-Estimator of Multivariate Scatter},
  journal = {The Annals of Statistics},
  year    = {1987},
  volume  = {15},
  number  = {1},
  pages   = {234--251},
  doi     = {10.1214/aos/1176350263}
}

@article{cox1972,
  author  = {Cox, D. R.},
  title   = {Regression Models and Life-Tables},
  journal = {Journal of the Royal Statistical Society, Series B (Methodological)},
  year    = {1972},
  volume  = {34},
  number  = {2},
  pages   = {187--202},
  doi     = {10.1111/j.2517-6161.1972.tb00899.x}
}

@article{cox1975,
  author  = {Cox, D. R.},
  title   = {Partial Likelihood},
  journal = {Biometrika},
  year    = {1975},
  volume  = {62},
  number  = {2},
  pages   = {269--276},
  doi     = {10.1093/biomet/62.2.269}
}

@article{andersengill1982,
  author  = {Andersen, P. K. and Gill, R. D.},
  title   = {Cox's Regression Model for Counting Processes: A Large Sample Study},
  journal = {The Annals of Statistics},
  year    = {1982},
  volume  = {10},
  number  = {4},
  pages   = {1100--1120},
  doi     = {10.1214/aos/1176345976}
}

@article{prenticegloeckler1978,
  author  = {Prentice, R. L. and Gloeckler, L. A.},
  title   = {Regression Analysis of Grouped Survival Data with Application to Breast Cancer Data},
  journal = {Biometrics},
  year    = {1978},
  volume  = {34},
  number  = {1},
  pages   = {57--67},
  doi     = {10.2307/2529588}
}

@article{linwei1989,
  author  = {Lin, D. Y. and Wei, L. J.},
  title   = {The Robust Inference for the Cox Proportional Hazards Model},
  journal = {Journal of the American Statistical Association},
  year    = {1989},
  volume  = {84},
  number  = {408},
  pages   = {1074--1078},
  doi     = {10.1080/01621459.1989.10478874}
}

@article{grambschtherneau1994,
  author  = {Grambsch, P. M. and Therneau, T. M.},
  title   = {Proportional Hazards Tests and Diagnostics Based on Weighted Residuals},
  journal = {Biometrika},
  year    = {1994},
  volume  = {81},
  number  = {3},
  pages   = {515--526},
  doi     = {10.1093/biomet/81.3.515}
}

@article{lando1998,
  author  = {Lando, David},
  title   = {On Cox Processes and Credit Risky Securities},
  journal = {Review of Derivatives Research},
  year    = {1998},
  volume  = {2},
  number  = {2},
  pages   = {99--120},
  doi     = {10.1007/BF01531332}
}

@article{duffiesaitawang2007,
  author  = {Duffie, Darrell and Saita, Leandro and Wang, Ke},
  title   = {Multi-Period Corporate Default Prediction with Stochastic Covariates},
  journal = {Journal of Financial Economics},
  year    = {2007},
  volume  = {83},
  number  = {3},
  pages   = {635--665},
  doi     = {10.1016/j.jfineco.2005.10.011}
}

@article{shumway2001,
  author  = {Shumway, Tyler},
  title   = {Forecasting Bankruptcy More Accurately: A Simple Hazard Model},
  journal = {The Journal of Business},
  year    = {2001},
  volume  = {74},
  number  = {1},
  pages   = {101--124},
  doi     = {10.1086/209665}
}

@article{hansenjagannathan1991,
  author  = {Hansen, Lars Peter and Jagannathan, Ravi},
  title   = {Implications of Security Market Data for Models of Dynamic Economies},
  journal = {Journal of Political Economy},
  year    = {1991},
  volume  = {99},
  number  = {2},
  pages   = {225--262},
  doi     = {10.1086/261749}
}

@article{campbellshiller1988,
  author  = {Campbell, John Y. and Shiller, Robert J.},
  title   = {The Dividend-Price Ratio and Expectations of Future Dividends and Discount Factors},
  journal = {The Review of Financial Studies},
  year    = {1988},
  volume  = {1},
  number  = {3},
  pages   = {195--228},
  doi     = {10.1093/rfs/1.3.195}
}

@article{hekrishnamurthy2013,
  author  = {He, Zhiguo and Krishnamurthy, Arvind},
  title   = {Intermediary Asset Pricing},
  journal = {American Economic Review},
  year    = {2013},
  volume  = {103},
  number  = {2},
  pages   = {732--770},
  doi     = {10.1257/aer.103.2.732}
}

@article{brunnermeiersannikov2014,
  author  = {Brunnermeier, Markus K. and Sannikov, Yuliy},
  title   = {A Macroeconomic Model with a Financial Sector},
  journal = {American Economic Review},
  year    = {2014},
  volume  = {104},
  number  = {2},
  pages   = {379--421},
  doi     = {10.1257/aer.104.2.379}
}

@book{kato1995,
  author    = {Kato, Tosio},
  title     = {Perturbation Theory for Linear Operators},
  series    = {Classics in Mathematics},
  publisher = {Springer-Verlag Berlin Heidelberg},
  year      = {1995},
  edition   = {Reprint of the 1980 edition},
  isbn      = {9783642662829},
  doi       = {10.1007/978-3-642-66282-9}
}

@book{lucas1987,
  author    = {Lucas, Jr., Robert E.},
  title     = {Models of Business Cycles},
  series    = {Yrj{\"o} Jahnsson Lectures},
  publisher = {Basil Blackwell},
  address   = {Oxford},
  year      = {1987},
  isbn      = {9780631147916}
}

@article{barlevy2004,
  author  = {Barlevy, Gadi},
  title   = {The Cost of Business Cycles Under Endogenous Growth},
  journal = {American Economic Review},
  year    = {2004},
  volume  = {94},
  number  = {4},
  pages   = {964--990},
  doi     = {10.1257/0002828042002615}
}

@article{alvarezjermann2004,
  author  = {Alvarez, Fernando and Jermann, Urban J.},
  title   = {Using Asset Prices to Measure the Cost of Business Cycles},
  journal = {Journal of Political Economy},
  year    = {2004},
  volume  = {112},
  number  = {6},
  pages   = {1223--1256},
  doi     = {10.1086/424738}
}

@article{campbell1991,
  author  = {Campbell, John Y.},
  title   = {A Variance Decomposition for Stock Returns},
  journal = {The Economic Journal},
  year    = {1991},
  volume  = {101},
  number  = {405},
  pages   = {157--179}
}

@article{gabaix2011,
  author  = {Gabaix, Xavier},
  title   = {The Granular Origins of Aggregate Fluctuations},
  journal = {Econometrica},
  year    = {2011},
  volume  = {79},
  number  = {3},
  pages   = {733--772}
}

@article{longplosser1983,
  author  = {Long, Jr., John B. and Plosser, Charles I.},
  title   = {Real Business Cycles},
  journal = {Journal of Political Economy},
  year    = {1983},
  volume  = {91},
  number  = {1},
  pages   = {39--69}
}

@article{prenticewilliamspeterson1981,
  author  = {Prentice, R. L. and Williams, B. J. and Peterson, A. V.},
  title   = {On the Regression Analysis of Multivariate Failure Time Data},
  journal = {Biometrika},
  year    = {1981},
  volume  = {68},
  number  = {2},
  pages   = {373--379}
}

@article{finegray1999,
  author  = {Fine, Jason P. and Gray, Robert J.},
  title   = {A Proportional Hazards Model for the Subdistribution of a Competing Risk},
  journal = {Journal of the American Statistical Association},
  year    = {1999},
  volume  = {94},
  number  = {446},
  pages   = {496--509}
}

@article{tallarini2000,
  author  = {Tallarini, Thomas D., Jr.},
  title   = {Risk-Sensitive Real Business Cycles},
  journal = {Journal of Monetary Economics},
  year    = {2000},
  volume  = {45},
  number  = {3},
  pages   = {507--532}
}

@article{epsteinzin1989,
  author  = {Epstein, Larry G. and Zin, Stanley E.},
  title   = {Substitution, Risk Aversion, and the Temporal Behavior of Consumption and Asset Returns: A Theoretical Framework},
  journal = {Econometrica},
  year    = {1989},
  volume  = {57},
  number  = {4},
  pages   = {937--969}
}

@article{bansalyaron2004,
  author  = {Bansal, Ravi and Yaron, Amir},
  title   = {Risks for the Long Run: A Potential Resolution of Asset Pricing Puzzles},
  journal = {The Journal of Finance},
  year    = {2004},
  volume  = {59},
  number  = {4},
  pages   = {1481--1509}
}

\end{document}